\newcommand{\com}[1]{\textcolor{red}{[\hl{#1}]}}
\renewcommand{\com}[1]{}
\newtheorem{theorem}{Theorem}[section]
\newtheorem{proposition}[theorem]{Proposition}
\theoremstyle{definition}
\newtheorem{definition}[theorem]{Definition}
\newtheorem{example}[theorem]{Example}
\newtheorem{lemma}[theorem]{Lemma}
\theoremstyle{remark}
\newtheorem*{remark}{Remark}
\newcommand{\setR}{\mathbb{R}}
\newcommand{\setN}{\mathbb{N}}
\newcommand{\setS}{\mathbb{S}}
\newcommand{\setP}{\mathbb{P}}
\newcommand{\argdot}{{\ \cdot \ }}
\DeclareMathOperator{\ad}{ad}
\DeclareMathOperator{\Ad}{Ad}
\DeclareMathOperator{\GL}{GL}
\DeclareMathOperator{\Diff}{Diff}
\newcommand{\Cinf}{\mathcal{C}^\infty}
\DeclareMathOperator{\Sym}{Sym}
\newcommand{\Lie}{\mathcal L}
\newcommand{\isom}{{\xrightarrow{\sim}}}
\DeclareMathOperator{\SO}{SO}
\renewcommand{\so}{\operatorname{\mathfrak{so}}}
\newcommand{\m}{\mathfrak m}
\newcommand{\Glie}{\mathfrak g}
\DeclareMathOperator{\Riem}{Riem}
\newcommand{\sprod}[2]{\left\langle #1 \,,\, #2 \right\rangle}
\renewcommand{\d}{{\operatorname d}} 
\newcommand{\dd}{\partial}
\newcommand{\der}[2]{\frac{\partial#2}{\partial#1}}
\newcommand{\U}{\mathcal U}
\newcommand{\lp}{\left(}
\newcommand{\rp}{\right)}
\newcommand{\suchthat}{\,/\,}
\newcommand{\lmtrois}[1]{\lambda^{\m(3)}}
\DeclareMathOperator{\Gal}{Gal}
\DeclareMathOperator{\gal}{\mathfrak{gal}}
\DeclareMathOperator{\Zrot}{Z_\text{rot}}
\DeclareMathOperator{\Zint}{Z_\text{int}}
\newcommand{\Zkin}{\Zint}
\DeclareMathOperator{\zrot}{z_\text{rot}}
\DeclareMathOperator{\zint}{z_\text{int}}
\newcommand{\zkin}{\zint}
\DeclareMathOperator{\Erot}{E_\text{rot}}
\DeclareMathOperator{\Uint}{U_\text{int}}
\newcommand{\tomega}{\tilde{\omega}}
\DeclareMathOperator{\EV}{\mathbb{E}}
\DeclareMathOperator{\grad}{grad}
\DeclareMathOperator{\SymOp}{\mathfrak{S}}
\newcommand{\Ebo}{\EV_{\beta, \omega}}
\newcommand{\dottimes}{\cdot}
\newcommand{\hf}{\hat{f}}
\newcommand{\hi}{\hat{\iota}}
\newcommand{\tobo}{\xrightarrow[\beta\omega^2 \to \infty]{}}
\newcommand{\simbo}{\underset{\beta\omega^2 \to \infty}{\sim}}
\newcommand{\eqbo}{\underset{\beta\omega^2 \to \infty}{=}}
\renewcommand{\Lie}{\operatorname{Lie}}
\newcommand{\Gibbs}{\Gamma}
\newcommand{\Rpp}{\setR_{>0}}
\newcommand{\cBR}{\overline{B}(0,R)}
\newcommand{\wprod}[2]{\left\langle #1 \, \wedge \, #2 \right\rangle}
\newcommand{\tprod}[2]{\left\langle #1 \, \otimes \, #2 \right\rangle}
\newcommand{\dprod}[2]{\left\langle #1 \, \dottimes \, #2 \right\rangle}
\title{The Hessian geometry of the ideal gas in rotation}
\author{Jérémie Pierard de Maujouy\\\textit{Université Catholique de Louvain, Belgium}}
\date{\today}
\begin{document}

\maketitle

\begin{abstract}
	We study the Hessian geometry associated with an ideal gas in a spherical centrifuge. According to Souriau, a spherically confined ideal gas admits states of thermal and rotational equilibrium. These states, called Gibbs states, form an exponential family of probability measures with an action of the Euclidean rotation group. We investigate its Hessian (Fisher-Rao) geometry and show that in the high angular velocity limit, the geometry can be compared to the Hessian geometry of a spherical rigid body, which we show to be isometric to a hyperbolic space.
\end{abstract}

\section{Introduction}

Symplectic and Poisson $G$-manifolds are largely accepted as the proper geometrical framework for modelling mechanical systems. In statistical mechanics, a large collection of identical bodies is modelled by a probability distribution on the space of states of the bodies. The probability distributions corresponding to thermal equilibrium are described by a particularly low (finite) number of degrees of freedom. In~\cite{SSDEng}, Souriau describes the construction of a manifold of probability distributions with maximal entropy on a Hamiltonian $G$-manifold. This manifold is called the Gibbs set, and its elements, called Gibbs measures, are meant to model equilibrium statistical states. These probabilities are indexed by an infinitesimal symmetry $\xi \in \Glie$ which is called a \enquote{generalised temperature}: when $G$ is the group of time translations, $\xi$ can be identified%
\footnote{We are assuming a unit system in which the Boltzmann constant $k_B$ is equal to $1$, identifying temperature with energy.}
with the inverse temperature $\beta = \frac{1}{T}$.%

As manifolds of probability distributions, Gibbs sets of Hamiltonian $G$-manifolds have the structure of statistical manifolds. More precisely, they form exponential families (as defined in~\cite{AJLSInformationGeometry,AmaNagaInformationGeometry}). As such, they are naturally equipped with the Fisher-Rao metric and with two dual flat affine connections for which the metric can be expressed as a Hessian.

Moreover, the Gibbs set has a natural Poisson structure and a Hamiltonian action of the Lie group $G$. The Fisher-Rao metric and the flat connections are invariant under this group action. 
In summary, Gibbs sets have a very rich $G$-equivariant geometry. There are, however, very few known examples of Gibbs sets for which this geometry has been studied (see~\cite{NilpOrb}).

In this paper, we study the Gibbs set associated with the symplectic manifold modelling free point-like massive particles confined in a ball. The symmetry group $G$ is the direct product $\setR\times \SO_3$ of the group of time translations and the rotation group. This is a direct adaptation of the ideal gas in a centrifuge described by Souriau~\cite{SSDEng} (see also~\cite{MarleSymplecticPoissonSouriau}). The gas in a cylindrical centrifuge, considered by Souriau, is the statistical mechanical system corresponding to free point-like massive particles confined in a cylinder, with as symmetry group the direct product $\setR\times\SO_2$ of the group of time translations and a rotation group around the cylindre axis. In the spherical case, the non-commutativeness of the symmetry group allows for a non-trivial Poisson structure on the Gibbs set, which is part of the structure we want to identify.

The central interest of this paper is, however, the Fisher-Rao metric, which is constructed as a Hessian. We compute it and express it in various coordinate systems. Its expression is difficult but it is possible to identify its asymptotic behaviour in a very specific limit, which can be understood as high angular velocity. In fact, we will proceed by identifying the limit of the Gibbs measures themselves, thereby giving a \enquote{partial compactification} of the Gibbs set. Further calculations will allow us to prove that in this limit, the metric and its derivatives up to the Riemannian curvature tensor behave as though the rotating gas is a rigid body with spherical symmetry. We prove that the Hessian structure of the suitable model for the rigid body is isometric to the hyperbolic space of dimension $4$. This leads us to conclude that the Gibbs set for the rotating ideal gas is \enquote{asymptotically hyperbolic} in a specific sense. Several aspects of our study, such as the fibred structure of the Gibbs set or the asymptotic behaviour of Gibbs measures, are suggestive of structures to look for in the future study of the geometry of more general Gibbs sets.

The paper is organised as follows. We start by introducing the general definitions and construction of Hessian geometry we will be using in Section~\ref{secno:HessGeoThermo}. It is illustrated with the relevant example of Ruppeiner's Riemannian geometry of thermodynamic systems, and we introduce Souriau's Gibbs sets and define the Hessian geometry we will be interested in. We also describe the Poisson structure of the Gibbs sets.
Section~\ref{secno:rigidbody} deals with the Hessian geometry of a rigid body, first assuming spherical symmetry, then in generality.
In this case, the space of equilibrium states is not constructed as a Gibbs set; therefore, we take great care to explain the choice of the flat connection used to construct the Hessian metric. The Hessian metric is proved to be isometric to the hyperbolic space of dimension $4$, and the Hessian geometry turns out to be dual to a well known model of the hyperbolic space. We also identify the Poisson structure and discuss the action of the rotation group.
We finally study the ideal gas in rotation in Section~\ref{secno:PerfGas}. The spherical confinement turns out to be incompatible with the group action: we work with a Hamiltonian $(\Lie(\setR)\times \so_3)$-manifold and identify the Gibbs set. We proceed to compute the Hessian metric. In order to obtain the asymptotic behaviour, we need to compute the differentials of the metric up to order $2$. 
We prove in Section~\ref{secno:CurvExp} a general formula for the iterated covariant differentials of the metric to any order, which we then apply to the case at hand. We conclude the paper by giving the asymptotic for the Riemannian curvature tensor and confirming that it is comparable to that of the rigid body, with however a contribution of the positional (in opposition to kinetic) degrees of freedom to the heat capacity of the body.

\tableofcontents
\subsection{Notations and conventions}\label{secno:notations}

Let us introduce here a few notations and conventions we will be using throughout the article.
We will be using the notation $\Rpp$ for the set of strictly positive real numbers (so as to avoid the notation $\setR^*_+$ which would conflict with the notation for dual vector spaces).
We will be freely identifying densities on a $n$-manifold, namely sections of the line bundle $|\Lambda^n T^*M|$, with their associated measure on $M$ (see for example \cite{LeeSmoothManifolds, Nicolaescu}).

Let $E$ be a vector space.
We will make use of the following symmetrising operators:
\begin{definition}\label{defno:SymOp}
	We write 
	\[
		\SymOp : \bigoplus_i E^{\otimes i} \to \bigoplus_i E^{\otimes i}
	\]
	for the \emph{unnormalised} symmetrising operator. If we write by $\sigma \cdot a$ the action of a permutation $\sigma\in \mathcal{S}_i$ on an element $a\in E^{\otimes i}$ then $\SymOp$ can be expressed as follows:
	\[
		\SymOp|_{E^{\otimes i}} (a) = \sum_{\sigma\in \mathcal{S}_i} \sigma\cdot a
	\]
\end{definition}

Furthermore, the following notation for symmetric products will be convenient: given $\alpha\in \Sym^j E, \gamma\in \Sym^k E$, we write
\[
	\alpha \dottimes \beta = \frac{\SymOp}{j!k!}\alpha\otimes\gamma
\]
This defines an associative commutative bilinear product. Note the absence of normalisation: for example with $\alpha \in E,\, \alpha \dottimes \alpha = 2 \alpha\otimes \alpha$. This convention will be helpful to minimise the appearance of factorial factors.

Finally, given a manifold $M$ and two vector space-valued multilinear forms $\alpha \in T^*_m M^{\otimes j}\otimes E, \gamma\in T^*_m M^{\otimes k}\otimes E^*$, we will use the notation $\tprod{\alpha}{\gamma}$ to denote the composition of the tensor product $T^*_m M^{\otimes j} \otimes T^*_m M^{\otimes j} \to T^*_m M^{\otimes (j+k)}$ with the contraction $E^*\otimes E \to \setR$. This notation will be also used with two $E$-valued multilinear forms when $E$ has an inner product. We will likewise use similar notations $\dprod{\alpha}{\gamma}$ and $\wprod{\alpha}{\gamma}$ for symmetric and exterior products composed with contraction.

It will be convenient to avoid using indices as much as possible, but indices will be useful when considering contractions on high order tensors.
We will then use Einstein's summation convention and write typically
\[
	\tprod{\alpha}{\beta} = \alpha_i \otimes \beta^i
\]

Finally, the Lie algebra $\so_3$ will be equipped with the standard inner product, which can be constructed according to the linear isomorphisms $\so_3 \simeq \Lambda^2 \setR^3 \simeq \setR^3$, or alternatively as minus half the Killing form. For example, the exponentials of the normalised elements are rotations of angle $1$ (in radians). It is invariant under the adjoint action. We also equip the dual $\so_3^*$ with the inverse inner product.


\section{Hessian geometry and thermodynamic system}\label{secno:HessGeoThermo}

Hessian geometry is the differential geometry framework that adapts to general differential manifolds the construction of a metric as the Hessian of a strictly convex function.

\subsection{Hessian geometry}\label{secno:HessGeo}
A general reference on Hessian geometry is~\cite{HessianStructures}, from which we quote a few definitions and results.

\begin{definition}[Affine manifold]
	An \emph{affine manifold} is a differentiable manifold equipped with an affine connection $D$ which is flat (curvature-less) and torsion-less.
\end{definition}
An affine manifold admits around every point local coordinates in which the connection coefficients vanish. Two such local coordinate systems always differ by an affine transformation applied to an open subset of $\setR^n$~
\cite{TransformationGroups, AJLSInformationGeometry}.
We call such coordinates \enquote{flat}.

\begin{definition}[Hessian of a function]
	Let $(M,D)$ be an affine manifold and $f : M \to \setR$ a smooth function.
	The second covariant derivative of $f$
	\[
		D^2 f : X,Y\in TM \mapsto D_X D_Y f - D_{D_X Y} f
	\]
	defines a section of $\Sym^2(T^*M)$, which is called the \emph{Hessian} of $f$.
\end{definition}

\begin{definition}[Convexity]
	We call a function $f : M\to \setR$ on an affine manifold 
	\emph{locally strictly convex} when its Hessian its everywhere definite positive.
\end{definition}%
In particular, the Hessian of a locally strictly convex function defines a metric on $M$. Hessian geometry deals with the geometry of such metrics.

\paragraph{Hessian curvature}

Hessian geometry consists in the study of Hessian manifolds, which we now introduce:
\begin{definition}[Hessian metric]
	Let $(M,D)$ be an affine manifold.
	A metric $g$ on $M$ is called a \emph{Hessian metric} if it can \emph{locally} be expressed as the Hessian of a function. A local function $\phi$ such that $D^2\phi = g$ is called a (local) \emph{potential}.

	An affine manifold equipped with a Hessian metric is called a \emph{Hessian manifold}.
\end{definition}

Let $(M, D,g)$ be a Hessian manifold.
Hessian manifolds have a notion of curvature that is an alternative to the Riemannian curvature:
\begin{proposition}[Hessian sectional curvature]\label{propno:HessCurv}%
	The \emph{Hessian curvature tensor} of $(M,D,g)$ is defined as
	\begin{equation}\label{eqno:HessianCurvature}
		K_{abcd} := \frac12\lp
				(D^2 g)_{abcd} 
				- g^{ef}(D g)_{ace}(D g)_{bdf} 
			\rp
		\in \Gamma\lp T^* M^{\otimes 4} \rp
	\end{equation}
	The totally covariant Riemannian curvature tensor of $g$ takes the form
	\[
		R_{abcd} = \frac12 \lp K_{abcd} - K_{bacd} \rp
	\]
\end{proposition}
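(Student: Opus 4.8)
The plan is to compute the Riemannian curvature tensor directly from the Hessian metric $g$ in flat coordinates, where the connection coefficients of the flat connection $D$ vanish, and then compare the result with the given definition of $K_{abcd}$. The key observation is that in flat coordinates, covariant derivatives $D$ coincide with partial derivatives, so the potential $\phi$ satisfies $g_{ab} = \dd_a \dd_b \phi$, and all the tensors appearing in the formula become explicit derivatives of $\phi$.

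\emph{Computing the Christoffel symbols.} First I would work out the Levi-Civita connection $\nabla$ of $g$ in these flat coordinates. Since $g_{ab} = \dd_a \dd_b \phi$, its first derivatives are $\dd_c g_{ab} = \dd_a \dd_b \dd_c \phi =: \phi_{abc}$, a totally symmetric tensor which coincides with $(Dg)_{abc}$ in flat coordinates. The Christoffel symbols are then
\[
	\Gamma^e_{ac} = \tfrac12 g^{ef}\lp \dd_a g_{cf} + \dd_c g_{af} - \dd_f g_{ac} \rp = \tfrac12 g^{ef}\phi_{acf},
\]
where the last equality uses the total symmetry of $\phi_{acf}$. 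This is the crucial simplification: the Hessian structure makes $\Gamma^e_{ac}$ equal to $\tfrac12 g^{ef}(Dg)_{acf}$.

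\emph{Assembling the curvature.} Next I would substitute into the standard formula $R^d{}_{bac} = \dd_a \Gamma^d_{bc} - \dd_c \Gamma^d_{ba} + \Gamma^d_{ae}\Gamma^e_{bc} - \Gamma^d_{ce}\Gamma^e_{ba}$ and lower the index with $g_{dd'}$. The derivative terms $\dd_a \Gamma^d_{bc}$ will produce a second-derivative piece $\tfrac12 \dd_a \dd_c g_{bf}$ together with a term coming from $\dd_a g^{ef}$; this latter term, using $\dd_a g^{ef} = -g^{eg}g^{fh}\phi_{agh}$, recombines with the quadratic Christoffel terms. After lowering indices and using total symmetry of the third derivatives of $\phi$ throughout, the antisymmetrisation in $a \leftrightarrow c$ should collapse the expression into precisely $\tfrac12(K_{abcd} - K_{bacd})$, once one identifies $\dd_a \dd_c g_{bd}$ with $(D^2 g)_{acbd}$ (again valid in flat coordinates) and $g^{ef}\phi_{\cdot\cdot e}\phi_{\cdot\cdot f}$ with the contraction term in $K$. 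Careful bookkeeping of index order under the symmetries of $\phi_{abc}$ and $g_{ab}$ is essential here, and I would finish by noting that the formula, being a tensorial identity, holds in all coordinate systems even though it was derived in flat ones.

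\emph{The main obstacle} will be the index juggling in the second step: matching the particular index placement in the definition \eqref{eqno:HessianCurvature} of $K_{abcd}$ against the terms produced by the curvature formula, and verifying that the symmetric second-derivative contribution $(D^2 g)_{abcd}$ and the quadratic contraction $g^{ef}(Dg)_{ace}(Dg)_{bdf}$ appear with exactly the right pairing after the antisymmetrisation $a \leftrightarrow b$. The total symmetry of the third derivatives of the potential is what makes everything close up, so the argument hinges on tracking which pairs of indices are symmetric and which are being antisymmetrised, rather than on any deep computation.
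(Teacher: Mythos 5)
Your approach is sound, and it is worth noting at the outset that the paper itself offers no proof of this proposition: it is quoted from the reference on Hessian structures, so there is no internal argument to compare against. Your flat-coordinate computation is the standard one and does go through. The points you identify are exactly the right ones: since $g_{ab}=\partial_a\partial_b\phi$, the third derivatives $\phi_{abc}=(Dg)_{abc}$ are totally symmetric, which collapses the Christoffel symbols to $\Gamma^e_{ac}=\tfrac12 g^{ef}\phi_{acf}$; the term $\partial_a g^{ef}=-g^{eg}g^{fh}\phi_{agh}$ then recombines with the quadratic Christoffel terms to give $R_{abcd}=\tfrac14 g^{ef}\bigl((Dg)_{bce}(Dg)_{adf}-(Dg)_{ace}(Dg)_{bdf}\bigr)$. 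One clarification you should make explicit in the write-up: the fourth-derivative piece $\tfrac12 g^{df}\phi_{abcf}$ coming from $\partial_a\Gamma^d_{bc}$ is totally symmetric and therefore cancels identically in the antisymmetrisation $a\leftrightarrow b$ --- it contributes nothing to $R$. The term $(D^2g)_{abcd}$ appears in $K$ purely by the \emph{definition} of the Hessian curvature tensor (it is needed for the sectional-curvature statements of Proposition~\ref{propno:seccurv}), and the identity $R_{abcd}=\tfrac12\lp K_{abcd}-K_{bacd}\rp$ holds precisely because that symmetric part is invisible to the antisymmetrised combination; this is also why, as the paper remarks, $R$ depends only on $g$ and $Dg$ while $K$ depends on $D^2g$ as well. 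With that observation the "index juggling" you worry about reduces to a single relabelling of the contracted indices using the symmetry of $g^{ef}$, and the argument closes.
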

Note that while the value of $K$ at a point $x$ depends on $g|_x$, $Dg|_x$ and $D^2 g|_x$, the value of $R$ at $x$ only depends on $g|_x$ and $D g|_x$.

In a similar fashion, Hessian manifolds have an alternative notion of sectional curvature and there is a characterisation of manifolds with constant Hessian sectional curvature:
\begin{proposition}\label{propno:seccurv}
	Let $h\in \Gamma\lp \Sym^2 T^*M \rp$. The \emph{Hessian sectional curvature} evaluated on $h$ is
	\[
		\kappa(h) := \frac{K_{abcd}h^{ac}h^{bd}}{\lVert h \rVert^2}
	\]
	The Hessian sectional curvature is a constant $c\in \setR$ if and only if
	\[
		K_{abcd} = \frac c 2 \lp g_{ab}g_{cd} - g_{ad}g_{cb} \rp
	\]
	In this case, the Riemannian manifold $(M,g)$ is a space form of constant Riemannian sectional curvature $- \frac c 4$.
\end{proposition}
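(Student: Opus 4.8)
The plan is to reduce everything to pointwise tensor algebra on $T_xM$, the key input being the symmetries of the Hessian curvature tensor. Working in flat coordinates, where a local potential gives $g_{ab}=\dd_a\dd_b\phi$, the tensors $(Dg)_{abc}=\dd_a\dd_b\dd_c\phi$ and $(D^2g)_{abcd}=\dd_a\dd_b\dd_c\dd_d\phi$ are totally symmetric, and inspection of \eqref{eqno:HessianCurvature} shows that $K_{abcd}$ is invariant under $a\leftrightarrow c$, under $b\leftrightarrow d$, and under the simultaneous exchange $(a,b,c,d)\mapsto(b,a,d,c)$ of the two index pairs. I would also record the elementary identity $\|h\|^2=g_{ab}g_{cd}h^{ac}h^{bd}=g_{ad}g_{cb}h^{ac}h^{bd}$ for $h\in\Sym^2 T^*_xM$, obtained by lowering indices and using the symmetry of $g$ and of $h$; it writes $\|h\|^2$ itself as a contraction of a $g\otimes g$ term against $h^{ac}h^{bd}$.

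Write $M_{abcd}$ for the tensor on the right-hand side of the claimed identity, a combination of products of $g$ that shares the three symmetries of $K$ above. For the direct implication I would substitute $M$ into the numerator $K_{abcd}h^{ac}h^{bd}$ of $\kappa(h)$: by the identity above each metric product contracts to $\|h\|^2$, so the numerator reduces to the constant multiple $c\,\|h\|^2$, the factor $\|h\|^2$ cancels against the denominator, and $\kappa(h)$ is the constant $c$.

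The converse is the heart of the matter. Assuming $\kappa(h)\equiv c$, set $S_{abcd}:=K_{abcd}-M_{abcd}$, which again carries the three symmetries. The hypothesis, combined with the identity for $\|h\|^2$, says exactly that $S_{abcd}h^{ac}h^{bd}=0$ for every $h\in\Sym^2 T^*_xM$. The pair-exchange symmetry $S_{abcd}=S_{badc}$ makes the bilinear form $B(h,h'):=S_{abcd}h^{ac}h'^{bd}$ on $\Sym^2 T^*_xM$ symmetric, so the vanishing of $B(h,h)$ for all $h$ forces $B\equiv 0$ by polarisation. Since $S$ is moreover symmetric within each pair $(a,c)$ and $(b,d)$ and the pairing on $\Sym^2 T^*_xM$ induced by $g$ is nondegenerate, $B\equiv0$ yields $S=0$, that is, the claimed tensor identity. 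I expect this reconstruction of the full four-tensor from the scalar $\kappa$ to be the main obstacle: it hinges on the factor-exchange symmetry of $K$, without which $B$ would fail to be symmetric and polarisation would be unavailable.

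For the curvature statement I would substitute the tensor identity for $K$ into the formula $R_{abcd}=\tfrac12(K_{abcd}-K_{bacd})$ of Proposition~\ref{propno:HessCurv} and simplify; the totally symmetric contributions cancel and leave a tensor of the form $\tfrac c4(g_{ad}g_{bc}-g_{ac}g_{bd})$, the curvature of a space form. Comparing with the constant-curvature model $R_{abcd}=\kappa_{\mathrm{Riem}}(g_{ac}g_{bd}-g_{ad}g_{bc})$ then reads off the Riemannian sectional curvature $-\tfrac c4$, as asserted.
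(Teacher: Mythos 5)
The paper does not actually prove this proposition: it is one of the results explicitly quoted from~\cite{HessianStructures} (\enquote{from which we quote a few definitions and results}), so there is no in-paper argument to compare yours against. Judged on its own, the skeleton of your proof is the right one: the forward direction by direct contraction; the converse by noting that the pair-exchange symmetry $K_{abcd}=K_{badc}$ makes $(h,h')\mapsto S_{abcd}h^{ac}h'^{bd}$ a \emph{symmetric} bilinear form on $\Sym^2$, so that vanishing on the diagonal polarises to $B\equiv 0$ and then, using the symmetry of $S$ within each index pair, to $S=0$; and the last claim by substitution into $R_{abcd}=\tfrac12(K_{abcd}-K_{bacd})$.

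There is, however, a concrete inconsistency that your own displayed identity exposes. You correctly record that $g_{ab}g_{cd}h^{ac}h^{bd}=g_{ad}g_{cb}h^{ac}h^{bd}=\lVert h\rVert^2$ (both contractions equal $\tr\bigl((hg)^2\bigr)$ for symmetric $h$). But for the tensor $M_{abcd}=\tfrac c2\lp g_{ab}g_{cd}-g_{ad}g_{cb}\rp$ appearing in the statement, that identity gives $M_{abcd}h^{ac}h^{bd}=\tfrac c2\lp\lVert h\rVert^2-\lVert h\rVert^2\rp=0$, not $c\lVert h\rVert^2$, so the forward step is a non sequitur as written. Worse, this $M$ is \emph{antisymmetric} under $b\leftrightarrow d$ while $K$ is symmetric there, so $M$ does not \enquote{share the three symmetries of $K$} as you assert, and the identity $K=M$ would force $c=0$ in dimension at least $2$, collapsing both directions. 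Everything you wrote becomes correct once the minus sign is replaced by a plus, $K_{abcd}=\tfrac c2\lp g_{ab}g_{cd}+g_{ad}g_{cb}\rp$: this is the formula in~\cite{HessianStructures}, it is the version consistent with the paper's rigid-body example (Hessian curvature $1/C$, Riemannian curvature $-1/(4C)$), and it is visibly the version you actually computed with, since both your value $c\lVert h\rVert^2$ for the numerator and your simplification of $R_{abcd}$ to $\tfrac c4\lp g_{ad}g_{bc}-g_{ac}g_{bd}\rp$ come from the sum rather than the difference. You should say explicitly that you are proving the corrected identity; silently switching signs between the statement and the computation is precisely the kind of slip this symmetry bookkeeping is meant to catch.
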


\paragraph{Legendre transformation}

Let $(x^i)$ be flat coordinates on an open subset $\U \subset M$ such that the image of $\U$ in $\setR^n$ is a convex open subset and let $f : \U \to \setR$ be a locally strictly convex function. Then the application $Q : x\in \U \mapsto \lp \der{f}{x^i} \rp \in (\setR^n)^*$ is a diffeomorphism onto an open subset of $(\setR^n)^*$.
Denoting $(\epsilon^i)$ the dual canonical basis of $(\setR^n)^*$, this allows defining an alternative coordinate system $y_j = \der{f}{x^j}$.
The Hessian of $f$ can then be expressed as follows, using Einstein's summation convention:
\[
	D^2 f = \d y_i\otimes \d x^i
\]
But $D^2 f$ can also be expressed as a Hessian in the $(y_j)$ coordinates, thanks to the following identity:
\[
	\d (y_i x^i - f) = x^i\d y_i + y_i \d x^i - \underbrace{\der{x^i}{f}}_{y_i} \d x^i = x^i\d y_i
\]
Therefore, writing $\tilde D$ the connection on $\U$ for which the $y_j$ coordinates are flat, we obtain
\[
	\tilde D^2 (y_i x^i - f) = \d x^i \otimes \d y_i = D^2 f
\]
It turns out that $D$ and $\tilde D$ are adjoint connections for the metric $D^2 f$~\cite{HessianStructures, AmaNagaInformationGeometry}. The operation of replacing (locally) $D$ (or $x^i$) and $f$ respectively with $\tilde D$ (or $y_j$) and $y_i x^i - f$ is called \emph{Legendre transformation}.

\subsection{The Hessian geometry of a thermodynamic system}

The state of a thermodynamic system at equilibrium is characterised by a finite family of real parameters, such as temperature and chemical potentials. It is possible to use different sets of parameters, depending on the process that is to be modelled.
In~\cite{RiemannianThermodynamics, RuppeinerRiemannian} Ruppeiner proposes to study the Riemannian geometry defined by the Hessian of the entropy in a specific set of coordinates: the so-called \enquote{standard densities}. The idea is to give a more robust version of the quadratic approximation of entropy around its maximum that is commonly used in statistical physics.

Through the Legendre transformation, the metric is alternatively described as the Hessian of the Massieu thermodynamic potential $\phi$ in the dual coordinates, which include the inverse temperature $\beta = \frac 1 T$.

The first law of thermodynamics is often expressed through \enquote{fundamental thermodynamic identities}, which are expressions in various coordinate systems of a contact relation~\cite{GibbsContact, ArnoldThermo}.
The fundamental thermodynamic identity in terms of the thermodynamic potential $\phi$ is usually written as follows
:
\begin{equation}\label{eqno:FunThermo}
	\d \phi = - E \d \beta - \beta \mu_i \d N^i
\end{equation}
with $E\in \setR $ the energy of the system, $\beta\in \Rpp$ the inverse temperature%
,
$i$ being an index for different chemical components of the system, $N^i$ the number of particles of the chemical component $i$, and $\mu_i$ being what is called the chemical potential of the chemical component. However, the systems we will be interested in will not present chemical reactions but mechanical evolution. The chemical term in the relation~\eqref{eqno:FunThermo} will be ignored, but an extra term $-\delta W$ will be added to account for the evolution of mechanical energy~\cite{SSDEng}. The Hessian metric then takes the form
\[
	D^2 \phi = - \d E \otimes \d \beta - D\delta W
\]
with $D$ the affine connection that is flat in the dual coordinates.

\subsection{The Hessian geometry in Lie group thermodynamics}\label{secno:HessGeoLieThermo}

In~\cite{SSDEng} (see also the review~\cite{MarleGibbs1}) Souriau describes a geometric construction of the equilibrium statistical states of a mechanical system. We briefly introduce its main features.

A mechanical system is modelled by a symplectic manifold $(M, \omega)$ and a Lie group of symmetries $G\subset \Diff(M)$, preserving $\omega$ and admitting a (weak) momentum map $J : M \to \Glie^*$~\cite{IntroMechSym}.
A large collection of identical mechanical systems, each modelled by $M$, is statistically described by a probability measure on $M$.
According to the second law of thermodynamics, the spontaneous evolution of the large system has to increase the entropy of the probability measure (relative entropy with respect to the Liouville measure) while preserving the conserved quantities corresponding to symmetries of the process, in according with Noether's theorem.

The equilibrium states are described by the \emph{Gibbs distributions}, which have maximal entropy under the constraint of fixed $\EV[J]\in \Glie^*$. They take the following form:
\[
	p_\xi (x) = e^{-\sprod{\xi}{J(x)}} \frac{\lambda}{Z(\xi)}
\]
with $\lambda$ the Liouville measure of $M$ and $\xi \in \Glie$ such that the integral
\begin{equation}\label{eqno:PartFunIntegral}
	Z(\xi) := \int_M e^{-\sprod{\xi}{J}} \lambda
\end{equation}
converges. The \emph{Gibbs set} $\Gibbs \subset \Glie$ is defined as the set of $\xi\in \Glie$ such that the integral~\eqref{eqno:PartFunIntegral} converges \emph{on a neighbourhood of $\xi$}:
\[
	\Gamma = \operatorname{int}\lp \left\{
		\xi\in \Glie \suchthat Z(\xi) < \infty
	\right\}\rp
\]
It satisfies the following property:
\begin{proposition}
	The Gibbs set $\Gibbs$ is a convex open subset of $\Glie$.
\end{proposition}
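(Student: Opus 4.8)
The plan is to separate the two assertions. Openness is immediate: by definition $\Gibbs = \operatorname{int}(\Lambda)$, where I write $\Lambda := \{\xi\in\Glie \suchthat Z(\xi)<\infty\}$, and the interior of any subset is open by construction. The real content is convexity, so the strategy is to first show that the whole domain of convergence $\Lambda$ is convex, and then to pass from $\Lambda$ to its interior using the elementary fact that the interior of a convex set is again convex.

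To prove that $\Lambda$ is convex, the key observation is that $Z$ is log-convex on $\Glie$. Fix $\xi_0,\xi_1\in\Glie$ and $t\in[0,1]$, and set $\xi_t = (1-t)\xi_0 + t\,\xi_1$. Since the pairing $\sprod{\argdot}{J(x)}$ is linear in its first argument, the integrand factorises pointwise as
\[
	e^{-\sprod{\xi_t}{J(x)}} = \lp e^{-\sprod{\xi_0}{J(x)}} \rp^{1-t} \lp e^{-\sprod{\xi_1}{J(x)}} \rp^{t}.
\]
Applying Hölder's inequality with the conjugate exponents $p = \frac{1}{1-t}$ and $q = \frac1t$ (the endpoints $t\in\{0,1\}$ being trivial), I obtain
\[
	Z(\xi_t) \leqslant \lp \int_M e^{-\sprod{\xi_0}{J}}\lambda \rp^{1-t} \lp \int_M e^{-\sprod{\xi_1}{J}}\lambda \rp^{t} = Z(\xi_0)^{1-t}\, Z(\xi_1)^{t}.
\]
In particular, if $Z(\xi_0),Z(\xi_1)<\infty$ then $Z(\xi_t)<\infty$, so $\Lambda$ is convex (equivalently, $\log Z$ is convex wherever it is finite).

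It then remains to pass from $\Lambda$ to its interior, for which a short self-contained argument suffices: given $x,y\in\operatorname{int}(\Lambda)$, choose $r>0$ with $B(x,r),B(y,r)\subseteq\Lambda$; then for $z=(1-t)x+t\,y$ and any $v$ with $\lVert v\rVert<r$ one has $z+v = (1-t)(x+v)+t(y+v)\in\Lambda$ by convexity of $\Lambda$, whence $B(z,r)\subseteq\Lambda$ and $z\in\operatorname{int}(\Lambda)$. This gives convexity of $\operatorname{int}(\Lambda)=\Gibbs$.

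The one point requiring a little care is the justification of Hölder's inequality: because the integrands are non-negative and measurable, all integrals are well defined as elements of $[0,+\infty]$ and the inequality holds without any prior finiteness hypothesis, which is precisely what allows the argument to deduce finiteness of $Z(\xi_t)$ from that of the endpoint values. Beyond this I expect no genuine obstacle; the statement reduces to the log-convexity of a moment-generating-type integral together with the standard topology of convex sets.
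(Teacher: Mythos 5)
Your proof is correct. The paper itself gives no proof of this proposition --- it is quoted from Souriau's theory (the surrounding text cites \cite{SSDEng}) --- but your argument is the standard one found in those references: H\"older's inequality applied to the factorisation $e^{-\sprod{\xi_t}{J}} = \bigl(e^{-\sprod{\xi_0}{J}}\bigr)^{1-t}\bigl(e^{-\sprod{\xi_1}{J}}\bigr)^{t}$ yields log-convexity of $Z$ on $[0,+\infty]$, hence convexity of the finiteness domain, and the interior of a convex set is convex. Your remark that H\"older holds for non-negative measurable functions without any prior integrability hypothesis is exactly the point that makes the deduction of finiteness legitimate, so there is no gap.
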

The function $z:=\ln (Z)$, defined on $\Gamma$, is then invariant%
\footnote{Up to the eventual contribution of a symplectic cocycle.}
under the adjoint action of $G$ as well as strictly convex. It can therefore be used to define a Hessian geometry on $\Gibbs$.

Since $\Gamma$ is an open subset of $\Glie$, there is a natural parallelism $T\Gamma \simeq \Glie\times \Gamma$. As a consequence, the differential of $z$ can be seen as a map $\Gamma \to \Glie^*$. Under this identification, there is a Legendre transformation $\xi \in \Gibbs \mapsto Q(\xi) := -\d z(\xi) \in \Glie^*$ which turns out to coincide with the expected momentum:
\begin{equation}\label{eqno:dzEV}
	Q(\xi) = \EV_\xi[J]
\end{equation}
The dual potential%
\footnote{We follow the usual thermodynamic conventions for the dual potential which differ by a sign from the usual mathematical convention as described in Section~\ref{secno:HessGeo}.}
\begin{equation}\label{eqno:sLegTrans}
	s= z + \sprod{\xi}{Q}
\end{equation}
is found to equal the entropy of $p_\xi$ relative to $\lambda$.
As a consequence, the Hessian metric $g:=D^2 z$ coincides with the Hessian of the negative entropy in the $\Glie^*$-valued coordinates $Q$. The coordinates $\Gibbs\hookrightarrow \Glie$ are called \emph{generalised temperature}, while the dual coordinates in $\Glie^*$ form the \emph{expected momentum}.

In both Souriau's theory and Ruppeiner's theory, a Riemannian metric is constructed as the Hessian of the (negative) entropy in a specific set of coordinates. While Ruppeiner uses \enquote{standard densities} of thermodynamic nature, in Souriau's statistical mechanics the coordinates are the expected momenta, which is more of a mechanical nature. The relation between the two Hessian metrics is more obscure when they are expressed using the dual potentials that we wrote $\phi$ and $z$. We want, however, to take inspiration from Souriau's systems for \enquote{thermomechanical} systems, such as a rigid body with heat capacity. We will define the conserved quantities at the macroscopic level rather than deriving them from a microscopic mechanical system, but we will use the generalised temperature, namely the dual coordinates to the conserved momenta, as coordinates to define the Hessian metric. 

We are interested in the Hessian geometry of the Gibbs set of rotational thermomechanical equilibria of an ideal gas, which we will compute in Section~\ref{secno:PerfGas}.
However, we will first discuss a simpler model that will help us find a characterisation of the Riemannian geometry at infinity of the ideal gas.
This system can be understood as a rigid body, constructed at the macroscopic scale like the systems Ruppeiner considers, but with a state that is parametrised by its temperature and a mechanical degree of freedom, its rotation vector.

%
%

\subsection{Poisson structure of the Gibbs set}
\label{secno:PoissonStructureGibbsSet}

The dual vector space $\Glie^*$ admits an action $\Ad^\sharp$ of $G$ such that $J : M \to \Glie^*$ is equivariant: it is an action by affine transformations such that the linear part coincides with the standard coadjoint action. The translation part is encoded into a so-called \enquote{symplectic cocycle}~\cite{SSD, SymGeoAnaMech}.
It turns out that the application $Q : \Gibbs \to \Glie^*$ is open, intertwines the adjoint action and the $\Ad^\sharp$ action of $G$ and is a diffeomorphism onto its image $\Gibbs^*$~\cite{SSDEng,MarleGibbs1}. In particular, $Q$ can be used to pull back the Poisson structure from $\Gibbs^*\subseteq \Glie^*$ to $\Gibbs$. The map $Q$ can then be interpreted as a (Poisson) momentum map, with the same symplectic cocycle as $J$. The Gibbs set thus has a natural $G$-Hamiltonian Poisson structure. Its symplectic leaves are the connected components of the $G$-orbits.

It is easy to give an expression for the Poisson bracket. For simplicity, we only discuss the strong Hamiltonian case, namely, when the cocycle vanishes.
Let $f_1, f_2$ be two smooth functions on $\Gibbs$. We consider their pushouts to $\Gibbs^*$:
\[
	Q_* f_i = f_i\circ Q^{-1}
\]
therefore
\[\begin{aligned}
	\d (Q_*f_i)|_\alpha
		&= \d f_i \circ d Q^{-1} |_\alpha
		= \d f_i \circ g^{-1}|_{Q^{-1}(\alpha)} \in T^*_\alpha \Gibbs^*\\
		&{\mkern-4mu}\underset{DQ}{\equiv} \grad_g f_i \in T_{Q^{-1}(\alpha)} \Gibbs
\end{aligned}
\]
with $\grad_g f_i$ the \emph{Hessian gradient} of $f_i$ and $DQ : T\Gamma \isom T^*\Gamma^*$.
Finally, let us write $[\argdot, \argdot]_{\Glie}$ for the \emph{tensorial} bilinear application $T\Gibbs\otimes T\Gibbs \to T\Gibbs$ corresponding to the Lie algebra bracket under the parallelism $T\Gibbs \simeq \Gibbs\times \Glie$ defined by the open embedding $\Gibbs\hookrightarrow \Glie$. Then the Poisson bracket is
\begin{equation}\label{eqno:Poisson}
	\{f_1, f_2\}
		=
	\sprod{Q}{[\grad_g f_1, \grad_g f_2]_{\Glie}}
\end{equation}
We see that the Poisson structure contains information about the embedding of $\Gibbs$ into the Lie algebra $\Glie$.

\section{Rigid body in inertial rotation}\label{secno:rigidbody}

As a warm-up for the ideal gas in rotation, we consider a rigid body in rotation.
The system is not meant to be an accurate physical model or a proper thermodynamic model but applies Souriau's approach using symmetry and conserved momenta to a model constructed macroscopically (and not constructed as a space of Gibbs distributions).

For a statistical mechanical system symmetric under the action of the Galilean group $\Gal$, the equilibrium corresponding to a generalised temperature $\xi\in \gal$ coincides with the standard thermal equilibrium in a moving frame that follows a $1$-parameter group~\cite{SSDEng}. Typically, it can be a uniformly accelerated frame or a frame in constant uniform rotation. We see that this perspective is relevant for a rigid body in inertial motion as well: its constant rotation can be seen as an absence of motion in the suitably rotating frame.

\subsection{Rigid body with spherical symmetry}
\label{secno:rigidbodyspherical}

We start with the particular case of a rigid body with spherical symmetry. It will turn out that the ideal gas behaves asymptotically as though it were a rigid body with rotational symmetry.

The conserved momenta we will be using to characterise the state of the body are the total energy $E\in\setR$ and the angular momentum $M\in \so_3^*$. Under the Noether correspondence, they respectively correspond to the invariance under time translations and under rotations of the Euclidean space around a fixed origin point.

The thermal state of a rigid body at thermal equilibrium is described by a temperature $T \in \Rpp$ and the mechanical state can be described by a configuration $\theta\in \SO_3$, an angular velocity $\omega\in \so_3$ and the position of the centre of mass. Considering an isolated body, which is therefore on an inertial trajectory, we can ignore the motion of the centre of mass by working in the centre of mass reference frame. Furthermore, the configuration $\theta$ can be omitted when describing the equilibrium states since the system is invariant under rotation. Therefore, the parameters of the system are reduced to $(T, \omega)\in \Rpp\times \so_3$; they are, however, not the coordinates dual to $(E,M)$ in which we want to compute the Hessian metric. An educated guess would be to use the inverse temperature $\beta = 1/T$ instead of $T$, but let us give a derivation of the suitable coordinates.

The rigid body is characterised by an inertia tensor $I$ which is a Euclidean inner product on $\so_3$ and a heat capacity $C\in \Rpp$, which is therefore assumed to be independent of the temperature. Since the body is assumed to be spherically symmetric, the inertia tensor is (positively) proportional to any fixed invariant inner product on $\so_3$: we write $I(\omega, \omega) = I \omega^2$.

\paragraph{Dual coordinates}

The total energy is a sum%
\footnote{Here, we are building the model on \enquote{physical grounds} rather than proceeding by a mathematically justified construction.}
of two terms, the internal (thermal) energy $\Uint$ and a rotational kinetic energy term $\Erot$:
\[
	E(T, \omega) = \Uint + \Erot
\]
with
\begin{align}
	\Uint(T) &= CT\\
	\Erot(\omega) &=\frac12 I \omega^2
\end{align}
The angular momentum is (cf.~\cite{ArnoldMechanics}):
\[
	M = I(\omega, \argdot) \in \so_3^*
\]
The fundamental thermodynamic identity in term of $E$ is comprised of the heat transfer term $T\d S$ and of a \enquote{torque} term describing the variation of rotational kinetic energy
\[
	 T \d S + \sprod{\omega}{\d M}
	 	= \d E
		= \d \Uint + \d \lp \frac12 I \omega^2 \rp
		= \d \Uint + I(\omega, \d \omega)
\]
which gives the expected relation
\begin{equation}\label{eqno:dUTdS}
	\d \Uint = T \d S	
\end{equation}
but also
\[
	\d S
		= \frac{1}{T} \lp \d E - I(\omega, \d \omega) \rp
		= \beta \lp \d E - \sprod{\omega}{\d M} \rp
		= \beta \d E - \beta \sprod{\omega}{\d M}
\]
We have therefore identified the dual coordinates to $(E, M)$: $\beta = \frac{1}{T}\in \Rpp$ and $r := -\beta\omega \in \so_3$.
Although this model is not a statistical mechanical model, the dual coordinates correspond to the element $(\beta, -r)=\beta(1, \omega) \in \Rpp\times \so_3$ such that the body is at rest and thermal equilibrium in the rotating frame of reference given by $t\mapsto \exp(t \omega)\in \SO_3$.

\paragraph{The Hessian metric}

As explained in Section~\ref{secno:HessGeo}, the Hessian metric takes the following form:
\begin{equation}
	g = -\d \beta\otimes \d E  - \tprod{\d r}{\d M} 
\end{equation}
%
We compute the metric in the $(\beta, \omega)$ coordinates:
\[\begin{aligned}
	\d \beta\otimes \d E  + \tprod{\d r}{\d M} 
		&= \d \beta \otimes \lp \d \Uint + I \omega\d \omega \rp 
			+ \tprod{\d r}{I \d \omega}\\
		&= \d\beta \otimes \d \frac C \beta 
			+ \tprod{(\omega \d \beta - \d (\beta\omega))}{I\d \omega}\\
		&= -C \frac{\d \beta \otimes \d \beta}{\beta^2}
		- I \beta \tprod{\d \omega}{\d \omega}\\
		&= -\beta\lp
			\frac{C}\beta \frac{\d\beta}\beta \otimes\frac{\d\beta}\beta + I \tprod{\d \omega}{\d \omega} \rp
\end{aligned}\]
with the contraction $\tprod{\d \omega}{\d \omega}$ using the invariant inner product on $\so_3$ (see Section~\ref{secno:notations}).

We start recognising a familiar metric.
Let us define $u=\frac{2}{\sqrt\beta}$ 
so that 
$\frac{\d u}{u} = -\frac12 \frac{\d\beta}{\beta}$ and we can write
\[\begin{aligned}
	-\d \beta\otimes \d E  - \tprod{\d r}{\d M} 
	&= \frac{4}{u^2}\lp \Uint \lp -2\frac{\d u}{u} \rp 
							\otimes \lp -2\frac{\d u}{u} \rp 
						+ I \tprod{\d \omega}{\d \omega}
					\rp\\
	&= \frac{4}{u^2} \lp C\frac{u^2}{4} \frac{4}{u^2} \d u \otimes \d u 
						+ I \tprod{\d \omega}{\d \omega}
						\rp\\
	&= \frac{4C}{\tilde u ^2} \lp \d \tilde u \otimes \d \tilde u  + \tprod{\d \tomega}{\d \tomega} \rp
\end{aligned}
\]
with $\tilde u = \sqrt C u = 2\sqrt{\Uint}$ and $\tomega = \sqrt{I}\omega \in \so_3$. We have therefore identified $(\Rpp\times \so_3,g)$ with the $4$-dimensional hyperbolic half-space of curvature $-\frac1{4C}$.
The following forms will be relevant for the comparison with the case of the ideal gas in rotation:
\begin{equation}\label{eqno:RigidBodydM}
	-\d \beta\otimes \d E  - \tprod{\d r}{\d M} 
	= \beta \lp C \d u\otimes \d u + \frac{\sprod{\d M}{\d M}}{I} \rp
	= \beta \lp C \d u\otimes \d u + I\tprod{\d \omega}{\d \omega} \rp
\end{equation}

\paragraph{Massieu potential and entropy}

We have computed the metric, but we have not explicitly computed the Massieu potential or the entropy, of which the metric is the Hessian.
The entropy is readily expressed, up to a constant,%
\footnote{This constant is the object of the third law of thermodynamics; a constant heat capacity is not a suitable model for discussing this matter.}
using Equation~\eqref{eqno:dUTdS}:
\[
	\d S = \beta \d \Uint
		= - C\frac{\d\beta}{\beta}
		= \d \lp C\ln\lp \frac1\beta \rp \rp
		= \d \lp C\ln \lp E-\frac12 I^{-1}M^2 \rp \rp
\]
The expression for the entropy is, up to a constant, $S = C\ln \lp E - \frac12I^{-1} M^2 \rp$. This is a negative potential for the metric $g$ under the connection $\tilde D$ that is flat in the coordinates $(E, M)$. The expression of the potential is, up to a linear transformation, the same as that of the space of constant sectional Hessian curvature $\frac{1}{C}$ given in~\cite{HessianStructures} (Proposition 3.8, p. 47). In particular, the hyperbolic geometry can be explained by the Hessian sectional curvature associated to $(S, \tilde D)$ being constant (see Proposition~\ref{propno:seccurv} of the present paper).

The Massieu potential can be derived by integrating its thermodynamic identity~\eqref{eqno:FunThermo}, but it is more straightforward to build the Legendre transform (Equation~\eqref{eqno:sLegTrans}) of the entropy, which we have identified:
\[\begin{aligned}
	\phi 	&= S - \beta E - \sprod r M \\
		&= S - \beta\lp \frac C\beta + \frac12 I \omega^2 \rp + \beta\sprod \omega {I \omega}\\
		&= S - C + \frac12 \beta I \omega^2
\end{aligned}
\]
The Massieu potential can be compared to the potential given in~\cite{HessianStructures}, Example 6.7.

Moreover, the potentials and the metric are manifestly invariant under the adjoint action of $\SO_3$. The momentum coordinates are $\SO_3$-equivariant and define a Poisson structure for which the action of $\SO_3$ is Hamiltonian. Let us compute this Poisson structure.

\paragraph{Poisson structure}

The $(\setR\times\SO_3)$-equivariant open embedding $\Rpp\times \so_3 \overset{Q=(E, M)}{\hookrightarrow} \setR^*\times \so_3^*$ induces on $\Rpp\times \so_3$ a Poisson structure.
First, notice that the Poisson structure is entirely supported on the $\so_3$ component since the Poisson structure on the dual vector space $\setR^*$ is trivial. Furthermore, the angular momentum map $M$ is independent of $\beta$, therefore the Poisson structure can be discussed by forgetting entirely the $\setR^*$ component.

Rather than describing the Poisson bracket, which would depend on expressing the Lie bracket in coordinates, we describe the symplectic leaves and their symplectic forms. We will rely on the following unicity property:
\begin{proposition}\label{propno:SympSph}
	Let $\omega_1, \omega_2$ be two $\SO_3$-invariant symplectic forms on $\setS^2$. Then
	\[
		\omega_2 = \frac{\int_{\setS^2} \omega_2}{\int_{\setS^2}\omega_1} \omega_1
	\]
\end{proposition}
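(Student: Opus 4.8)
The plan is to prove that the space of $\SO_3$-invariant $2$-forms on $\setS^2$ is one-dimensional; the proportionality formula then drops out by integration. First I would observe that on the surface $\setS^2$ a symplectic form is nothing but a nowhere-vanishing $2$-form, i.e.\ a volume form, and that $\SO_3$ acts transitively on $\setS^2$. Consequently an invariant $2$-form $\omega$ is completely determined by its value $\omega_p \in \Lambda^2 T^*_p\setS^2$ at a single chosen point $p$: the value at any other point $q = g\cdot p$ is forced to be $(g^{-1})^*\omega_p$, and transitivity guarantees every point is of this form.

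The key step is to determine which values $\omega_p$ can occur. Invariance under the isotropy subgroup $\Stab(p)\cong \SO_2$ forces $\omega_p$ to be fixed by the induced $\SO_2$-action on $\Lambda^2 T^*_p\setS^2$. But $T_p\setS^2$ is two-dimensional and $\SO_2$ acts on it by rotations, hence it acts on the line $\Lambda^2 T^*_p\setS^2$ through the determinant, which equals $1$. Thus the $\SO_2$-action on this one-dimensional space is trivial, so \emph{every} element of $\Lambda^2 T^*_p\setS^2$ is admissible and extends (via the transitive action, the extension being well defined because the isotropy leaves $\omega_p$ fixed) to a global invariant form. Therefore the space of $\SO_3$-invariant $2$-forms is isomorphic to the line $\Lambda^2 T^*_p\setS^2$, in particular one-dimensional, and $\omega_1, \omega_2$ must be proportional: $\omega_2 = c\,\omega_1$ for some $c\in\setR$.

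To pin down $c$, I would integrate both sides over $\setS^2$ to get $\int_{\setS^2}\omega_2 = c\int_{\setS^2}\omega_1$. Since $\omega_1$ is a volume form on the compact oriented surface $\setS^2$, its total integral is nonzero, so I may solve $c = \lp \int_{\setS^2}\omega_2 \rp / \lp \int_{\setS^2}\omega_1 \rp$, yielding exactly the claimed identity. The only point demanding care — and the real crux of the argument — is the isotropy computation showing that $\SO_2$ fixes every element of the line $\Lambda^2 T^*_p\setS^2$; once this triviality is in hand, both the proportionality and the evaluation of the constant are immediate.
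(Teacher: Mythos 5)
Your proof is correct. The paper states Proposition~\ref{propno:SympSph} without giving a proof, so there is nothing to compare against, but your argument --- reduction to a single point by transitivity of the $\SO_3$-action, the observation that the isotropy $\SO_2$ acts on the line $\Lambda^2 T^*_p\setS^2$ through its determinant and hence trivially (so that invariant $2$-forms form a one-dimensional space), and the final integration step using that a symplectic form on a compact surface has nonzero total integral --- is the standard one and is sound.
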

We know that the symplectic leaves are exactly the $\SO_3$-orbits, namely the $2$-spheres, and that the Poisson structure is invariant under $\SO_3$. Consequently, the corresponding volume forms on the spheres are proportional to the standard volume forms of the $2$-spheres, with a coefficient that can vary between $2$-spheres and thus depends on $\omega^2$. 

Let us write $\Lambda_{\so_3^*}$ for the canonical Poisson structure on $\so_3^*$: it gives the sphere of radius $\rho$ a surface of $4\pi \rho$. The inner product defines an isometry $\so_3\isom \so_3^*$: this induces on $\so_3$ a Poisson structure which we write $\Lambda_{\so_3}$.

The sphere with square radius $\omega^2 = \rho^2$ is mapped under $M$ to the sphere with $M^2 = I^2 \omega^2 = I^2 \rho^2$. As a consequence, the symplectic structure that $M^*\Lambda_{\so_3^*}$ induces on the sphere $\{\omega^2 = \rho^2\}$ has surface $4\pi I\rho$ and 
\[
	M^* \Lambda_{\so_3^*} = \frac 1 I \Lambda_{\so_3}
\]
Note that the \enquote{symplectic radius} $I\lVert \omega \rVert$ coincides neither with the Euclidean radius $\lVert \omega \rVert$ nor with the Hessian (metric) radius $I^{\frac12} \lVert \omega\rVert$.

\paragraph{Hessian curvature}

For future reference, let us compute the Hessian curvature on $\Rpp\times \so_3$. We have recognised the dual Hessian structure as a constant Hessian curvature geometry, but it is not the case for the (primal) Hessian structure. Let us write $D$ for the flat connection in the coordinates $(\beta, r)$. Recalling that $u^2 = 4/\beta$, we compute
\begin{align*}
	D^2 u
		= D \d u
		= D \lp \frac{-u^3}{8}\d \beta \rp
		= -\frac38 u^2 \d u\otimes \d\beta = 3\frac{\d u\otimes \d u}{u}
\end{align*}
and since $r+ \beta\omega = 0$,
\[
	0
		= D (\d r + \beta \d\omega + \omega \d\beta)
		= 0 + \d\beta\otimes \d \omega  +\beta D^2 \omega + \d \omega \otimes \d \beta
\]
and we obtain, using the symmetrising convention from Section~\ref{secno:notations}, 
\[
	D^2 \omega = -\frac{1}{\beta} (\d \beta\otimes \d\omega + \d \omega \otimes \d \beta) = \frac{2}{u}\d u\dottimes \d\omega
\]
These formulas allow us to compute
\[\begin{aligned}
	D g &= \d \lp \frac{4}{u^2} \rp \otimes \lp C\d u\otimes \d u + I \tprod{\d\omega}{\d\omega} \rp 
		 + \frac{4}{u^2} D\lp C\d u\otimes \d u + I \tprod{\d\omega}{\d\omega} \rp\\
		&\begin{aligned}
		= -\frac{8}{u^3} &\d u\otimes \lp C\d u\otimes \d u + I \tprod{\d\omega}{\d\omega} \rp \\
		 &+ \frac{4}{u^2} \Big( 6C\frac{\d u\otimes \d u \otimes \d u}{u}
		 + \frac{2I}{u} \big[ \tprod{(\d u \dottimes \d \omega)}{\d\omega} + \d u \otimes \tprod{\d\omega}{\d\omega} + \tprod{\d\omega}{\d \omega}\otimes \d u \big] \Big)
		 \end{aligned}\\		
		&= \frac{8}{u^3} \big( 2 C\d u^{\otimes 3}
			+ I  \d u\dottimes\tprod{\d\omega}{\d\omega}
			\big)
\end{aligned}
\]
and
\[\begin{aligned}
	D^2 g
		&= \d \lp \frac{8}{u^3}\rp \otimes \big( 2C \d u^{\otimes 3}
				+ I  \d u\dottimes\tprod{\d\omega}{\d\omega}
				\big)
			+ \frac{8}{u^3} D \big( 2 C\d u^{\otimes 3}
				+ I  \d u\dottimes\tprod{\d\omega}{\d\omega}
				\big)\\
		&\begin{aligned}
		&= -\frac{24\d u}{u^4}\otimes \big( 2 C\d u^{\otimes 3}
				+ I  \d u\dottimes\tprod{\d\omega}{\d\omega}
				\big)
			+ \frac{8}{u^3} \Bigg( \frac{2C\cdot 3^3}{u} \d u^{\otimes 4}\\
				&+ I 
				\d u \otimes \lp \frac{3}{u} \d u \dottimes\tprod{\d \omega}{\d \omega}
							+ \frac{2}{u} 2 \d u\dottimes\tprod{\d\omega}{\d\omega}
							\rp
				+ I\frac2u\tprod{\d\omega}{(
					\d u \dottimes \d u\dottimes \d\omega
					)}
				 \Bigg)
		\end{aligned}
		\\					
		 &= 
	\frac{16}{u^4} \Big( 24 C\d u^{\otimes 4}
		+ 2 I \lp \d u \otimes \d u \rp \dottimes \tprod{\d\omega}{\d\omega}
			\Big)
\end{aligned}\]

The Hessian curvature $K$ is then obtained according to Formula~\eqref{eqno:HessianCurvature}.
Let us first rewrite $D g$:
\[
	D g = \frac{8}{u^3}\big(
		(2C\d u^{\otimes 2}
		+ I \tprod{\d \omega}{\d\omega})
		\otimes \d u
		+
		I \tprod{(\d u\dottimes \d \omega)}{\d\omega}	
	\big)
\]

Using indices, we want to obtain the contraction $g^{mm'}D_ig_{km} D_j g_{lm'}$. The expression of $g$ in terms of $\d u$ and $\d \omega$ was given in Equation~\eqref{eqno:RigidBodydM}. The calculations yield the following expressions:
\[\begin{aligned}
	\lp \frac{8}{u^3}\rp^2
		&\lVert \d u \rVert^2
		\Big( 4C^2 \d u^{\otimes 4}
		+ 2CI \lp \tprod{\d u\otimes \d \omega}{\d u \otimes \d \omega}
				+\tprod{\d \omega\otimes \d u}{\d\omega\otimes \d u}
			 \rp\\
		&\qquad\qquad+ I^2 \d \omega_i \otimes \d\omega_j \otimes \d\omega^i \otimes \d\omega^j 
		\Big)\\
		&+
		\lp \frac{8}{u^3}\rp^2\frac{\lVert \d \omega \rVert^2}{3}
		I^2 \Big( \d u \otimes \d u \otimes \tprod{\d \omega}{\d\omega}
			+ \d u \otimes \tprod{\d\omega}{\d\omega}\otimes \d u\\
			&\qquad\qquad+ \tprod{\d \omega \otimes \d u}{\d u \otimes \d \omega}
			+ \tprod{\d\omega}{\d\omega}\otimes \d u\otimes \d u
		\Big)\\
	&= \frac{16}{Cu^4}
		\Big( 4C^2 \d u^{\otimes 4}
		+ 2CI \lp \tprod{\d u\otimes \d \omega}{\d u \otimes \d \omega}
				+\tprod{\d \omega\otimes \d u}{\d\omega\otimes \d u}
			 \rp\\
		&\qquad\qquad+ I^2 \d \omega_i \otimes \d\omega_j \otimes \d\omega^i \otimes \d\omega^j 
		\Big)\\
		&\quad +
		\frac{16}{u^4}
		I \Big( \d u \otimes \d u \otimes \tprod{\d \omega}{\d\omega}
			+ \d u \otimes \tprod{\d\omega}{\d\omega}\otimes \d u\\
			&\qquad\qquad+ \tprod{\d \omega \otimes \d u}{\d u \otimes \d \omega}
			+ \tprod{\d\omega}{\d\omega}\otimes \d u\otimes \d u
		\Big)\\
	&= \frac{16}{u^4}\bigg(
		4C \d u^{\otimes 4} + \frac{I^2}C \d \omega_i \otimes \d\omega_j \otimes \d\omega^i \otimes \d\omega^j\\
		&\quad+ I (\d u\otimes \d u)\dottimes \tprod{\d\omega}{\d\omega}
		+ I \big(\tprod{\d u\otimes \d \omega}{\d u \otimes \d \omega}
						+\tprod{\d \omega\otimes \d u}{\d\omega\otimes \d u} \big)
		\bigg)
\end{aligned}	
\]
We finally obtain:
	\begin{multline}
	\frac{u^4}{16} K = 10C \d u^{\otimes 4}
		- \frac12 \frac{I^2}C \d \omega_i \otimes \d\omega_j\otimes \d\omega^i\otimes \d\omega^j\\
		+ \frac12 I \Big(
			\d u \otimes \d u \otimes \tprod{\d \omega}{\d \omega}
			+ \tprod{\d \omega\otimes \d u}{\d u \otimes \d \omega}\\
			+ \d u\otimes \tprod{\d\omega}{\d\omega} \otimes \d u
			+ \tprod{\d\omega}{\d\omega}\otimes \d u \otimes \d u
		\Big)
	\end{multline}

\subsection{Generalisation to a general rigid body}

Although it will not be needed for the ideal gas, we identify here the Hessian geometry for a rotating rigid body without the assumption of spherical symmetry.

Without the symmetry assumption, however, the kinetic energy cannot be expressed as a function of the angular velocity which is independent of the configuration $\theta\in \SO_3$ of the body. It is nonetheless a function of the \emph{body-frame angular velocity} $\Omega:=\Ad_\theta^{-1} \omega \in \so_3$. The inertia tensor is then a general inner product in $\so_3$ and the kinetic energy takes the form
\[
	\Erot(\Omega) = \frac12 I(\Omega, \Omega)
\]

However, the momentum $\Omega$ is not conserved during a generic inertial trajectory. It can be understood as a momentum in the geometrical sense~\cite{SymGeoAnaMech}, but is not a physically conserved momentum. Therefore, the subsequent model is more of a geometrical construction than a description a physical system.

The analysis goes through similarly to the spherical symmetric case, with one important difference: the system is no longer invariant under the action of $\SO_3$ but it is instead invariant under the action of the isometry group $\SO(I)\subset \GL(\so_3)$.

The total energy is
\[
	E(T, \Omega) = \Uint + \Erot
\]
with
\begin{align}
	\Uint(T) &= CT\\
	\Erot(\Omega) &=\frac12 I(\Omega, \Omega)
\end{align}
The angular momentum is now
\[
	M = I(\Omega, \argdot) \in \so_3^*
\]

The same argument as in Section~\ref{secno:rigidbodyspherical} leads us to consider the inverse temperature $\beta = 1/T$ and the rotation vector $r = -\beta\Omega \in \so_3$ as dual coordinates to $E$ and $M$. The Hessian takes the form
\begin{equation}
	-\d \beta\otimes \d E  - \tprod{\d r}{\d M} 
\end{equation}
which can be computed likewise:
\[\begin{aligned}
	g := -\d \beta\otimes \d E  - \tprod{\d r}{\d M}
		&= \beta\lp
			\frac{C}\beta \frac{\d\beta}\beta \otimes\frac{\d\beta}\beta + I (\d \omega \otimes \d \omega) \rp
\end{aligned}\]
with $I(\d \omega\otimes \d \omega)$ denoting the contraction of the $\so_3\otimes \so_3$-valued bilinear form $\d \omega\otimes \d \omega$ with the product $I$.
Notice that the $\so_3$ component of the metric is the inertia tensor with a conformal factor of $\beta$. In particular the metric is indeed nondegenerate if and only if $I$ is nondegenerate. This is always the case when the body is non-planar, which we now assume.

The metric structure can manifestly be seen as the warped product of $(\so_3, I)$ by the temperature line $\Rpp$. The change of coordinates of Section~\ref{secno:rigidbodyspherical} has an obvious equivalent here. Writing $\sprod\argdot\argdot$ an $\ad$-invariant inner product on $\so_3$, we define $\sqrt{I}$ as the positive selfadjoint endomorphism of $\so_3$ such that
\[
	I(\Omega_1, \Omega_2) = \sprod{\sqrt I \Omega_1}{\sqrt I \Omega_2}
\]
and introduce the coordinates $\tilde\Omega = \sqrt I \Omega \in \so_3$. Using the coordinate $\tilde u= 2\sqrt{\Uint}$ the metric takes the standard form of the hyperbolic metric on the half space:
\begin{equation}
	\frac{4C}{\tilde u^2} \lp
			\d \tilde u\otimes \d \tilde u
			+ \tprod{\d \tilde\Omega}{\d \tilde \Omega} \rp	
\end{equation}
Since we assumed the inertia tensor to be non-degenerate, we can define a group morphism
\[
	G \in \SO(I) \mapsto \sqrt I G {\sqrt I}^{-1} \in \SO(\sprod\argdot\argdot)
\]
such that the map $\Omega\in \so_3 \mapsto \tilde \Omega \in \so_3$ is intertwining for the actions of $\SO(I)$ and $\SO(\so_3)$.

\paragraph{Poisson structure}

Without any symmetry assumption on $I$, the group actions of $\SO_3$ or $\SO(I)$ on $\so_3$ do not correspond to the coadjoint action of $\SO_3$ on $\so_3^*$ under the angular momentum map $M : \Rpp\times \so_3 \to \so_3^*$.

The group $\SO(I)$ acts by isometries of the metric $g$ but the Poisson structure (Section~\ref{secno:PoissonStructureGibbsSet}) defined by the coordinates $(E,M)$ does not make this action Hamiltonian. The symplectic leaves are the submanifolds of fixed $u$ (or $\beta$) and constant square norm $M^2$. We can reformulate this condition as follows, using on $\so_3^*$ the inverse invariant inner product to the one fixed on $\so_3$:
\[\begin{aligned}
	\sprod{M}{M}_{\so_3^*}
		&= \sprod{I(\Omega)}{I(\Omega)}_{\so_3^*}\\
		&= \sprod{\sqrt I^2(\Omega)}{\sqrt I^2(\Omega)}_{\so_3}\\
		&= \sprod{\sqrt I \tilde \Omega}{\sqrt I \tilde \Omega}_{\so_3}\\
		&= I(\tilde \Omega, \tilde \Omega)
\end{aligned}\]

Accordingly, the symplectic leaves, seen in coordinates $(u, \tilde\Omega)$, are the $2$-ellipsoids of constant $I$-norm and constant $u$.
\footnote{This may be reminiscent of kinetic energy but kinetic energy is actually $\frac12 I(\Omega, \Omega) = \frac12 \sprod{\tilde\Omega}{\tilde\Omega}$.}

According to Formula~\eqref{eqno:Poisson}, the Poisson bracket between two functions $f_1, f_2$ can be expressed using the tensorial bracket $[\argdot, \argdot]_{\setR\times\so_3}$ associated with the coordinates $(\beta, \Omega)$. Note that this bracket can also be expressed as a vector product associated to the scalar product $\sprod\argdot\argdot$ (which is a priori different from $g$).
\begin{equation}
	\{f_1, f_2\} := \sprod{M}{[\grad f_1, \grad f_2]_{\setR \times \so_3}} = I(\Omega, [\grad f_1, \grad f_2]_{\setR \times \so_3})
\end{equation}
Of course, since $\setR$ has the structure of an abelian Lie algebra, $\{f_1, f_2\}$ purely depends on the differentials of $f_1, f_2$ in the directions of constant $\beta$. 

\section{Ideal gas in rotation}\label{secno:PerfGas}

We now proceed to the system of an ideal gas in rotation.
It will be constructed as a statistical mechanical system according to Souriau's approach~\cite{SSDEng}. The corresponding mechanical system is that of a free point particle that is mathematically confined to a ball in Euclidean space. Since there is no interaction between the different point particles, the system will behave as an ideal gas.
\footnote{The appropriate modelling of an assembly of $N$ particle would use a distribution normalised to a total measure of $N$, with $N$ large in order for the statistical modelling to be relevant.
We will work with probability distributions, therefore the expected value of a quantity $X$ should be understood as \enquote{$X$ per particle}.}

Although the Riemannian geometry is more complex, we will show that in certain directions it can be asymptotically compared to the case of a rigid body.

\subsection{Mechanics of the massive free particle}

We first need to describe the Hamiltonian manifold corresponding to a massive free particle. Let $m\in \Rpp$ be a parameter that will correspond to the mass of the particle.
The manifold will be described as the tangent%
\footnote{It is more naturally described as a cotangent bundle.}
bundle 
$T \setR^3 \simeq \setR^3\times \setR^{3}$.
It has an action of the whole Galilean group~\cite{SSDEng} but we will only be interested in the action of the subgroup $G=\setR\times \SO_3$ with $\SO_3$ acting by rotations and $\setR$ being the $1$-parameter group describing inertial evolution (time translations).
We need to fix an origin in $\setR^3$ (it roughly corresponds to fixing the embedding of $\SO_3$ into the Galilean group).
Rotations $\mathcal R\in \SO_3$ then act simply as 
\[
	\mathcal R\cdot (q,v) = \lp \mathcal R \cdot q, \mathcal R \cdot v \rp
\]
and time translations act as
\[
	\exp(s\dd_{t})\cdot (q,v) = (q+sv, v)
\]

The symplectic structure is defined as
\[
	m \wprod{\d v}{\d q}
\]
with the contraction denoting the Euclidean scalar product (according to the notation described in Section~\ref{secno:notations}).

Let us write $(\tau, r)$ for a generic element of the Lie algebra $\Glie$.
The action of $G$ is Hamiltonian with the following momentum map:
\begin{equation}
	J(q,v) = m \lp \sprod{v}{\d r \cdot q} -\frac12 v^2 \d \tau \rp \in \Glie^*
\end{equation}
The covector $\d \tau$ is dual to a normalised generator $\dd_t$ of $\Lie(\setR)$ such that $\d \tau(\dd_t) = 1$. Physically, the component dual to $\d r$ is the angular momentum, and the opposite of the component dual to $\d \tau$ is the energy of the particle.

Since we will consider particles that are confined to a spherical subset of space, we will work with an open submanifold
$N = T B(0,R)$ where $B(0,R)\subset \setR^3$ is the open ball of radius $R>0$ centred at the origin.
Note that while this manifold is stable under the action of $\SO_3$ it is \textbf{not stable} under time evolution.
Indeed, inertial motion without further interaction is not compatible with the confinement to a bounded region of space.%
\footnote{This approach can be understood in the framework of scattering theory, as described in~\cite{SSDEng}.}
Nonetheless, this does not prevent the consideration of Gibbs distributions.
However, this model is not suitable for studying dynamics, and there is no guarantee the set of Gibbs distribution has a well defined group action of $\setR$.
Nevertheless, since $\setR$ is in the centre of $G$ or, more precisely, since its infinitesimal action on $N$ preserves $J$, it is natural to consider that $\setR$ acts trivially on the Gibbs set.

\subsection{Gibbs states of a spherically confined free particle}

Let us now identify the Gibbs states associated with $(N,J)$. Let $(\tau, r) \in \Glie$. We first consider the case $\tau = 0$: the integral which needs to converge is
\[
	\int_{(q,v)\in T B(0,R)}
		e^{m\sprod{v}{r\cdot q}} m^3|\d^3 v||\d^3 q| 
\]
The partial integral with respect to $q\in B(0,R)$ converges but the integral with respect to $v\in \setR^3$ never converges. Thus $(0, r)$ is never a generalised temperature.

We now assume that $\tau\neq 0$ and first compute

\[
	\sprod{(\tau, r)}{J(q,v)}
		= m \sprod{v}{r \cdot q} -\frac12 mv^2 \tau
		= -\tau m\lp \frac12 \lp v - \frac{r \cdot q}{\tau} \rp^2 - \frac12 \lp \frac{r \cdot q}{\tau} \rp^2 \rp	
\]
Defining $\beta=-\tau$ and $\omega = \frac{r}{\tau} = -\frac{r}{\beta}$ we can write
\[
	\sprod{(\tau, r)}{J(q,v)}
		= \beta m\lp \frac12 \lp v - \omega \cdot q \rp^2 - \frac12 \lp \omega \cdot q \rp^2 \rp	
\]
Note that $\omega$ is a quantity homogeneous to an angular velocity. Observe how the momentum looks formally like an energy which also depends on $\omega$. We look to compute the following integral:
\[
	Z(\beta, r) 
		:=
		\int_{(q,v)\in T B(0,R)}
		e^{-J(r, \tau)} 
		m^3|\d^3 q \wedge \d^3 v|
\]
Let us first factor it:
\[\begin{aligned}
	Z(\beta, r)
		&= 	\int_{(q,v)\in T B(0,R)}
			e^{- J(r, \tau)} 
			m^3|\d^3 q \wedge \d^3 v| \\
		&= \int_{(q,v)\in B(0,R)\times \setR^3}
			e^{-\beta m\lp \frac12 \lp v - \omega \cdot q \rp^2 - \frac12 \lp \omega \cdot q \rp^2 \rp}
			m^3 |\d^3 q|\otimes |\d^3 v| \\
		&= \int_{(q,v)\in B(0,R)\times \setR^3}
			e^{-\beta m\lp \frac12 v^2 - \frac12 \lp \omega \cdot q \rp^2 \rp}
			m^3 |\d^3 q|\otimes |\d^3 v| \\
		&= \underbrace{
					\int_{B(0,R)} e^{\beta m \frac12 \lp \omega \cdot q \rp^2}
					|\d^3 q|
					}_{\Zrot(\beta, r)}
		   \underbrace{
		   			\int_{\setR^3} e^{-\beta m\frac12 v^2}
					m^3 |\d^3 v|
					}_{\Zkin(\beta, r)}
\end{aligned}\]
We are looking for the set of $(\beta, r)$ such that both of these positive integrals converge.
The total integral can be understood as a \enquote{Gaussian of signature $(3,3)$} which is integrated over a cylindrical subset of $\setR^6$.
The second factor is easy to compute: it converges if and only if $\beta m>0$, namely $\beta >0$. This is a Gaussian integral which has a value independent of $\omega \cdot q$:
\begin{equation}\label{eqno:Zkin}
	\Zkin (\beta, r) = \lp \frac{2\pi m}{\beta} \rp^{\frac32}
\end{equation}

The integral defining $\Zrot$ has a bounded integrand and a domain of finite measure: it always converge.
Note however that the integrand is not integrable over the total Euclidean space, which shows that the whole Euclidean space admits no generalised temperature for the Galilean group.

We have therefore identified the Gibbs set, which we will describe as follows:
\[
	\Gibbs = \{(\beta, \omega)\}_{(\beta, \omega) \in \Rpp\times \so_3}
\]
Remember that the embedding into $\Glie$ is the following: $(\beta, \omega) \in \Gibbs \mapsto (\tau=-\beta, r=-\beta\omega)\in \Lie(\setR)\times \so_3$.
From now on, we stop using the coordinate $\tau$ (but we will occasionally use $r$).
It is important to note that $(\beta, \omega)$ does not form a set of \emph{flat coordinates} for the flat affine structure induced by the embedding into $\Glie$ (while $(\beta, r)$ does).

\subsection{Thermodynamic geometry of the Gibbs set}
\label{secno:GeoGibbsSet}

The Hessian metric we want to study is defined as the Hessian of $z=\ln(Z)$ on $\Gibbs$.
In order to compute this Hessian, we first need to compute the partial partition function $\Zrot(\beta,\omega)$.
It will be useful to consider the operator norm on $\so_3$, which is minus half the Killing form, so that if $\omega\in \so_3$ is nonzero and $q\in \setR^3$ belongs to the rotation plane then
\[
	(\omega\cdot q)^2 = \omega^2 q^2 \text{ with } \omega^2 := \lVert \omega \rVert^2
\]
Let us introduce cylindrical coordinates along the axis of rotation (the invariant line) of $\omega$ (any arbitrary axis if $\omega = 0$): they correspond to a map
\[
	(y, \theta, \rho) \in \setR \times \setS^1 \times \setR_+ \to \setR^3
\]
which maps the cylindrical measure $|\d y \wedge \d\rho \wedge \rho \d\theta|$ to the Euclidean measure $|\d^3 q|$. The following subset 
\[
	N = \{(y,\theta, \rho)\in [-R,R]\times \setS^1 \times [0,R] \suchthat y^2 + \rho^2 \leqslant R^2 \}\subset \setR \times \setS^1 \times \setR_+
\]
is mapped to $B(0,R)$.
The momentum takes the following form
\[
	\sprod{-\beta(1, \omega)}{J((y, \theta, \rho),v)}
		= \beta m\lp \frac12 \lp v - \omega \cdot q \rp^2 - \frac12 \omega^2 \rho^2  \rp	
\]
and we can compute the integral
\[\begin{aligned}
	\Zrot(\beta, \omega)
		&= \int_{N} e^{\beta m \frac12\omega^2 \rho^2}\rho \d y \d \rho \d\theta\\
		&= \int_{-R}^R \lp 
				2\pi \int_0^{\sqrt{R^2-y^2}} e^{\beta m \frac12\omega^2 \rho^2}\rho \d \rho
			\rp \d y\\
		&= \frac{4\pi}{\beta m \omega^2} \int_0^R 
				 e^{\beta m \frac12 \omega^2 (R^2-y^2)} - 1 \ \d y\\
		&= \frac{4\pi R}{\beta m \omega^2} \lp \int_0^1 
				 e^{\beta m \frac12 \omega^2 R^2(1-y^2)} \d y - 1 \rp		
\end{aligned}\]
Although it will not be of use to us, this \enquote{partial} Gaussian integral can be expressed using the lower incomplete Gamma function $\gamma: (x,y)\mapsto \int_0^y t^{x-1}e^{-t} \d t$ as follows:
\[
	\Zrot(\beta, \omega)
	= \frac{4\pi R}{\beta m \omega^2}
	\left[
		\frac{e^{\beta m\frac12\omega^2 R^2}}{\sqrt{2\beta m \omega^2 R^2}} \gamma\lp \frac12, \sqrt{\frac12\beta m \omega^2 R^2}\rp - 1 
	\right]
\]

Notice that $\Zrot$ depends purely on $\beta\omega^2 = \frac{r^2}{\beta}$.
In particular, along with Equation~\eqref{eqno:Zkin} this implies that $Z$ satisfies the following scaling property, for any $\eta\in \setR^*$:
\begin{equation}\label{eqno:Zetaomega}
	Z(\beta, \eta\omega) = |\eta^3| Z(\eta^2\beta, \omega)
\end{equation}

The Hessian metric of $\Gibbs$ is the Hessian of the potential $z = \ln(Z)$ in the flat coordinates $(\beta, r)$~\cite{HessianStructures}.
As mentioned in Section~\ref{secno:HessGeoLieThermo}, the differential of $z$ defines the expected energy $E$ and angular momentum $M$ as follows: 
\[
	\d z = -E \d \beta - \sprod{M}{\d r}
\]
The multiplicative decomposition of $Z$ gives an additive decomposition:
\[
	z = \zkin + \zrot
\]
with
\begin{align*}
	\zkin &= \ln(\Zkin)\\
	\zrot &= \ln(\Zrot)
\end{align*}

We can compute right away the Hessian of the first term:
\[
	\zint(\beta) = \frac32\ln \lp \frac{2\pi m}{\beta}  \rp
\]
that is readily differentiated:
\[
	\d \zint = -\Uint \d\beta = -\frac32\frac{\d\beta}{\beta}
\]
\[
	D^2 \zint = \frac32 \frac{\d\beta}{\beta^2}
\]
One can identify a heat capacity term with $C=\frac32$, corresponding to the tridimensional free motion of the gas particles.

The second term is more complicated. The potential term is
\[
	\zrot(\beta\omega^2) = \ln\lp \frac{4\pi R}{\beta m\omega^2} \rp +
		\ln \lp \int_0^1 
						 e^{\beta m \frac12 \omega^2 R^2(1-y^2)} - 1 \ \d y	
			\rp
\]
Let us define the function $I : \Rpp \to \setR$ such that
\begin{equation}\label{eqno:GasInertia}
	I(\beta\omega^2) = \frac{\dd \zrot(\beta\omega^2)}{\dd \frac12 \beta\omega^2}
\end{equation}
Since $\Zrot(\beta\omega^2) = \int_{B(0,R)} e^{\beta m \frac12 \lp \omega \cdot q \rp^2}
					|\d^3 q|$
is a strictly increasing function of $\beta\omega^2$, $I(\beta\omega^2) > 0$.
					
The differential of $\zrot$ can be expressed as 
\[
	\d \zrot
		= \frac12 I \d \lp \beta\omega^2 \rp
		= \frac12 I \d \lp \frac{r^2}\beta \rp
		= I \frac{2\beta \sprod{r}{\d r} - r^2 \d\beta}{2\beta^2}
		= -\frac12 I \omega^2 \d \beta  - I \sprod{\omega}{\d r}
\]
We can identify a contribution $\Erot(\beta, \omega) = \frac12 I(\beta\omega^2)\omega^2$ to the total energy
\[
	E = \Uint(\beta) + \Erot(\beta, \omega)
\]
and we obtain the expected angular momentum
\[
	M(\beta, \omega) = I \sprod{\omega}{\argdot} \in \so_3^*
\]
Notice that the function $I$ behaves formally as the inertia factor $I$ of Section~\ref{secno:rigidbodyspherical}, with its dependency on $\beta\omega^2$ reflecting the \enquote{internal structure} of the gas, as opposed to a rigid body.

We are interested in the Hessian of $z$ in the coordinates $(\beta, r)$.
It can be expressed as 
\[
	g = -\d E \otimes \d \beta - \tprod{\d M}{\d r}
\]
The metric can be expressed in several coordinates systems. Indeed, noticing that
\[
	\frac12 \d (\beta\omega^2)
		= \frac12 \d \lp \frac{r^2}{\beta} \rp 
		= \frac12 \lp -\frac{r^2}{\beta^2} \d \beta - \frac{2\sprod{r}{\d r}}{\beta} \rp
		= -\frac12\omega^2 \d \beta - \sprod{\omega}{\d r}
\]
we compute the Hessian
\[\begin{aligned}
	D^2 \lp \frac12 \beta\omega^2 \rp
		&= -\sprod{\omega}{\d \omega}\otimes \d\beta - \tprod{\d \omega}{\d r}\\
		&= - \tprod{\d \omega}{\lp \omega \d\beta + \d r \rp}\\
		&= \beta \tprod{\d \omega}{\d \omega}
\end{aligned}\]
which gives the following expression for the Hessian metric:
\[\begin{aligned}
	g 
		&= -\d \Uint\otimes \d \beta+ D\lp \frac12 I(\beta\omega^2)\d (\beta\omega^2) \rp\\
		&= -\d\Uint\otimes \d\beta + \beta I \tprod{\d\omega}{\d\omega} + \d I \otimes \d (\beta\omega^2)\\
		&= \frac32 \frac{\d\beta\otimes\d\beta}{\beta^2} + \beta I \tprod{\d\omega}{\d\omega} + I' \d (\beta\omega^2)\otimes \d (\beta\omega^2)
\end{aligned}\]
with $I' := \frac{\dd I}{\dd (\beta\omega^2)}$.
The expression of $g$ in the coordinates $(\beta, \omega)$ would be diagonal if it were not for the term $I'$.
This motivated studying the case of rigid bodies, which have a constant inertia tensor.
From a physical perspective, it is expected that at high velocity, the gas concentrates on the surface of the ball around a large circle that is orthogonal to the rotation axis. In particular, the coefficient $I$ is expected to be \enquote{asymptotically constant}, in a sense to be made precise. This matter will be the subject of Section~\ref{secno:highvelocity}.

Before studying the high velocity limit, let us describe the metric $g$ in the coordinates $(\beta, M)$.
This is the thermomechanical version of the coordinates introduced in~\cite{CurvatureMultiIdealGas}, and they give a \enquote{block-diagonalisation} of the metric. 
Indeed, the \emph{Maxwell relation}~\cite{BaezMaxwellRelations2}
\[
	\d^2 z = -\d E \wedge \d \beta - \wprod{\d M}{\d r} = 0
\]
implies that the terms of the metric in $\tprod{\d \beta}{\d M} + \tprod{\d M}{\d \beta}$ vanish. An alternative choice would be to use the coordinates $(E, r)$. Such coordinate sets are called \emph{mixed coordinates} and give a general way to locally split a Hessian manifold into two orthogonal and transversal foliations~\cite{AmaNagaInformationGeometry}.
\com{Effectuer tout de même le calcul afin de voir si la métrique ne prend pas une forme plus pratique.}

We need to express $\d r$ and $\d E$ in the coordinates $(\beta, M)$.
Let us first compute
\[\begin{aligned}
	\sprod{\argdot}{\d M}
		&= \d \lp -I \frac r \beta \rp \\
		&= - \frac r \beta I' \d (\beta \omega^2) + I \frac{r\d\beta - \beta \d r}{\beta^2}\\
		&= -\frac r\beta I' \frac{2\beta \sprod{r}{\d r} - r^2 \d \beta}{\beta^2} + I \frac{r\d\beta - \beta \d r}{\beta^2}\\
		&= \lp I' \frac{r^2 r}{\beta^3} + I \frac{r}{\beta^2} \rp \d \beta + \lp -\frac{2r^2}{\beta^2}I' - \frac1\beta I \rp \d r\\
	\sprod{\argdot}{-\beta \d M}
		&= \lp I' \beta \omega^2 + I\rp \omega \d \beta 
			+ \lp 2\beta\omega^2 I' + I \rp \d r 
\end{aligned}\]
We will explain in Section~\ref{secno:CurvAsympt} that $I'$ is always non-negative. Since $I>0$, this allows inverting the equation into
\begin{equation}
	\d r 
	= -\frac{1}{2\beta \omega^2 I' + I} \lp
			\beta \sprod{\argdot}{\d M} + \lp \beta\omega^2 I' + I \rp \omega \d \beta
		\rp
\end{equation}

We now express $\d E$ in terms of $\d \beta$ and $\d M$:
\[\begin{aligned}
	\d E 
		&= \d \Uint + \d \Erot\\
		&= \d \frac C \beta + \d \lp \frac12 \sprod{\omega}{M} \rp\\
		&= - \frac{C}{\beta^2}\d\beta
			+ \frac12\lp \sprod\omega {\d M }- \sprod M {\d \frac r \beta} \rp\\
		&= - \frac{C}{\beta^2}\d\beta
			+ \frac12\lp \sprod \omega {\d M} + \sprod M {\frac{r\d\beta - \beta\d r}{\beta^2}} \rp\\
		&= \lp \frac12 \sprod r M -C \rp \frac{\d\beta}{\beta^2}
			+ \frac12\sprod \omega{\d M}
			+ \frac1{2\beta} \frac{1}{2\beta \omega^2 I' + I}
				\lp \sprod M {\beta \d M}
					+ \lp \beta\omega^2 I' + I \rp \sprod{M}{\omega  \d \beta}
				\rp
		\\
		&= \lp -\frac12 \beta \sprod \omega M -C 
			+ \frac12 \beta \sprod M \omega
				\frac{\beta\omega^2 I' + I}{2\beta \omega^2 I' + I} \rp \frac{\d\beta}{\beta^2}
			+ \frac12 \lp \sprod{\omega}{\d M}
				+ \frac{\sprod{M}{\d M}}{2\beta \omega^2 I' + I} \rp
		\\
		&= - \lp C + \frac12 \beta \sprod M \omega \frac{\beta\omega^2I'}{2\beta\omega^2 I' + I} \rp 
			\frac{\d\beta}{\beta^2}
			+ \frac{\beta\omega^2 I' + I}{2\beta\omega^2I' + I} \sprod \omega {\d M}\\
		&= - \lp C + \frac{\beta\omega^2I'}{2\beta\omega^2 I' + I} \frac12 I \beta \omega^2\rp 
			\frac{\d\beta}{\beta^2}
			+ \frac{\beta\omega^2 I' + I}{2\beta\omega^2I' + I} \sprod \omega {\d M}\\
%
%
\end{aligned}\]
in which we momentarily used the dual inner product on $\so_3^*$.
We obtain the following expression 
\begin{equation}\label{eqno:gdbetadM}
	g = \lp C + \frac{\beta\omega^2I'}{2\beta\omega^2 I' + I} \frac12 I \beta \omega^2\rp 
				\frac{\d\beta\otimes\d\beta}{\beta^2}
			+ \frac1{2\beta\omega^2 I' + I}\beta \tprod{\d M}{\d M}
\end{equation}
which may prove be useful for a more detailed study of the properties of the Riemannian geometry. It can be compared to the metric of the rigid body as described in Equation~\eqref{eqno:RigidBodydM}, which has vanishing $I'$.
\com{Donner une expression pour la courbure scalaire ?}

\subsection{Inertia tensor}

We discuss here the inertia factor $I$ introduced in Equation \eqref{eqno:GasInertia}. By definition, the inertia tensor is an element of $I_{\beta, \omega}$ of $\Sym^2\lp \so_3^*\rp$ such that the angular momentum of the body is the element of $\so_3^*$ given by $I(\omega, \argdot)$. 

Given a generalised temperature $(\beta, \omega)\in \Gibbs$, there is a subgroup $\SO_2\subset \SO_3$ which stabilises $\omega$ and therefore stabilises the associated Gibbs distribution: Gibbs distributions have a cylindrical symmetry (spherical for $\omega = 0)$. But in general, the Gibbs distributions do not have spherical symmetry. Let us explain why the angular momentum has the same form as that of the spherically symmetric rigid body (Section~\ref{secno:rigidbodyspherical}).

The $\SO_3$-invariance of $\zrot$ implies that the angular momentum $M$ is as well, hence $I_{\beta, \omega}$ has to be invariant under rotations stabilising $\omega$, namely rotations around the same axis as $\omega$. This implies in particular that $M_{\beta, \omega} = I_{\beta, \omega}(\omega, \argdot)$ vanishes on the orthogonal of $\omega$ (for the standard invariant inner product) and, moreover, there is a factor $\alpha_{\beta, \omega}$ such that 
\[
	I_{\beta, \omega}(\omega, \argdot) = \alpha_{\beta, \omega}\sprod{\omega}{\argdot}
\]
In other words, under the usual isometry $\so_3 \simeq \setR^3$, the angular momentum is collinear to the rotation axis.
The invariance under the remainder of $\SO_3$ implies that $\alpha$ only depends on $\beta$ and $\omega^2$.

For this reason, even though the Gibbs states do not have spherical symmetry, the inertia tensor \enquote{follows} $\omega$ in such a way that the angular momentum and the kinetic energy behave as though the gas had spherical symmetry.

Since the partition function is constructed according to the formula
\[
	Z(\beta, r) = 
		\int_{(q,v)\in T B(0,R)}
		e^{\beta J(\omega, 1)} 
		m^3|\d^3 q \wedge \d^3 v|
\]
The angular momentum can be directly expressed, using ${}|_\beta$ to indicate partial derivative with fixed $\beta$, as follows:
\[\begin{aligned}
	\frac{1}{\beta} \frac{\dd z(\beta, r)}{\dd \omega}|_{\beta}
		&= \frac1\beta \frac{\dd \zrot(\beta\omega^2)}{\dd \omega}|_{\beta}\\
		&= \frac{\dd \zrot(\beta\omega^2)}{\dd \beta\omega^2}
		2\sprod{\omega}{\argdot}
\end{aligned}\]
This justifies defining $I$ as in Equation~\eqref{eqno:GasInertia}.

\subsection{Poisson structure}

The Poisson structure $\Lambda$ induced on $\Gibbs$ by the open embedding $\Gibbs \overset{Q = (E,M)}{\hookrightarrow} \setR\times \so_3^*$ can be expressed in a similar way to the case of the spherical rigid body (Section~\ref{secno:rigidbodyspherical}). Let us write $\Lambda_{\setR\times\so_3^*}$ for the canonical linear Poisson structure on $\setR\times\so_3$. We call $\Lambda_{\Rpp\times\so_3}$ the Poisson structure on $\Rpp\times\so_3$ which has for symplectic leaves the $2$-spheres with constant $\beta$, equipped with a $\SO_3$-invariant surface form of total surface $4\pi\rho$ if the sphere's $\so_3$-radius is $\rho$.

Since $\Lambda_{\setR\times\so_3^*}$ is $\SO_3$-invariant, so is $\Lambda$, so that it is proportional to $\Lambda_{\Rpp\times\so_3}$ on each leaf, with a factor depending on $\beta$ and $\omega^2$ (Proposition~\ref{propno:SympSph}). The sphere with Euclidean radius $\omega^2 = \rho^2$ has square radius $M^2 = I^2 \omega^2$ in $\setR\times\so_3^*$. As a consequence, its symplectic surface is $4\pi I \lVert \omega \rVert$ and the corresponding symplectic radius is $I \rho$. In conclusion,
\[
	\Lambda = \frac{1}{I(\beta\omega^2)} \Lambda_{\Rpp\times\so_3}
\]
Note that the factor $1/I(\beta\omega^2)$ is indeed constant on the $2$-spheres, since they have constant $\beta$ and constant $\omega^2$.

\subsection{Dependency in the radius}

Let us discuss the influence of the radius $R$ of the sphere on the Gibbs distributions. Only in this section do we explicitly write the functional dependency of the quantities on $R$.
Let $R_1, R_2$ be two positive real numbers with ratio $\eta := \frac{R_2}{R_1}$.
The dilation of factor $\eta$ maps diffeomorphically the ball $B(0,R_1)$ to $B(0,R_2)$; its differential defines a diffeomorphism $\psi : TB(0,R_1) \to TB(0,R_2)$ such that $\psi^*\wprod{\d v}{\d q} = \eta^2 \wprod{\d v}{\d q}$.
The Gibbs set's embedding in $\setR\times \so_3$, is independent of $R$.
Given $(\beta, \omega)\in \Gibbs$ we can express the partition function at radius $R_2$ as an integral on $TB(0,R_2)$:
\[\begin{aligned}
	Z(\beta, \omega, R_2)
		&= \lp \frac{2\pi m}\beta \rp^{\frac32} \int_{B(0,R_2)} e^{\frac12 \beta m (\omega \cdot q)^2} |\d^3 q|\\
		&= \lp \frac{2\pi m}\beta \rp^{\frac32} \int_{B(0,R_1)} e^{\frac12 \beta m \eta^2(\omega \cdot q)^2} \eta^3 |\d^3 q|\\
		&= \eta^{3} Z(\beta, \eta \omega, R_1)
\end{aligned}
\]
If we write $\mu_{\beta, \omega, R_i}$ the Gibbs densities, we have the equation
\[\begin{aligned}
	\psi^* \mu_{\beta, \omega, R_2}
	&= e^{-\frac12 \beta m\eta^2(v-\omega \cdot q)^2 + \frac12\beta m \eta^2(\omega\cdot q)^2} \frac{\eta^6|\d^3 q||\d^3 v|}{\eta^3 Z(\beta, \eta \omega, R_1) }\\
	&= e^{-\frac12 \beta m\eta^2(v-\omega \cdot q)^2 + \frac12\beta m \eta^2(\omega\cdot q)^2} \frac{|\d^3 q||\d^3 v|}{Z(\eta^2\beta, \omega, R_1) }
\end{aligned}\]
using the scaling Equation~\eqref{eqno:Zetaomega}.
We obtain the following relation:
\begin{equation}\label{eqno:muR1R2}
	\psi^* \mu_{\beta, \omega, R_2}
	= \mu_{\eta^2\beta, \omega, R_1}
\end{equation}
In conclusion, the Gibbs sets associated with balls of different radii can all be identified, with a change in the radius of the ball being equivalent to a change in the temperature.

\subsection{Decomposition of the Gibbs measures}\label{secno:GibbsDecomp}

Before discussing the high velocity limit of the Gibbs measures, let us describe the Gibbs set in more detail.
Given $(\beta, \omega) \in \Gamma$, the corresponding Gibbs measure $\mu_{\beta, \omega}$ takes the form
\begin{equation}\label{eqno:GibbsDecomp}
	\mu_{\beta, \omega}
		= \frac{e^{- J(r, \tau)} 
			m^3|\d^3 q \wedge \d^3 v|}{Z(\beta, \omega)} 
		= 
		\frac{e^{\beta m \frac12 (\omega\cdot q)^2 }
							|\d^3 q|}
		{\Zrot(\beta\omega^2)}		
		\otimes
		\lp \frac{\beta m}{2\pi} \rp^{3/2}
			{e^{-\beta m\frac12 \lp v - \omega \cdot q \rp^2}
				m^3 |\d^3 v|}		
\end{equation}
The Gibbs measure can be decomposed as a product of densities, between a density
\begin{equation}
	\nu_{\beta, \omega}:=\frac{e^{\beta m \frac12 (\omega\cdot q)^2 }
							|\d^3 q|}
		{\Zrot(\beta\omega^2)}
\end{equation}
which lives on $B(0,R)$ and depends purely on $\sqrt\beta \omega$, and a fibre density 
\[
	\lp \frac{\beta m}{2\pi} \rp^{3/2}
			{e^{-\beta m\frac12 \lp v - \omega \cdot q \rp^2}
				m^3 |\d^3 v|}
\]
which is a density on the manifold $T_q B(0,R)$, fibre of the tangent bundle $TB(0,R)\to B(0,R)$. In particular, the map
\[
	(\beta, \omega)\in \Gamma \mapsto \sqrt\beta \omega \in \so_3
\]
which defines a $\Rpp$-principal bundle fibration can be interpreted as a fibration of statistical models $\mu_{\beta, \omega}\mapsto \nu_{\beta, \omega}$.

Since the potential $z$ splits as $\zint(\beta) + \zrot(\beta\omega^2)$, one could hope for the fibration to be a Riemannian submersion, with the \enquote{horizontal metric} being $D^2 \zrot$. Indeed, the affine connection is invariant under the action of $\Rpp$, but it turns out that the vertical distribution is not invariant under the connection. This can be seen from the following expression, from Section~\ref{secno:GeoGibbsSet}:
\[
	D^2 \zrot
	= \beta I \tprod{\d\omega}{\d\omega}
	+ I' \d\lp \beta\omega^2 \rp\otimes \d \lp \beta\omega^2 \rp
\]
The term $\d(\beta\omega^2)\otimes \d(\beta\omega^2)$ can be pushed forward under $(\beta, \omega)\mapsto \sqrt\beta \omega$ but the term $\beta \tprod{\d\omega}{\d\omega}$ is not horizontal and therefore cannot be a pullback. The $\Rpp$-invariance of $D$ implies however that $D^2 \zrot$ is $\Rpp$-\emph{invariant}.

Furthermore, the measure $\nu_{\beta, \omega}$ is invariant when $\omega$ is replaced with $-\omega$. Since we will be interested in the behaviour of $\nu_{\beta, \omega}$ as $\beta\omega^2\to \infty$, we will have in mind the following alternative parametrisation:
\begin{equation*}
	\sqrt\beta\omega \in \so_3\smallsetminus\{0\}
		\mapsto
		(\beta\omega^2, [\omega])\in \Rpp\times \setP(\so_3)
\end{equation*}
This defines a twofold covering that exactly forgets the sign data from which $\nu_{\beta, \omega}$ is independent.

Although $\zrot$ is constructed as the logarithmic partition function of the exponential family of distributions $\nu_{\beta, \omega}$ which depend on the three dimensional parameter $\sqrt\beta\omega$, it turns out that it can be described as the logarithmic partition function of a \emph{different} family of distributions that effectively depends on a one-dimensional parameter. Let $(\beta, \omega)\in\Gibbs$ with adapted cylindrical coordinates $(y, \theta, \rho)$ on $B(0,R)$ (the axis can be arbitrary when $\omega=0$). We consider the application $\rho : q\in B(0,R)\mapsto \rho(q) \in [0,R)$. The image measure of $e^{\beta m \frac12(\omega\cdot q)^2}|\d^3 q|$ corresponds to the following density:
\[
	\rho_* \lp e^{\beta m \frac12(\omega\cdot q)^2}|\d^3 q| \rp
		= \lp \int_{\setS^1 \times (-\sqrt{R^2 -\rho^2}, \sqrt{R^2-\rho^2})} 1 \d y \d \theta \rp e^{\beta m \frac12\omega^2\rho^2} \d \rho
		= e^{\beta m \frac12\omega^2\rho^2} 4\pi \sqrt{R^2-\rho^2} \d \rho
\]
The total mass is exactly the partition function of $\nu_{\beta, \omega}$ thus the marginal distribution of the radial distance $\rho$ is given by
\[
	\rho_* \nu_{\beta,\omega}
		= \frac{e^{\beta m \frac12\omega^2\rho^2} \sqrt{R^2-\rho^2} \d \rho}
			{\int_{[0,R)} e^{\beta m \frac12\omega^2\rho^2} \sqrt{R^2-\rho^2} \d \rho}
\]
The image measures $\rho_* \nu_{\beta,\omega}$ therefore define an exponential family of probability distributions on $[0,R)$ that purely depends on $\beta\omega^2\in \Rpp$, namely it is independent of the direction of $\omega$. This family has a partition function that is identical to that of $\nu_{\beta, \omega}$ (up to an eventual constant factor). In other words, we have shown that $\zrot(\beta\omega^2)$ coincides with the logarithmic partition function of the family of probability distributions $\rho_*\nu_{\beta, \omega}$ which are effectively parametrised by the one-dimensional parameter $\beta\omega^2\in \Rpp$.

The relevance of the measure $\rho_* \nu_{\beta, \omega}$ will be clear in Section~\ref{secno:highvelocity} and the interpretation of $\zrot$ as a one-parameter logarithmic partition function will allow for simplifications in the calculations of its covariant derivatives (which will be the object of Section~\ref{secno:CurvedExp}).

\subsection{High angular velocity limit of the Gibbs distribution}\label{secno:highvelocity}

We want to investigate the behaviour of $g$ and its differentials as $\omega \to \infty$ with $\beta$ constant. The formulas (such as~\eqref{eqno:gdbetadM}) suggest investigating the behaviour of $I(\beta\omega^2)$ and its derivatives as $\beta\omega^2 \to \infty$. There is however a more geometrical approach: studying the asymptotic behaviour of the Gibbs measure as $\beta\omega^2\to \infty$. Indeed, the physically intuitive expectation is that it concentrates on a $1$-dimensional belt of radius $R$ in the plane of rotation. Since this belt is not a subset of the \emph{open ball} $B(0,R)$, such a result will require a compactification to the closed ball $\cBR$. 

Knowing the asymptotic behaviour of the Gibbs measure allows for estimating the asymptotic behaviour of its moments. Indeed, the partial integrals along the tangent spaces to $B(0,R)$ are standard Gaussian integrals, but we have no closed form for the integral on $B(0,R)$, essentially due to the integration domain. On the other hand, it can be proved that every iterated partial derivative of $g$ is a polynomial function of the moments $\EV[J^{\otimes k}]$ of the Gibbs measure. In particular, the Hessian curvature tensor of $g$ can be directly expressed as a rational function of $g$, $Dg$ and $D^2 g$ (as explained in Section~\ref{secno:HessGeo}) and can therefore be expressed as a rational function of the moments of order $2$ to $4$.
\com{Si on veut donner une expression exacte ici, il faut introduire plus tôt les familles exponentielles, et donner le lien entre leurs cumulants et leurs moments.}
In other words, the asymptotic behaviour of the Hessian curvature tensor can be directly related to the asymptotic behaviour of the Gibbs distribution, instead of working with the expression of the Hessian curvature for an arbitrary generalised temperature.

Rather than studying the limit of the Gibbs measure, we will adopt a mixed approach: we will use the exact contribution of $\zint$ to the Hessian geometry, but we will consider the limiting behaviour of $\nu_{\beta, \omega}$ and the asymptotic contributions of $\zrot$ to the Hessian geometry. Indeed, the expression $\zint = \lp \frac{2\pi m}{\beta} \rp^{\frac32}$ is very easy to handle, while the measure $e^{-\frac12 \beta m (v-\omega\cdot q)^2} \frac{|\d^3 v|}{\Zint}$ is centred at $\omega\cdot q$ which generically diverges as $\omega \to \infty$.

Given parameters $(\beta, \omega)\in \Gibbs$ with $\omega\neq 0$, and using the associated cylindrical coordinates for $B(0,R)$, the density $\nu_{\beta, \omega}$ takes the form
\begin{equation}
	\nu_{\beta, \omega}
	\frac{e^{\beta m \frac12 (\omega\cdot q)^2 }
							|\d^3 q|}
		{\Zrot(\beta\omega^2)}		
			= 
		\frac{e^{\beta m \frac12 \omega^2 \rho^2 }
							\rho\d\rho\d\theta\d y}
		{\Zrot(\beta\omega^2)}		
\end{equation}
Although the formula seemingly only depends on $\beta\omega^2$, one should keep in mind that the cylindrical coordinate system itself depends on $\omega$ and therefore the measure does depend on the direction of $\omega$ (a point of $\setP (\so_3)$).

Studying the limit of Gibbs measures is usually done with Laplace's method for approximating integrals. A situation similar to ours was considered in~\cite{LaplaceMethodProba} but they worked on the whole Euclidean space, while here an appropriate framework would be compact manifolds with boundary.
Our situation falls outside of the most standard case of application of Laplace's method because, as we will see, the amplitude vanishes at the maximum of the exponential term.

The proof will rely on the following asymptotic equivalent, which is the first order term given by Watson's Lemma~\cite{AsymptoticMethods, AppliedAsymptoticAnalysis}:
\begin{lemma}\label{lmno:LaplaceApprox}
	Let $A\in \Rpp$, $\alpha\in (-1, +\infty)$ and $F : [0,A]\to \setR$ a continuous function.
	Then 
	\[
		\lambda^{\alpha + 1} e^{-\lambda A} \int_0^A F(x)(A-x)^\alpha e^{\lambda x} \d x 
			\to_{\lambda \to +\infty}
		F(A) \Gamma(\alpha+1)
	\]
\end{lemma}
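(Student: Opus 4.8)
The plan is to reduce the statement to the standard Watson's Lemma normal form by two changes of variables and then conclude with the dominated convergence theorem. First I would substitute $t = A - x$, which moves the maximum of the exponential factor from the right endpoint $x = A$ to the left endpoint $t = 0$ and turns the increasing exponential into a decaying one. Under this substitution the integral becomes $e^{\lambda A}\int_0^A F(A-t)\,t^\alpha e^{-\lambda t}\,\d t$, so that the factor $e^{\lambda A}$ cancels the prefactor $e^{-\lambda A}$ and the quantity to study simplifies to $\lambda^{\alpha+1}\int_0^A F(A-t)\,t^\alpha e^{-\lambda t}\,\d t$.

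Next I would rescale by setting $s = \lambda t$. Then $\d t = \lambda^{-1}\,\d s$ and $t^\alpha = \lambda^{-\alpha} s^\alpha$, and the powers of $\lambda$ combine as $\lambda^{\alpha+1}\cdot\lambda^{-\alpha}\cdot\lambda^{-1} = 1$, leaving exactly $\int_0^{\lambda A} F(A - s/\lambda)\,s^\alpha e^{-s}\,\d s$. Rewriting this as an integral over $(0,+\infty)$ against the indicator $\mathbf{1}_{\{s \leqslant \lambda A\}}$, I claim the integrand converges pointwise, as $\lambda \to +\infty$, to $F(A)\,s^\alpha e^{-s}$: for each fixed $s$ the cutoff eventually contains $s$, and $F(A - s/\lambda) \to F(A)$ by continuity of $F$ at $A$.

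To pass the limit inside the integral I would invoke dominated convergence. Since $F$ is continuous on the compact interval $[0,A]$, it is bounded, say $|F| \leqslant M$, so the integrand is dominated by $M\,s^\alpha e^{-s}$ uniformly in $\lambda$. This dominating function is integrable on $(0,+\infty)$ precisely because $\alpha > -1$: near $0$ the factor $s^\alpha$ is integrable thanks to $\alpha + 1 > 0$, while near $+\infty$ the exponential decay $e^{-s}$ dominates the polynomial growth. Dominated convergence then yields the limit $\int_0^\infty F(A)\,s^\alpha e^{-s}\,\d s = F(A)\,\Gamma(\alpha+1)$, which is the assertion.

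The only genuinely delicate point is the integrability at the origin, where the hypothesis $\alpha > -1$ is essential: it is exactly what guarantees both the finiteness of the dominating integral and the convergence of $\Gamma(\alpha+1)$, and everything else is routine bookkeeping. It is worth noting that the pointwise limit uses only the continuity of $F$ at the single point $A$, while the domination uses its boundedness, itself a consequence of continuity on the whole compact interval $[0,A]$.
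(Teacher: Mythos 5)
Your proposal is correct and follows essentially the same route as the paper's proof: the reflection $t = A-x$, the rescaling $s = \lambda t$, and dominated convergence against $\lVert F\rVert_\infty\, s^\alpha e^{-s}$, with the hypothesis $\alpha > -1$ used exactly where the paper uses it. Nothing to add.
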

\begin{proof}
	It is a matter of integral manipulation:
	\[\begin{aligned}
		\int_0^A F(x)(A-x)^\alpha e^{\lambda x} \d x 
			&= 
				\int_0^A F(A-x) x^\alpha e^{\lambda (A-x)} \d x\\
			&=
				e^{\lambda A} \int_0^A F(A-x) x^\alpha e^{-\lambda x} \d x\\
			&=
				e^{\lambda A} \int_0^{\lambda A}
					F \lp A - \frac x \lambda \rp
					\frac{x^\alpha}{\lambda^\alpha} e^{-x} \frac{\d x}{\lambda}\\
			&=
				\frac{e^{\lambda A}}{\lambda^{\alpha+1}} \int_{\Rpp}
					\mathds{1}_{[0, \lambda A]}
					F \lp A - \frac x \lambda \rp
					{x^\alpha} e^{-x} {\d x}				
	\end{aligned}\]
	The integrand is uniformly dominated by $\lVert F \rVert_{\infty} x^{\alpha} e^{-x} \d x$
	and converges pointwise to $F(A)x^\alpha e^{-x} \d x$ as $\lambda \to +\infty$. Therefore
	\[
		{\lambda^{\alpha+1}} {e^{-\lambda A}} 
			\int_0^A F(x)(A-x)^\alpha e^{\lambda x} \d x 
		\to_{\lambda \to +\infty} 
		F(A) \Gamma(\alpha+1)
	\]	
\end{proof}

We can now give the high velocity limit of $\nu_{\beta, \omega}$ against functions which extend to $\cBR$:
\begin{theorem}[Weak limit of the Gibbs measures at high velocity]\label{thmno:GibbsLimit}
	Let $f : \cBR \to \setR$ be a continuous function and $\omega_0 \in \so_3\smallsetminus\{0\}$ used to define cylindrical coordinates. 
	Then
	\begin{equation}
		\int_{B(0,R)} f \nu_{\beta, \omega}
			\underset{
				\substack{
					\beta\omega^2\to \infty\\
					\omega \in \setR\omega_0}}
			{\longrightarrow}
			\int_{\{R\}\times \setS^1 \times \{0\}} f \frac{\d\theta}{2\pi}
	\end{equation}
	with $\{R\}\times \setS^1 \times \{0\}$ defining a cylindrical parametrisation of the large circle of radius $R$ in the rotation plane of $\omega_0$.
\end{theorem}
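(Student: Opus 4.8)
The plan is to reduce this weak limit to a one–dimensional Laplace–type asymptotic governed by Lemma~\ref{lmno:LaplaceApprox}. Set $\lambda := \tfrac12\beta m\omega^2$, which tends to $+\infty$ as $\beta\omega^2\to\infty$ with $\omega\in\setR\omega_0$. In the cylindrical coordinates $(y,\theta,\rho)$ adapted to $\omega_0$ one has
\[
	\int_{B(0,R)} f\,\nu_{\beta,\omega}
	= \frac{\displaystyle\int_N f\, e^{\lambda\rho^2}\,\rho\,\d\rho\,\d\theta\,\d y}
	       {\displaystyle\int_N e^{\lambda\rho^2}\,\rho\,\d\rho\,\d\theta\,\d y},
\]
the denominator being exactly $\Zrot(\beta\omega^2)$. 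Both integrals have the same shape, so it is enough to obtain the leading asymptotics of the numerator for an arbitrary continuous $f$ and then form the quotient.

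First I would substitute $x=\rho^2$ (so that $\rho\,\d\rho=\tfrac12\d x$ and $x$ runs over $[0,A]$ with $A:=R^2$) and rescale the axial coordinate by $y=\sqrt{A-x}\,s$, $s\in[-1,1]$, keeping the $\theta$–integration inside the amplitude. This last substitution is exactly what exposes the shrinking of the fibre measure near the maximum: the numerator becomes
\[
	\frac12\int_0^A F(x)\,(A-x)^{1/2}\,e^{\lambda x}\,\d x,
	\qquad
	F(x):=\int_{\setS^1}\!\int_{-1}^{1} f\bigl(\sqrt{A-x}\,s,\theta,\sqrt x\bigr)\,\d s\,\d\theta .
\]
The boundary value $x=A$ corresponds to $\rho=R$, $y=0$, i.e. to the equatorial circle; since $(\sqrt{A-x}\,s,\theta,\sqrt x)$ stays in $\cBR$ for all $x\in[0,A]$ and $s\in[-1,1]$, continuity of $f$ on the compact set $\cBR$ gives that $F$ is continuous on $[0,A]$, with $F(A)=2\int_{\setS^1} f(0,\theta,R)\,\d\theta$.

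The resulting integral is now precisely of the form handled by Lemma~\ref{lmno:LaplaceApprox} with $\alpha=\tfrac12$, so
\[
	\lambda^{3/2}e^{-\lambda A}\,\frac12\int_0^A F(x)(A-x)^{1/2}e^{\lambda x}\,\d x
	\to_{\lambda\to+\infty}
	\tfrac12\,F(A)\,\Gamma(\tfrac32).
\]
Applying this to the numerator and, with $f\equiv1$ (for which $F\equiv 4\pi$), to the denominator, the common prefactor $\tfrac12\lambda^{-3/2}e^{\lambda A}\Gamma(\tfrac32)$ cancels in the quotient and leaves
\[
	\int_{B(0,R)} f\,\nu_{\beta,\omega}
	\longrightarrow
	\frac{F(A)}{4\pi}
	= \frac{1}{2\pi}\int_{\setS^1} f(0,\theta,R)\,\d\theta
	= \int_{\{R\}\times\setS^1\times\{0\}} f\,\frac{\d\theta}{2\pi},
\]
which is the claimed average over the large circle. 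The main delicacy is not the Laplace asymptotics as such, but the degeneracy of the amplitude at the maximum: the fibre measure over $\{\rho=\mathrm{const}\}$ vanishes like $\sqrt{A-x}$ as $\rho\to R$, which forces the exponent $\alpha=\tfrac12$ and rules out a naive Laplace/stationary-point heuristic. Correspondingly, the argument hinges on $F$ extending continuously to $x=A$, and it is precisely here that the passage to the compactification $\cBR$ is needed.
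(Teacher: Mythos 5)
Your proposal is correct and follows essentially the same route as the paper: reduce to the radial marginal, substitute $x=\rho^2$ to obtain a Watson-type integral with amplitude vanishing like $(R^2-x)^{1/2}$, apply Lemma~\ref{lmno:LaplaceApprox} with $\alpha=\tfrac12$, and divide by the $f\equiv 1$ case. The only (cosmetic but welcome) difference is your rescaling $y=\sqrt{R^2-x}\,s$, which makes the continuity of the averaged amplitude $F$ at the endpoint $x=R^2$ an immediate consequence of uniform continuity of $f$ on $\cBR$, a point the paper's definition of $F$ (with a separately prescribed value at $\rho=R$) leaves to the reader.
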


\begin{proof}
	Let us define the following average of $f$
	\[
		F : 
		\begin{cases}
			\rho\in [0,R[ &\mapsto \frac{1}{2\sqrt{R^2-\rho^2}}\int_{\setS^1\times [-\sqrt{R^2-\rho^2}, \sqrt{R^2-\rho^2}]} f(\rho, \theta, y) \d \theta\d y\\
			R&\mapsto \int_{\setS^1} f(R, \theta, 0) \d \theta 
		\end{cases}
	\]
	which is continuous and bounded.
	
	We want to identify the limit of the following integral:
	\[\begin{aligned}
		\int_{B(0,R)} f(\rho, \theta, y)  e^{\beta m \frac12 \omega^2 \rho^2} \rho\d \rho\d \theta\d y
			&= \int_0^R  \lp
				\int_{\setS^1\times [-\sqrt{R^2-\rho^2}, \sqrt{R^2-\rho^2}]} f(\rho, \theta, y)
					e^{\beta m \frac12 \omega^2 \rho^2} \d \theta\d y  \rp
				\rho \d \rho \\
			&= \int_0^R 2 \sqrt{R^2 - \rho^2} F(\rho) e^{\beta m \frac12 \omega^2 \rho^2} \rho \d \rho\\
			&\kern-0.5em\underset{\rho=x^2}{=} \int_0^{R^2} \sqrt{R^2 - x} F(\sqrt x) e^{\beta m \frac12 \omega^2 x} \d x\\
	\end{aligned}\]
	According to Lemma~\ref{lmno:LaplaceApprox} there is a convergence
\[\begin{tikzcd}
	\lp \frac{\beta m \omega^2} 2 \rp^{\frac32} e^{-\beta m \frac12\omega^2 R^2}
		\displaystyle\int_0^{R^2} \sqrt{R^2 - x} F(\sqrt x) e^{\beta m \frac12 \omega^2 x} \d x
	\ar[d, "\substack{\beta\omega^2 \to \infty\\ \omega\in \setR \omega_0}", shift right = 20]
	\\
		\Gamma\lp \frac32 \rp F(R)
		= \Gamma\lp \frac32 \rp \displaystyle\int_{\{R\}\times \setS^1 \times \{0\}} f {\d\theta}
\end{tikzcd}
\]
	The partition function $\Zrot$ is this very integral for $f=1$, and we can conclude
	\[
		\int_{B(0,R)} f \nu_{\beta, \omega}
		\xrightarrow[\substack{\beta\omega^2 \to \infty\\ \omega\in \setR \omega_0}]{}			
		\int_{\{R\}\times \setS^1 \times \{0\}} f \frac{\d\theta}{2\pi}
	\]	
\end{proof}
We see that the proof essentially proceeded by identifying the weak limit $\rho_* \nu_{\beta, \omega} \tobo \delta_R$ against continuous functions on $[0,R]$.


We now discuss the higher covariant derivatives of $z$ for a general family of probability distributions of the same form as $\rho_*\nu_{\beta, \omega}$.

\subsection{Cumulants of a curved exponential family of rank $1$}\label{secno:CurvedExp}\label{secno:CurvExp}
\com{Vérifier la terminologie !}

In this section, we derive general formulas for every higher differential of the logarithmic partition function $z$ for a class of probability distributions that have a similar form to $\rho_*\nu_{\beta, \omega}$ introduced in Section~\ref{secno:highvelocity}. This section works independently from the previous sections and is meant to state in generality the results we will need in Section~\ref{secno:CurvAsympt}.

Let $N$ be a topological space with a Borelian measure $\nu$.
Let $f : N \to \setR$ be a continuous \emph{bounded} function.
Let $\Gibbs$ be a smooth manifold with a smooth function $\theta : \Gibbs \to \setR$ which is everywhere regular%
\footnote{This hypothesis is not essential but allows for the convenient notation $\frac{\dd}{\dd\theta(\xi)}$.}
and such that for every $\xi \in \Gibbs$, the positive measure
\[
	e^{\theta(\xi) f} \nu 
\]
has finite total mass. Assume furthermore that $\Gibbs$ is equipped with a flat torsion-less affine connection $D$.

The logarithmic characteristic function is defined as follows:
\begin{equation}
	z(\xi) := \ln\lp \int_N 	e^{\theta(\xi) f} \nu \rp
\end{equation}
It is used to define the following family of probability measures
\[
	\nu_\xi := 	e^{\theta(\xi) f-z(\xi)} \nu 
\]
which will be called \enquote{curved exponential family of rank $1$}.
	
We are interested in the expression the higher covariant differentials of $z$.

\begin{lemma}\label{lmno:Cinf}
	The function $z$ is of class $\Cinf$ on $\Gibbs$.
\end{lemma}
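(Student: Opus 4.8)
The plan is to reduce the claim to the smoothness of a single real-variable function and then to invoke differentiation under the integral sign. The crucial simplification comes from the hypothesis that $f$ is bounded: fix $B>0$ with $|f|\le B$ on $N$.

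First I would observe that $\nu$ is necessarily a \emph{finite} measure. Indeed, pick any $\xi_0\in\Gibbs$; since $e^{\theta(\xi_0)f}\ge e^{-|\theta(\xi_0)|B}$ pointwise, the assumed finiteness of $\int_N e^{\theta(\xi_0)f}\nu$ forces $\nu(N)\le e^{|\theta(\xi_0)|B}\int_N e^{\theta(\xi_0)f}\nu<\infty$. This is the single real input that makes the rest routine, and without it neither the domination below nor the well-definedness of the probabilities $\nu_\xi$ would hold.

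Next I would note that $z$ factors through $\theta$. Writing $\zeta(t):=\ln\left(\int_N e^{tf}\nu\right)$, one has $z=\zeta\circ\theta$. Since $\theta$ is everywhere regular it is an open map, so its image $U:=\theta(\Gibbs)\subseteq\setR$ is open, and $\zeta(t)$ is finite for each $t\in U$ (it equals $\int_N e^{\theta(\xi)f}\nu$ for any $\xi$ with $\theta(\xi)=t$). As $\theta$ and $\ln$ are smooth, it suffices to prove that $Z_0(t):=\int_N e^{tf}\nu$ is $\Cinf$ on $U$ and strictly positive. Positivity is immediate from $Z_0(t)\ge e^{-|t|B}\nu(N)>0$. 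For smoothness I would apply the standard differentiation-under-the-integral theorem on each compact subinterval $[a,b]\subseteq U$: for every $k\ge 0$ and every $t\in[a,b]$ the $k$-th $t$-derivative of the integrand is controlled by $|f(x)^k e^{tf(x)}|\le B^k e^{\max(|a|,|b|)B}$, a constant and hence $\nu$-integrable because $\nu$ is finite. This uniform domination licenses both continuity and the interchange of derivative and integral at every order, giving $Z_0\in\Cinf(U)$ with $Z_0^{(k)}(t)=\int_N f^k e^{tf}\nu$. Composing, $z=(\ln Z_0)\circ\theta$ is $\Cinf$ on $\Gibbs$.

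The argument is essentially mechanical once boundedness of $f$ is exploited, so there is no serious obstacle; the only point needing care is the automatic finiteness of $\nu$ established in the first step. Note also that the flat torsion-less connection $D$ plays no role in this lemma — smoothness is a purely differentiable-manifold property — and $D$ will enter only when the covariant differentials $D^k z$ are computed in the sequel.
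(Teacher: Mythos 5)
Your proof is correct and is essentially the same argument as the paper's: differentiation under the integral sign, with the boundedness of $f$ supplying the local uniform domination of all derivatives of the integrand. Your version is slightly more carefully organised — you factor $z$ through $\theta$ to reduce to a one-variable problem and explicitly record that $\nu$ must be a finite measure — whereas the paper differentiates directly with respect to $\xi$, but the core mechanism is identical.
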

\begin{proof}
	The function $(\xi, x)\in \Gibbs\times N \mapsto e^{\theta(\xi)f(x)}$ is differentiable at any order with respect to $\xi$.
	Any $n$-th derivative can be expressed as $P_\xi(f)e^{\theta(\xi)}$ with $P_\xi$ a polynomial function of degree $n$ with coefficients depending on $\xi$.
	As a consequence, it is uniformly integrable against $\nu$ in the neighbourhood of every $\xi\in \Gibbs$, which allows concluding that $e^z$, hence $z$, is of class $\Cinf$.
\end{proof}

\begin{lemma}
	For all integers $k$, the covariant tensor $D^k z \in \Gamma(T^* \Gibbs^{\otimes k})$ is totally symmetric.
\end{lemma}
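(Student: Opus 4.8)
The plan is to reduce the statement to the classical commutativity of iterated partial derivatives by passing to flat coordinates. Since $(\Gibbs, D)$ is an affine manifold, around every point there exist local flat coordinates $(x^i)$ in which the connection coefficients vanish identically, as recalled just after the definition of affine manifold. I would work in such a chart: there the covariant differential of a covariant tensor reduces to the coordinate-wise partial derivative, since every Christoffel symbol is zero. Iterating, the components of $D^k z$ in the flat chart are exactly
\[
	(D^k z)_{i_1 \cdots i_k} = \dd_{i_1} \cdots \dd_{i_k} z .
\]

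Next I would invoke Lemma~\ref{lmno:Cinf}, which guarantees that $z$ is of class $\Cinf$. Schwarz's theorem then shows that the mixed partial derivatives $\dd_{i_1}\cdots\dd_{i_k} z$ are invariant under every permutation of $i_1, \dots, i_k$, so the component array of $D^k z$ is totally symmetric in this chart. Since total symmetry of a covariant tensor is a chart-independent property (it can be tested on coordinate vector fields and extended by multilinearity), this establishes that $D^k z$ is totally symmetric as a global section of $T^*\Gibbs^{\otimes k}$.

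I do not expect a genuine obstacle here; the content of the argument is located entirely in its two hypotheses, both already available. Torsion-freeness and flatness of $D$ are what permit the connection coefficients to be annihilated on a whole neighbourhood (torsion-freeness alone already yields the symmetry of the Hessian $D^2 z$, via $D^2 z(X,Y) - D^2 z(Y,X) = -T(X,Y)\,z$, where $T$ denotes the torsion), and the smoothness of $z$ is precisely what allows the partial derivatives to be commuted. As an alternative to the coordinate computation, one could argue inductively and coordinate-free, starting from the $k=2$ case above and propagating symmetry through the flatness and torsion-freeness identities; but the flat-chart argument is the most economical.
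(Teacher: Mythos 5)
Your proof is correct, but it takes a genuinely different route from the paper. The paper argues coordinate-free: it first establishes the Ricci-type identity $D^2_{X,Y}T - D^2_{Y,X}T = \lp [D_X,D_Y] - D_{[X,Y]}\rp T = 0$ for any tensor $T$, valid because $D$ is curvature-free and torsion-free, and then runs a recursion — if $D^n z$ is totally symmetric, then $D^{n+1}z$ is symmetric in its first two entries (by the identity above) and in its last $n$ entries (since $D$ applied to a symmetric tensor preserves that symmetry), and these two symmetries generate the full symmetric group. Your flat-chart argument instead reduces everything to Schwarz's theorem applied to the $\Cinf$ function $z$ (Lemma~\ref{lmno:Cinf}), using that in a chart where the connection coefficients vanish the components of $D^k z$ are literally $\dd_{i_1}\cdots\dd_{i_k}z$, and that total symmetry is a tensorial, chart-independent property. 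Both are complete. What your version buys is brevity and concreteness; what it costs is an appeal to the local normal form theorem for affine manifolds (the existence of flat coordinates is itself an integrability statement resting on flatness and torsion-freeness, which the paper does recall but whose proof is nontrivial). The paper's version works directly from the connection axioms and makes explicit exactly where each hypothesis enters; it is also the form that generalises most readily if one only has the vanishing of curvature and torsion without a preferred atlas. You correctly flag the coordinate-free induction as an alternative at the end of your proposal — that alternative is precisely the paper's proof.
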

\begin{proof}
	Since $D$ is curvature-free and torsion-free, for any tensor $T$ the following holds:
	\[\begin{aligned}
		D^2_{X,Y}T - D^2_{Y,X} T
			&= \lp D_XD_Y -D_{D_X Y} - D_Y D_X + D_{D_Y X} \rp T\\
			&= \lp [D_X, D_Y] -D_{[X,Y]} \rp T\\
			&= 0	
	\end{aligned}
	\]
	This allows a direct proof by recursion of the property: if $D^n z$ is totally symmetric then $D^{n+1} z$ is symmetric in the $2$ first entries, as well as in the $n$ last entries, which implies its total symmetry.
\end{proof}

\subsubsection{Faà di Bruno's formula}

Let us give a general formula for the iterated covariant derivative of a composite function $\Gibbs \to \setR \to \setR$.
There is a general formula known as \emph{Faà di Bruno}'s formula for the higher differentials of a composition of functions between vector spaces (under suitable topological hypotheses)~\cite{HighDerivativesComposite}. Since we are ultimately interested in the case when $(\Gibbs,D)$ is an affine manifold, we could justify using Fraenkel's formula as is. Let us however give an alternative recursive proof, that does not rely on the affine structure (using Taylor polynomials) but is formulated using the covariant differential.

Our proof of the covariant Faà di Bruno formula relies on the following version of the Leibniz formula:
\begin{lemma}\label{lmno:iXSymOp}
	Let $\Gibbs$ be a manifold, $\alpha\in \Omega^1(\Gibbs)$ and $D$ an affine connection on $\Gibbs$ such that for all integers $k$ the tensor $D^k \alpha$ is totally symmetric.
	
	
	For any two integers $n,j \geqslant1$ and any vector field $X$ the following formula holds:
	\begin{multline}\label{eqno:LeibnizSymOp}
		i_X\SymOp \lp
			\sum_{\substack{k : \llbracket 1, j +1\rrbracket\to\setN\\
					\sum k_i = n+1}}
				\frac{1}{\prod_i ((k_i+1) !)} 
				 D^{k} \alpha^{\otimes j+1}
				\rp\\
		= (j+1)\alpha(X) 
			\SymOp \lp
				\sum_{\substack{k : \llbracket 1, j \rrbracket\to\setN\\
						\sum k_i = n+1}}
					\frac{1}{\prod_i ((k_i+1) !)} 
			 D^{k} \alpha^{\otimes j}
					\rp
			+ D_X 
			\SymOp \lp
				\sum_{\substack{k : \llbracket 1, j +1\rrbracket\to\setN\\
						\sum k_i = n}}
					\frac{1}{\prod_i ((k_i+1) !)} 
			 D^{k} \alpha^{\otimes j+1}
						\rp
	\end{multline}
	with $D^{k}$ the differential operator $D^{k_1}\otimes D^{k_2}\dots \otimes D^{k_j}$.
\end{lemma}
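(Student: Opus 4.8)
The plan is to recast both sides in terms of the unnormalised symmetric product $\dottimes$ of Section~\ref{secno:notations}, which turns the statement into a clean identity between two derivations. Writing, for integers $j\geqslant 1$ and $m\geqslant 0$,
\[
	\Sigma_j^m := \SymOp\lp
		\sum_{\substack{k : \llbracket 1, j \rrbracket\to\setN\\ \sum k_i = m}}
		\frac{1}{\prod_i ((k_i+1)!)}\, D^{k_1}\alpha\otimes\cdots\otimes D^{k_j}\alpha
	\rp,
\]
the defining relation $\beta\dottimes\gamma = \frac{\SymOp}{(\deg\beta)!(\deg\gamma)!}\beta\otimes\gamma$, together with the associativity of $\dottimes$ and the fact that $D^{k}\alpha\in\Sym^{k+1}T^*\Gibbs$ (which is exactly the symmetry hypothesis), lets me rewrite
\[
	\Sigma_j^m = \sum_{\substack{k : \llbracket 1, j \rrbracket\to\setN\\ \sum k_i = m}} D^{k_1}\alpha\dottimes\cdots\dottimes D^{k_j}\alpha .
\]
In this notation the claim reads $i_X\Sigma_{j+1}^{n+1} = (j+1)\,\alpha(X)\,\Sigma_j^{n+1} + D_X\Sigma_{j+1}^n$.

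The two structural facts I would establish first are that, on symmetric tensors, both the covariant derivative $D_X$ and the interior product $i_X$ act as derivations of $\dottimes$, with coefficient $1$ owing to the \emph{unnormalised} convention. For $D_X$ this is immediate from its commutation with $\SymOp$ and the Leibniz rule on $\otimes$; for $i_X$ it is the standard fact that contraction with a fixed vector is a degree $-1$ derivation of the symmetric algebra, which one checks on generators (e.g.\ $\alpha\dottimes\alpha$) to carry no extra factorial factor here. I also record the two reduction identities $i_X\alpha = \alpha(X)$ and, for $k\geqslant 1$, $i_X D^k\alpha = D_X D^{k-1}\alpha$; the latter is just the definition $D^k\alpha = D(D^{k-1}\alpha)$ read through the total symmetry of $D^k\alpha$, which permits placing $X$ in the differentiation slot.

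Applying $i_X$ termwise to $\Sigma_{j+1}^{n+1}$ and using the derivation property gives, for each composition $k$ with $\sum_{i=1}^{j+1}k_i = n+1$, a sum over the position $l$ of the factor into which $X$ is inserted, which I would split according to whether $k_l = 0$ or $k_l\geqslant 1$. When $k_l=0$ the factor is $\alpha$ and $i_X$ produces the scalar $\alpha(X)$, leaving a product of the remaining $j$ factors whose orders still sum to $n+1$; summing over the $j+1$ admissible positions $l$ reconstitutes $(j+1)\,\alpha(X)\,\Sigma_j^{n+1}$. When $k_l\geqslant 1$, the reduction identity replaces the $l$-th factor by $D_X D^{k_l-1}\alpha$; reindexing by $k_l' = k_l-1$ turns the family of such terms into the full Leibniz expansion of $D_X\lp D^{k_1'}\alpha\dottimes\cdots\dottimes D^{k_{j+1}'}\alpha\rp$ over compositions $k'$ of $n$ into $j+1$ parts, which sums to $D_X\Sigma_{j+1}^n$. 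Adding the two contributions yields the stated identity.

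The computation is short once the $\dottimes$-reformulation is in place; the delicate points, and the only places where care is genuinely needed, are (i) confirming that $i_X$ is a derivation of $\dottimes$ with no stray factorial factors, which rests on the unnormalised definition, and (ii) the combinatorial reindexing in the $k_l\geqslant 1$ case, where one must verify that the pairs (position $l$, shifted composition $k'$) are in bijection with the original compositions $k$ satisfying $k_l\geqslant 1$, so that the termwise sum assembles exactly into the single covariant derivative $D_X\Sigma_{j+1}^n$ with no term double counted or omitted.
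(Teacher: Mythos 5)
Your proposal is correct and follows the same decomposition as the paper's own proof: the insertion of $X$ is split according to whether it lands on a factor $\alpha$ (with $k_l=0$, producing the $(j+1)\alpha(X)$ term after reindexing over the $j$-part compositions of $n+1$) or on a factor $D^{k_l}\alpha$ with $k_l\geqslant 1$ (producing, after the shift $k_l\mapsto k_l-1$, the Leibniz expansion of $D_X$ over the $(j+1)$-part compositions of $n$). The only difference is presentational: by rewriting the normalised symmetrised sums as $\dottimes$-products and invoking the fact that $i_X$ is a derivation of the unnormalised symmetric product, you absorb the cancellation of the factor $(k_l+1)$ against $\frac{1}{(k_l+1)!}$ that the paper tracks explicitly through its slot-counting argument with $i_X^{[\lambda_l]}$ — a valid and somewhat cleaner packaging of the same computation.
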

\begin{remark}
	The combinatorial factor in the sum can be understood as follows.
	Given a partition $(k_i)_{1\leqslant i \leqslant j}$ of $n$ we define the \emph{increased} partial sums $\lp s_i = i + \sum_{l=0}^i k_l\rp_{0\leqslant i \leqslant j}$ so that the entries labelled by $\llbracket s_{i-1} + 1, s_i \rrbracket$ in $D^{k_1}\alpha \otimes \cdots \otimes D^{k_j} \alpha$ correspond to the factor $D^{k_i}\alpha$.
	Let us define the group $G_k\subset \mathcal S_{j+n}$ which is generated by
	$\Pi_{1\leqslant i \leqslant j} \mathcal S_{\llbracket s_{i-1} + 1, s_i \rrbracket}$
	and by the involutions exchanging $\llbracket s_{i-1} + 1, s_i\rrbracket$ with $\llbracket s_{j-1} + 1, s_{j} \rrbracket$ when $k_i=k_j$.
	The hypothesis that $D^k \alpha$ is totally symmetric for every $k$ implies that the tensor $D^k \alpha^{\otimes j}$ is invariant by permutation under $G_k$.
	
	The partitions which are a permutation of $(k_i)$ are exactly the $(k_{\sigma(i)})$ with $\sigma \in \mathcal S_j$. Their stabilisers in $\mathcal S_j$ have cardinal $\frac{|G_k|}{\prod_i ((k_i+1) !)}$ and these partitions form a set of cardinal $\frac{j! \prod_i ((k_i+1)!)}{|G_k|}$.
	The \enquote{minimal} symmetrising of $D^k \alpha^j$ (in a naive sense) would be
	\[\begin{aligned}
		\sum_{\sigma \in \mathcal S_{j+n}/G_k} 
			\sigma \cdot D^k \alpha^{\otimes j}
			&=
		\frac{1}{|G_k|}
		\SymOp D^k \alpha^{\otimes j}\\
			&= 
		\frac{1}{|G_k|}\SymOp
		\frac{|G_k|}{j!\prod_i ((k_i+1) !)}
			\sum_{u\in \mathcal S_j\cdot k}
			 D^u \alpha^{\otimes j}\\
			&= \frac{\SymOp}{j!} 
			\sum_{u\in \mathcal S_j\cdot k}
			\frac{1}{\prod_i ((u_i+1) !)}
			 D^u \alpha^{\otimes j}
	\end{aligned}\]
	In other words, up to a global factor of $j!$ the sums of Lemma~\ref{lmno:iXSymOp} are sums over the \emph{unsorted partitions} (partitions up to permutation) of naively symmetrised tensors.
\end{remark}
\begin{proof}
The proof will essentially rely on the reindexing of partitions. We will use the notation $i^{[m]}_X$ to denote the contraction with $X$ of the $m$-th entry of a covariant tensor, e.g. $i^{[1]}_X = i_X$. We start with the following commutation formula which holds on covariant $(n+j+2)$-tensors:
\[
	i_X \circ \SymOp = \SymOp \circ 
					{\sum_{m=1}^{n+j+2} i_X^{[m]}}
\]
Let $(k_i)$ be any $(j+1)$-partition of $n+1$. Writing $s_i$ for the associated partial sums $i + \sum_{l=0}^i k_l$, we define $\lambda_i = s_{i-1}+1$. Note the equivalence between $\lambda_i = \lambda_{i+1}$ and $k_i = 0$.
We can decompose the sum
\[\begin{aligned}
	\sum_{m=1}^{n+j+2} i_X^{[m]}
			 D^{k} \alpha^{\otimes j+1}
	&= 
	\sum_{m\suchthat\exists l, m=\lambda_l=\lambda_{l+1}} i_X^{[m]}
				 D^{k} \alpha^{\otimes j+1}
	+
	\sum_{m\suchthat\not\exists l, m=\lambda_l=\lambda_{l+1}}  i_X^{[m]}
				 D^{k} \alpha^{\otimes j+1}\\
	&= 
	\sum_{l\suchthat k_l=0} i_X^{[\lambda_l]}
				 D^{k} \alpha^{\otimes j+1}
	+
	\sum_{l\suchthat k_l\neq 0} (k_l+1) i_X^{[\lambda_l]}
				 D^{k} \alpha^{\otimes j+1}\\
	&= 
	\sum_{l\suchthat k_l=0}
			\alpha(X) 
				 D^{\check k} \alpha^{\otimes j}
	+
	\sum_{l\suchthat k_l\neq 0} (k_l+1) i_X^{[\lambda_l]}
				 D^{k} \alpha^{\otimes j+1}\\
\end{aligned}\]
writing $\check{k}$ for the $j$-partition of $n$ which is $k$ without the vanishing rank $l$ term (it is dependent on $l$).
 
We identify the sum over the partitions of the first term:
\[
\begin{aligned}
	\sum_{\substack{k : \llbracket 1, j +1\rrbracket\to\setN\\
				\sum k_i = n+1\\
				l\suchthat k_j=0}}
		\frac{\alpha(X)}{\prod_i ((k_i+1) !)} 
		 D^{\check k} \alpha^{\otimes j+1}
	&= 
	\sum_{\substack{\check{k} : \llbracket 1, j\rrbracket\to\setN\\
				\sum {\check k}_i = n+1\\
				0 \leqslant l\leqslant j}}
		\frac{\alpha(X)}{\prod_i (\check{k}_i+1) !} 
	 D^{\check k} \alpha^{\otimes j+1}\\
	&= (j+1)	\alpha(X) 
		\sum_{\substack{\check{k} : \llbracket 1, j\rrbracket\to\setN\\
				\sum \check{k_i} = n+1}}
			\frac{1}{\prod_i (\check{k}_i+1) !} 
		 D^{\check k} \alpha^{\otimes j+1}
\end{aligned}
\]
For the second term, we write $\tilde k_i$ for the partition with the $l$-th entry reduced by one and we have the following substitution
\[
\begin{aligned}
	\sum_{\substack{k : \llbracket 1, j +1\rrbracket\to\setN\\
				\sum k_i = n+1\\
				j\suchthat k_j\neq 0}}
			\frac{(k_l+1) i_X^{[\lambda_l]}}
			{\prod_i ((k_i+1) !)}
			 D^{k} \alpha^{\otimes j+1}	
	&= 
	\sum_{\substack{\tilde k : \llbracket 1, j +1\rrbracket\to\setN\\
				\sum \tilde k_i = n\\
				1 \leqslant l \leqslant j +1}}
			\frac{i_X^{[\lambda_l]}}
			{\prod_i (\tilde k_i+1) !}
			 D^{(\tilde k_i + \delta_{il})} \alpha^{\otimes j+1}	
	\\
	&= 
	\sum_{\substack{\tilde k : \llbracket 1, j +1\rrbracket\to\setN\\
				\sum \tilde k_i = n}}
			\frac{D_X}
			{\prod_i (\tilde k_i+1) !}
			 D^{\tilde k} \alpha^{\otimes j+1}		
\end{aligned}
\]
using the standard Leibniz rule for the covariant derivative of a tensor product.
In order to obtain the second term of Equation~\eqref{eqno:LeibnizSymOp}, it is enough to realise that $D_X$ commutes with $\SymOp$, which allows to conclude.
\end{proof}

The expression for $D^n z$ will be made more concise by using the multinomial coefficients, which we now introduce:
\begin{definition}[Multinomial coefficients]
	Given an ordered partition $(k_i)_{1\leqslant i \leqslant j}$ of an integer $n$, the number of $\llbracket 1, j \rrbracket$-indexed partitions of $\llbracket 1, n \rrbracket$ with the $i$-th part of size $k_i$ is denoted by the following \emph{multinomial coefficient}:
	\[
		\binom{n}{k_1, \ \dots \ , k_j}
		=
		\frac{n!}{\prod\limits_{1\leqslant i \leqslant j} (k_i !)}
	\]
\end{definition}

\begin{proposition}\label{propno:Dnz}
	Let $\Gibbs$ be a smooth manifold equipped with a connection $D$, $\U$ be an open subset of $\setR$ and $\theta : \Gibbs\to \U$ and $\zeta : \U \to \setR$ smooth functions.
	Define $z = \zeta\circ \theta$ and assume that for every integer $n$ the tensor $D^n z$ is totally symmetric.
	Let us adopt the notation $\frac{\dd^k z}{\dd \theta(\xi)^k}$ for the function $\zeta^{(k)}\circ \theta$ where $\zeta^{(k)}$ is the standard iterated derivative of $\zeta$.
	
	Then the iterated covariant differential of $z$ of any order $n\geqslant 1$ is given by the following formula:
	\begin{equation}
		D^n z
		= \sum_{j=1}^{n}
			\frac{1}{j!}\frac{\dd^j z}{\dd \theta(\xi)^j}
			\frac{\SymOp}{n!} \lp
				\sum_{\substack{k : \llbracket 1, j\rrbracket \to\setN\\
						\sum k_i = n-j}}
					\binom{n}{k_1+1, \ \dots\ , k_j+1} 
			 D^{k} (\d\theta^{\otimes j})
					\rp
	\end{equation}
\end{proposition}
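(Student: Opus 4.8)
The plan is to prove the formula by induction on $n$, reading it as a covariant Faà di Bruno formula and feeding the combinatorial Leibniz identity of Lemma~\ref{lmno:iXSymOp} into the inductive step. First I would record a cleaner form of the claim. Since $\binom{n}{k_1+1,\dots,k_j+1} = \frac{n!}{\prod_i (k_i+1)!}$, the prefactor $\frac{1}{n!}$ cancels and, by linearity of $\SymOp$, the proposed right-hand side reads $\sum_{j=1}^n \frac{1}{j!}\frac{\dd^j z}{\dd \theta(\xi)^j}\, S_{n,j}$, where I set
\[
    S_{N,j} := \SymOp \lp \sum_{\substack{k : \llbracket 1,j\rrbracket\to\setN\\ \sum k_i = N-j}} \frac{1}{\prod_i (k_i+1)!}\, D^{k}(\d\theta^{\otimes j}) \rp .
\]
This is a symmetric covariant $N$-tensor, with the conventions $S_{N,0}=0$ for $N\geqslant 1$ and $S_{N,j}=0$ for $j>N$, both forced by the constraint $\sum k_i = N-j\geqslant 0$. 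Note that $\d\theta = D\theta$, so $D^{k}(\d\theta)=D^{k+1}\theta$; the flatness and torsion-freeness of $D$ make these totally symmetric for every $k$ (the same computation as in the lemma establishing the symmetry of $D^n z$), which is exactly the hypothesis required to apply Lemma~\ref{lmno:iXSymOp} with $\alpha=\d\theta$.

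The next step is to reindex that lemma into the language of $S_{N,j}$. Matching degrees ($D^{k_i}\d\theta$ being a $(k_i+1)$-tensor), the identity~\eqref{eqno:LeibnizSymOp} with $\alpha=\d\theta$ becomes, for every $1\leqslant j\leqslant N$ and every vector field $X$,
\[
    i_X S_{N,j} = j\, \d\theta(X)\, S_{N-1,j-1} + D_X S_{N-1,j},
\]
where $i_X$ denotes contraction of the first slot. The endpoint cases $j=1$ and $j=N$ are covered by the vanishing conventions and can be checked directly, using $S_{N,1}=D^N\theta$ and $S_{N,N}=\SymOp(\d\theta^{\otimes N})$, so that $i_X S_{N,1}=D_X S_{N-1,1}$ and $i_X S_{N,N}=N\,\d\theta(X)S_{N-1,N-1}$.

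For the base case $n=1$ one has $Dz=\zeta'(\theta)\,\d\theta$, and the right-hand side reduces to $\frac{\dd z}{\dd \theta(\xi)}S_{1,1}=\frac{\dd z}{\dd \theta(\xi)}\d\theta$, so the formula holds. For the inductive step I would start from the defining relation $i_X D^n z = D_X(D^{n-1}z)$. Applying the induction hypothesis and the Leibniz rule, with $D_X\lp \frac{\dd^j z}{\dd \theta(\xi)^j}\rp = \frac{\dd^{j+1} z}{\dd \theta(\xi)^{j+1}}\,\d\theta(X)$, yields
\[
    D_X(D^{n-1}z) = \sum_{j=1}^{n-1}\frac{1}{j!}\frac{\dd^{j+1} z}{\dd \theta(\xi)^{j+1}}\,\d\theta(X)\,S_{n-1,j} + \sum_{j=1}^{n-1}\frac{1}{j!}\frac{\dd^j z}{\dd \theta(\xi)^j}\,D_X S_{n-1,j}.
\]
On the other hand, contracting the \emph{proposed} formula for $D^n z$ with $X$ and applying the reindexed lemma gives a sum of terms $j\,\d\theta(X)S_{n-1,j-1}$ and $D_X S_{n-1,j}$. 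In the first of these, the shift $i=j-1$ turns $\frac{j}{j!}=\frac{1}{(j-1)!}$ into the coefficient of $\frac{\dd^{i+1}z}{\dd\theta(\xi)^{i+1}}$, and the spurious term $i=0$ drops out because $S_{n-1,0}=0$; in the second, the term $j=n$ drops out because $S_{n-1,n}=0$. The two resulting sums match $D_X(D^{n-1}z)$ term by term, so $i_X D^n z = i_X\lp \sum_{j}\frac{1}{j!}\frac{\dd^j z}{\dd\theta(\xi)^j}S_{n,j}\rp$ for every $X$, whence the two (symmetric) tensors coincide.

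I expect the main obstacle to be purely the bookkeeping: correctly aligning the index ranges, reconciling the multinomial normalisation $\binom{n}{k_1+1,\dots}/n!$ with the $1/\prod_i(k_i+1)!$ appearing in Lemma~\ref{lmno:iXSymOp}, and verifying that exactly the two boundary terms $S_{n-1,0}$ and $S_{n-1,n}$ vanish so that the reindexed sums close up. The genuine combinatorial content — the reindexing of partitions and the commutation of $i_X$ with $\SymOp$ — is already isolated in that lemma, so the proposition follows once the translation into the $S_{N,j}$ notation is carried out cleanly.
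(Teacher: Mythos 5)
Your proposal is correct and follows essentially the same route as the paper: an induction on $n$ whose inductive step feeds Lemma~\ref{lmno:iXSymOp} into the identity $i_X D^n z = D_X D^{n-1}z$, with the multinomial prefactor cancelled against $1/n!$ and the boundary terms handled by the vanishing of the empty-partition sums. Your $S_{N,j}$ notation and the "contract the claimed formula and compare" phrasing are only cosmetic repackagings of the computation the paper carries out directly.
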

\begin{proof}
	The proof proceeds by recursion. The case $n=1$ is straightforward. 
	
	Let $X$ be a vector field on $\Gibbs$ and assume that the formula holds for a given $n$.
		
	We compute $i_X D^{n+1} z = D_X D^n z$, using the expression of the multinomial coefficient in terms of factorials:
	\[
	\begin{aligned}
		&D_X D^n z 
			= 	D_X \sum_{j=1}^{n}
				\frac{\dd^j z}{\dd \theta(\xi)^j}
				\frac{\SymOp}{j!} \lp
						\sum_{\substack{k : \llbracket 1, j\rrbracket \to \setN\\
							\sum k_i = n-j}}
							\frac{1}{\prod_i ((k_i+1) !)} 
						 	D^{k} (\d\theta^{\otimes j})
							\rp\\
			&= 	\sum_{j=1}^{n}
				D_X	\lp \frac{\dd^j z}{\dd \theta(\xi)^j} \rp
				\frac{\SymOp}{j!} \lp
						\sum_{\substack{k : \llbracket 1, j\rrbracket\to \setN\\
							\sum k_i = n-j}}
							\frac{1}{\prod_i ((k_i+1) !)} 
						 	D^{k} (\d\theta^{\otimes j})
							\rp
				+
				\frac{\dd^j z}{\dd \theta(\xi)^j}
				\frac{D_X\SymOp}{j!} \lp
						\sum_{\substack{k : \llbracket 1, j\rrbracket\to \setN\\
							\sum k_i = n-j}}
							\frac{1}{\prod_i ((k_i+1) !)} 
						 	D^{k} (\d\theta^{\otimes j})
							\rp\\
			&= 	\sum_{j=1}^{n}
				\frac{\dd^{j+1} z}{\dd \theta(\xi)^{j+1}} X(\theta)
				\frac{\SymOp}{j!} \lp
						\sum_{\substack{k : \llbracket 1, j\rrbracket\to \setN\\
							\sum k_i = n-j}}
							\frac{1}{\prod_i ((k_i+1) !)} 
						 	D^{k} (\d\theta^{\otimes j})
							\rp
				+
				\frac{\dd^j z}{\dd \theta(\xi)^j}
				\frac{D_X\SymOp}{j!} \lp
						\sum_{\substack{k : \llbracket 1, j\rrbracket\to \setN\\
							\sum k_i = n-j}}
							\frac{1}{\prod_i ((k_i+1) !)} 
						 	D^{k} (\d\theta^{\otimes j})
							\rp\\
			&= 	\frac{\dd^{n+1} z}{\dd \theta(\xi)^{n+1}}
					X(\theta)(\d\theta^{n})
				+ \frac{\dd z}{\dd \theta(\xi)}
				D_X D^{n}\theta
				\\ & \qquad
	\begin{multlined}
				+\sum_{j=2}^{n}
				\frac{\dd^j z}{\dd \theta(\xi)^j}
				\left[
				X(\theta) 
				\frac{j\SymOp}{j!} \lp
						\sum_{\substack{k : \llbracket 1, j-1\rrbracket\to \setN\\
							\sum k_i = n-j+1}}
							\frac{1}{\prod_i ((k_i+1) !)} 
						 	D^{k} (\d\theta^{\otimes (j-1)})
							\rp
				+ \right.\\ \left.
			\frac{D_X\SymOp}{j!} \lp
						\sum_{\substack{k : \llbracket 1, j\rrbracket\to \setN\\
							\sum k_i = n-j}}
							\frac{1}{\prod_i ((k_i+1) !)} 
						 	D^{k} (\d\theta^{\otimes j})
							\rp
				\right]	
				\end{multlined}
				\\
			&= \sum_{j=1}^{n+1}
				\frac{\dd^j z}{\dd \theta(\xi)^j}
					i_X 
					\frac{\SymOp}{j!} \lp
						\sum_{\substack{k : \llbracket 1, j\rrbracket \to\setN\\
							\sum k_i = n+1-j}}
							\frac{1}{\prod_i ((k_i+1) !)} 
						 	D^{k} (\d\theta^{\otimes j})
							\rp
	\end{aligned}
	\]
	The last identity used Lemma~\ref{lmno:iXSymOp}.
	This proves that Equation~\eqref{eqno:Dnz} holds at order $n+1$ and concludes the proof.
\end{proof}

Proposition~\ref{propno:Dnz} has a higher rank version: one simple way to obtain it is to replace $\theta$ by a map to a vector space $\setR^m$ and $f$ by a map to the dual vector space ${\setR^m}^*$.

\subsubsection{Cumulants}

In the case at hand, $z$ is the logarithm of a partition function and its iterated derivatives are known as the \emph{cumulants}:
\begin{definition}[Cumulants]
Let $n$ be a positive integer. The \emph{$n$-th cumulant} of $f$ is defined as follows:
\begin{equation}
	c_n : \xi\in \Gibbs \mapsto \frac{\dd^n z(\xi)}{\dd \theta (\xi)^n}
\end{equation}
\end{definition}
\begin{example}
	The first cumulant is the expected value of $f$, according to the following calculation:
		\begin{equation}\label{eqno:dz}
			\d \ln \lp \int_N e^{\theta f} \nu \rp
			= \frac{\int_N f e^{\theta f} \nu}{\int_N e^{\theta f} \nu} \d \theta
			= \EV_\xi[f] \d\theta
		\end{equation}
	The second cumulant is the variance of $f$.
\end{example}

We now want to express the cumulants as expected values.
It will be convenient to use the following notation for the expected value:
\begin{equation}
	\EV_\xi : g \mapsto \EV_\xi[g] = \int_N g \nu_\xi
\end{equation}
The measure $\nu_\xi$ is differentiable with respect to $\xi$ in the following sense:
\begin{lemma}\label{lmno:dExi}
	Define on $N \times \Gibbs$ the following function: 
	\begin{equation}
		\hf(x, \xi) := f(x) - \EV_\xi[f]
	\end{equation}
	The function $\hf$ is continuous on $N \times \Gibbs$ and for every $x\in N$ the partial function
	$\xi\in \Gibbs \mapsto f(x) - \EV_\xi[f]$ is of class $\Cinf$ on $\Gibbs$.
	
	Furthermore for any bounded continuous function $g$ on $N$, the function
	$\xi\in \Gibbs \mapsto \EV_\xi[g]$ is of class $\Cinf$ and its differential takes the form
	\begin{equation}
		\d \EV_\xi[g] = \EV[\hf g]\d \theta(\xi)
	\end{equation}
\end{lemma}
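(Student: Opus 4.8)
The plan is to reduce everything to differentiation under the integral sign, exactly as in the proof of Lemma~\ref{lmno:Cinf}. For a bounded continuous function $h$ on $N$, write $A_h(\xi) := \int_N h\, e^{\theta(\xi)f}\nu$; in particular $Z(\xi) := A_1(\xi) = e^{z(\xi)}$ is the partition function, and $\EV_\xi[h] = A_h(\xi)/Z(\xi)$.

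First I would establish that $A_h$ is of class $\Cinf$ on $\Gibbs$ for every bounded continuous $h$. Fix $\xi_0$ and a relatively compact neighbourhood. Since $f$ is bounded, each iterated $\xi$-derivative of the integrand is of the form $h\, P_\xi(f)\, e^{\theta(\xi)f}$ with $P_\xi$ a polynomial whose coefficients stay bounded on the neighbourhood, and the factor $e^{\theta(\xi)f} = e^{(\theta(\xi)-\theta(\xi_0))f}\, e^{\theta(\xi_0)f}$ is dominated by a constant multiple of the $\nu$-integrable function $e^{\theta(\xi_0)f}$. This uniform domination licenses differentiation under the integral at every order, so $A_h \in \Cinf$; with $h=1$ this recovers the smoothness of $Z$. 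Since $Z > 0$, the quotient $\EV_\xi[h] = A_h/Z$ is $\Cinf$. Taking $h=f$ shows that $\xi \mapsto \EV_\xi[f]$ is $\Cinf$, whence $\hf(x,\xi) = f(x) - \EV_\xi[f]$ is continuous on $N\times\Gibbs$ (being the sum of a function continuous in $x$ and one smooth in $\xi$) and, for fixed $x$, smooth in $\xi$.

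For the differential, differentiating under the integral and using $\d(e^{\theta(\xi)f}) = f\, e^{\theta(\xi)f}\,\d\theta(\xi)$ for fixed $x$ gives $\d Z = A_f\,\d\theta(\xi)$ and $\d A_g = A_{fg}\,\d\theta(\xi)$ (here $fg$ is again bounded continuous, so $A_{fg}$ is covered by the previous step). The quotient rule then yields
\[
	\d\,\EV_\xi[g]
		= \frac{Z\, \d A_g - A_g\, \d Z}{Z^2}
		= \lp \frac{A_{fg}}{Z} - \frac{A_g}{Z}\,\frac{A_f}{Z} \rp \d\theta(\xi)
		= \lp \EV_\xi[fg] - \EV_\xi[g]\,\EV_\xi[f]\rp \d\theta(\xi).
\]
Finally I would rewrite the bracket as a covariance: $\EV_\xi[fg] - \EV_\xi[g]\,\EV_\xi[f] = \EV_\xi\!\left[g\,(f-\EV_\xi[f])\right] = \EV_\xi[\hf\, g]$, which is exactly the asserted formula $\d\,\EV_\xi[g] = \EV[\hf g]\,\d\theta(\xi)$.

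The only genuinely technical point is the uniform domination justifying differentiation under the integral; the boundedness of both $f$ and $g$ together with the finiteness of $\int_N e^{\theta(\xi_0)f}\nu$ makes the dominating function $e^{\theta(\xi_0)f}$ work at every order, so this is bookkeeping rather than a real obstacle. Everything else is the quotient rule and the elementary identity converting the difference of moments into the expectation of $\hf g$.
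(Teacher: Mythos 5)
Your proposal is correct and follows essentially the same route as the paper: differentiation under the integral sign justified by the boundedness of $f$ (locally uniform domination, as in the proof of Lemma~\ref{lmno:Cinf}), followed by the product/quotient rule to turn $\EV_\xi[fg]-\EV_\xi[f]\EV_\xi[g]$ into $\EV_\xi[\hf g]$. The only cosmetic difference is that you apply the quotient rule to $A_g/Z$ where the paper applies the Leibniz rule to $e^{-z(\xi)}\int_N g\,e^{\theta(\xi)f}\nu$, which is the same computation.
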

\begin{proof}
	Let $g$ be any bounded continuous function on $N$.
	Since the iterated partial derivatives of $g(x)e^{\theta(\xi)f}$ with respect to $\xi$ are all bounded on $N$, with a bound locally uniform in $\xi$ (see the proof of Proposition~\ref{lmno:Cinf}), the parametrised integral
	\[
		\xi\in \Gibbs \mapsto \int_N g e^{\theta(\xi)f} \nu
	\]
	defines a $\Cinf$ function. In particular
	\[
		\EV_\xi[g] = e^{-z(\xi)} \int_N g e^{\theta(\xi)f} \nu
	\]
	is of class $\Cinf$. This holds in particular for $g=f$, which implies that the partial functions $\hf(\xi, \argdot)$ are of class $\Cinf$ for any $x\in N$.
	The differential of the parametrised integral is readily calculated:
	\[
		\d \lp \int_N g e^{\theta(\xi)f} \nu \rp =
		\lp \int_N g f e^{\theta(\xi)f} \nu \rp \d \theta
	\]
	and using Formula~\eqref{eqno:dz} one obtains
	\[
		\d e^{-z(\xi)} = \EV[f]e^{-z(\xi)} \d \theta
	\]
	which allows concluding the proof by applying the Leibniz formula to differentiate $\EV_\xi[g]$.
\end{proof}

Since we will have to consider expected values of functions parametrised by $\xi$, Lemma~\ref{lmno:dExi} needs to be generalised as follows:
\begin{lemma}\label{lmno:dExigen}
	Let $A$ denote the subset of functions $g: N\times \Gibbs \to \setR$ which satisfy the following set of properties:
	\begin{enumerate}
		\item $g$ is continuous on $N\times \Gibbs$,
		\item $g$ depends only on $x$ and $\theta(\xi)$,
		\item for every $x\in N$, the function $g(x, \argdot)$ is $\Cinf$,
		\item the partial derivatives of $g$ at every order are continuous on $N\times \Gibbs$,
		\item for every $\xi \in \Gibbs$ and $m\in \setN$ there exists a neighbourhood $\U$ of $\xi$ such that $\frac{\dd^m g}{\dd \theta(\xi)^m}$ is bounded on $N \times \U$.
	\end{enumerate}
	
	Then $A$ is an algebra stable under the operation $\frac{\dd}{\dd \theta(x)}$ and for all $g\in A$ the following property holds:
	\[
		\d \EV_\xi[g] = \EV_\xi \left[\hf g + \frac{\dd g}{\dd \theta(\xi)}\right]\d \theta(\xi)
	\]
\end{lemma}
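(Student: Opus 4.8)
The statement splits into two independent claims: that $A$ is an algebra stable under $\frac{\dd}{\dd\theta(\xi)}$, which is a formal verification, and the differentiation formula, which is an instance of differentiation under the integral sign. The observation that makes all the analysis elementary is that $\nu$ is finite: since $f$ is bounded, $e^{\theta(\xi)f} \geqslant e^{-|\theta(\xi)|\,\lVert f\rVert_\infty}$, so the assumed finiteness of the mass of $e^{\theta(\xi)f}\nu$ forces $\nu(N) \leqslant e^{|\theta(\xi)|\,\lVert f\rVert_\infty}\int_N e^{\theta(\xi)f}\nu < \infty$. Hence on a compact range of parameter values every integrand encountered below is dominated by a constant, which is $\nu$-integrable.

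For the algebra claim, stability under sums and scalar multiples is immediate for all five defining properties. For products, the only points needing attention are (4) and (5); both follow from the Leibniz rule $\frac{\dd^m(gh)}{\dd\theta(\xi)^m} = \sum_k \binom{m}{k}\frac{\dd^k g}{\dd\theta(\xi)^k}\frac{\dd^{m-k}h}{\dd\theta(\xi)^{m-k}}$, since a finite sum of products of continuous (resp. locally bounded) functions is again continuous (resp. locally bounded). Stability under $\frac{\dd}{\dd\theta(\xi)}$ is simpler still: writing $g(x,\xi) = G(x,\theta(\xi))$ via property (2) and the regularity of $\theta$, we have $\frac{\dd g}{\dd\theta(\xi)} = \partial_t G(x,\theta(\xi))$, which again depends only on $x$ and $\theta(\xi)$ and is $\Cinf$ in $\xi$; its derivatives of order $m$ are those of $g$ of order $m+1$, hence continuous and locally bounded by (4)--(5) applied to $g$.

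For the differentiation formula I would follow the computation in the proof of Lemma~\ref{lmno:dExi}, working directly on $\Gibbs$. Since both $g$ (by (2)) and $e^{\theta(\xi)f}$ depend on $\xi$ only through $\theta(\xi)$, the differential at fixed $x$ of the integrand is
\[
	\d\lp g(x,\xi)\,e^{\theta(\xi)f(x)}\rp = \lp \frac{\dd g}{\dd\theta(\xi)} + gf\rp e^{\theta(\xi)f}\,\d\theta(\xi).
\]
On a neighbourhood $\U$ of any point this is dominated, locally uniformly, by a constant (by boundedness of $f$ and of $g,\frac{\dd g}{\dd\theta(\xi)}$ on $N\times\U$), hence by a $\nu$-integrable function, so one may differentiate under the integral sign to obtain $\d\!\lp\int_N g\,e^{\theta(\xi)f}\nu\rp = \lp\int_N (\frac{\dd g}{\dd\theta(\xi)}+gf)\,e^{\theta(\xi)f}\nu\rp\d\theta(\xi)$. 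Applying the Leibniz rule to $\EV_\xi[g] = e^{-z(\xi)}\int_N g\,e^{\theta(\xi)f}\nu$ together with $\d e^{-z(\xi)} = \EV_\xi[f]\,e^{-z(\xi)}\,\d\theta(\xi)$ from Lemma~\ref{lmno:dExi} then gives
\[
	\d\EV_\xi[g] = \lp\EV_\xi\!\left[\frac{\dd g}{\dd\theta(\xi)}+gf\right] - \EV_\xi[f]\,\EV_\xi[g]\rp\d\theta(\xi) = \EV_\xi\!\left[\frac{\dd g}{\dd\theta(\xi)}+\hf g\right]\d\theta(\xi),
\]
where the last equality uses $\hf = f-\EV_\xi[f]$ and that $\EV_\xi[f]$ may be pulled out of the expectation. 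This recovers Lemma~\ref{lmno:dExi} when $g$ is independent of $\xi$.

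The only genuine obstacle is the justification of differentiation under the integral sign, but it is disarmed by the finiteness of $\nu$ and the local boundedness of $f$, $g$ and $\frac{\dd g}{\dd\theta(\xi)}$, which together supply a constant dominating function; everything else is bookkeeping with the Leibniz and chain rules. It is worth noting why the algebra structure is recorded alongside the formula: since $\hf\in A$ (its derivatives of positive order are functions of $\theta(\xi)$ alone, and $f$ is bounded), the right-hand argument $\hf g + \frac{\dd g}{\dd\theta(\xi)}$ again lies in $A$, so the formula may be applied recursively, which is precisely what is needed to reach the higher cumulants.
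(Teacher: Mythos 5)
Your proof is correct and follows exactly the route the paper intends: the paper dismisses this lemma with \enquote{The proof is straightforward}, deferring implicitly to the method of Lemma~\ref{lmno:dExi}, which you reproduce and extend, and your opening observation that $\nu$ is necessarily finite (so constants serve as dominating functions) is precisely what makes the differentiation under the integral sign and the algebra verifications routine. The only blemish is that you quote $\d e^{-z(\xi)} = \EV_\xi[f]\,e^{-z(\xi)}\,\d\theta(\xi)$ with the sign as (mis)printed in the paper's proof of Lemma~\ref{lmno:dExi}, whereas your subsequent displayed computation correctly uses $\d e^{-z(\xi)} = -\EV_\xi[f]\,e^{-z(\xi)}\,\d\theta(\xi)$, which is what produces the $-\EV_\xi[f]\,\EV_\xi[g]$ term and hence the centred variable $\hf$ in the final formula.
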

The proof is straightforward.
We will use the following recursion property in order to express the cumulants:
\begin{proposition}\label{propno:fn}
	We define by recursion the following sequence of functions on $\Gibbs\times N$: 
	\[
	\begin{cases}
		f_1 := f\\
		f_2 := \hf^2\\
		f_{n+1} := \hf f_n + \frac{\dd f_n}{\dd \theta(\xi)} \qquad \text{for } n\geqslant 2
	\end{cases}
	\]
	For every $n$, $f_n$ belongs to the algebra $A$ defined in Lemma~\ref{lmno:dExigen}.
	Moreover, they satisfy the following relations:
	\[
		\forall n \in \setN, n\geqslant 1, \,
		\forall\xi \in \Gibbs, \,
			\EV_\xi[f_n] = c_n(\xi)
	\]
\end{proposition}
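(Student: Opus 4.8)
The plan is to prove the two assertions simultaneously by induction on $n$, the engine being the differentiation rule of Lemma~\ref{lmno:dExigen} together with the tautology $c_{n+1} = \frac{\dd c_n}{\dd\theta(\xi)}$, which is immediate from the definition $c_n = \zeta^{(n)}\circ\theta$ of the cumulants.

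First I would check that each $f_n$ lies in the algebra $A$. The function $f_1 = f$ belongs to $A$ because it is continuous, bounded, and independent of $\xi$; and $\hf = f - \EV_\xi[f]$ belongs to $A$ since, by Lemma~\ref{lmno:dExi}, the map $\xi\mapsto \EV_\xi[f]$ is $\Cinf$ and depends on $\xi$ only through $\theta(\xi)$ (because $\nu_\xi$ does). Then $f_2 = \hf^2 \in A$ as $A$ is an algebra, and for $n\geq 2$ the recursion $f_{n+1} = \hf f_n + \frac{\dd f_n}{\dd\theta(\xi)}$ keeps us inside $A$, since $A$ is an algebra stable under $\frac{\dd}{\dd\theta(\xi)}$. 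This gives $f_n\in A$ for all $n$, which is exactly what is needed to apply Lemma~\ref{lmno:dExigen} to each $f_n$.

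For the identity $\EV_\xi[f_n]=c_n$, the base case $n=1$ is Equation~\eqref{eqno:dz}, which reads $c_1 = \EV_\xi[f] = \EV_\xi[f_1]$. The inductive step for $n\geq 2$ is then immediate: since $f_n \in A$, Lemma~\ref{lmno:dExigen} and the recursion give
\[
	c_{n+1} = \frac{\dd c_n}{\dd\theta(\xi)} = \frac{\dd}{\dd\theta(\xi)}\EV_\xi[f_n] = \EV_\xi\left[\hf f_n + \frac{\dd f_n}{\dd\theta(\xi)}\right] = \EV_\xi[f_{n+1}].
\]

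The one delicate point — and the main obstacle, modest as it is — is the passage to $n=2$, because $f_2$ is defined as $\hf^2$ rather than by the recursion formula (which would read $\hf f_1 + \frac{\dd f_1}{\dd\theta(\xi)} = \hf f$, as $f$ is independent of $\xi$). To reconcile the two I would compute $c_2$ directly: Lemma~\ref{lmno:dExigen} applied to $g=f$ gives $c_2 = \frac{\dd}{\dd\theta(\xi)}\EV_\xi[f] = \EV_\xi[\hf f]$, and then the centering identity $\EV_\xi[\hf] = 0$ (valid because $\nu_\xi$ is a probability measure) lets me replace $f$ by $f - \EV_\xi[f] = \hf$, so that $c_2 = \EV_\xi[\hf^2] = \EV_\xi[f_2]$. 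With this base case secured, the induction above runs for all $n\geq 2$ and completes the proof.
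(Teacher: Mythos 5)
Your proposal is correct and follows essentially the same route as the paper's own proof: membership of $f$ and $\hf$ in $A$ propagated through the recursion, the base case $c_1=\EV_\xi[f]$, differentiation of $\EV_\xi[f_n]$ via Lemma~\ref{lmno:dExigen}, and the same use of $\EV_\xi[\hf]=0$ to reconcile $f_2=\hf^2$ with the $n=1$ step. Your treatment of the $n=2$ transition is in fact slightly more explicit than the paper's, but the argument is identical in substance.
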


\begin{proof}
	It is clear that $f$ and $\hf$ belong to $A$ therefore a straightforward recurrence implies that $f_n\in A$ for every $n$.
	
	We have already stated that $c_1(\xi) = \EV_\xi[f]$.
	Since by definition $c_{n+1} = \frac{\dd c_n}{\dd \theta(\xi)}$, the relation will be proved by differentiating the parametric integral $\EV_\xi[f_n]$ according to Lemma~\ref{lmno:dExigen}:
	\[
		\d \EV_\xi[f_n]
			= \EV_\xi\left[ \hf f_n + \frac{\dd f_n}{\d\theta(\xi)} \right] \d\theta(\xi)
	\]	
	For $n\geqslant 2$ this gives 
	$\frac{\dd \EV_\xi[f_n]}{\dd\theta(\xi)} = \EV_\xi\left[ f_{n+1} \right]$. 
	For $n=1$, since $\EV_\xi[\hf]=0$ we can replace
	\[
		\EV_\xi\left[ \hf f\right] \d\theta(\xi)
			= \EV_\xi\left[ \hf^2\right] \d\theta(\xi)
			= c_2\d\theta(\xi)
	\]
	This allows concluding by recursion.	
\end{proof}
\begin{remark}
	It may be more natural to define the sequence which has the cumulants as expected values using the recursion relation $f_{n+1}:= \hf f_n + \frac{\dd f_n}{\dd \theta(\xi)}$ from $n=1$ but it will be more convenient to have expressions only involving $\hf$.
\end{remark}
We now obtain a version of the Faà di Bruno formula with expected values as coefficients:
\begin{theorem}\label{thmno:DnzEVfj}
	The iterative covariant derivatives of $z$ can be expressed as follows:
	\begin{equation}\label{eqno:Dnz}
		D^n z
		= \sum_{j=1}^{n}
			\frac{1}{j!}\EV_\xi[f_j]
			\frac{\SymOp}{n!} \lp
				\sum_{\substack{k : \llbracket 1, j\rrbracket \to\setN\\
						\sum k_i = n-j}}
					\binom{n}{k_1+1, \ \dots\ , k_j+1} 
			 D^{k} (\d\theta^{\otimes j})
					\rp
	\end{equation}
	with $f_j$ as defined in Proposition~\ref{propno:fn}.
\end{theorem}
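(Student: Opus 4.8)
The plan is to obtain the statement as a direct substitution into the covariant Faà di Bruno formula already established in Proposition~\ref{propno:Dnz}. The only points requiring attention are to check that the hypotheses of that proposition are met in the present setting, and then to identify the scalar coefficients $\frac{\dd^j z}{\dd \theta(\xi)^j}$ appearing there with the expected values $\EV_\xi[f_j]$.

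First I would observe that, by its very definition, $z(\xi) = \ln\lp \int_N e^{\theta(\xi) f} \nu \rp$ depends on $\xi$ only through the value $\theta(\xi)$, so that $z = \zeta \circ \theta$ with $\zeta : t \mapsto \ln\lp \int_N e^{t f}\nu \rp$. Lemma~\ref{lmno:Cinf} guarantees that $z$, hence $\zeta$, is smooth, and the total symmetry of $D^n z$ for every $n$ was established earlier. Thus the hypotheses of Proposition~\ref{propno:Dnz} are satisfied, and applying it with this $\theta$ and $\zeta$ yields
\[
	D^n z
	= \sum_{j=1}^{n}
		\frac{1}{j!}\frac{\dd^j z}{\dd \theta(\xi)^j}
		\frac{\SymOp}{n!} \lp
			\sum_{\substack{k : \llbracket 1, j\rrbracket \to\setN\\
					\sum k_i = n-j}}
				\binom{n}{k_1+1, \ \dots\ , k_j+1}
		 D^{k} (\d\theta^{\otimes j})
				\rp.
\]

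It then remains to rewrite the scalar factors. By the definition of the cumulants one has $\frac{\dd^j z}{\dd \theta(\xi)^j} = c_j(\xi)$, and Proposition~\ref{propno:fn} asserts precisely that $c_j(\xi) = \EV_\xi[f_j]$ for every $j \geqslant 1$. Substituting this identity term by term into the sum above produces Equation~\eqref{eqno:Dnz}, which completes the argument. I do not expect any genuine obstacle at this final step: the combinatorial and analytic substance is carried entirely by Proposition~\ref{propno:Dnz} (the covariant Faà di Bruno identity) and by Proposition~\ref{propno:fn} (the recursive construction of the $f_j$ whose expectations recover the cumulants), and the present theorem is merely their conjunction.
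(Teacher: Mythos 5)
Your proposal is correct and follows exactly the route the paper intends: Theorem~\ref{thmno:DnzEVfj} is stated as an immediate corollary of the covariant Faà di Bruno formula (Proposition~\ref{propno:Dnz}), whose hypotheses are supplied by Lemma~\ref{lmno:Cinf}, the regularity of $\theta$, and the total symmetry of the $D^n z$, combined with the identification $\frac{\dd^j z}{\dd\theta(\xi)^j} = c_j(\xi) = \EV_\xi[f_j]$ from Proposition~\ref{propno:fn}. Nothing further is needed.
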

Theorem~\ref{thmno:DnzEVfj} can be adapted for curved exponential families of higher rank. The expectation values of the $f_j$'s then take value in tensor products of ${\setR^m}^*$ while the tensors $D^k \d \theta$ take value in $\setR^m$ which allows for natural contractions.

\begin{example}[Differentials of $z$ of order up to $4$]\label{exno:D1234z}
	At low order, we have the expressions
	\begin{subequations}\label{eqnos:D1234z}
	\begin{align}
		D 	z	&= \EV_\xi[f] D \theta\\
		D^2 z	&= \EV_\xi[\hf^2] D\theta\otimes D\theta + \EV_\xi[f] D^2\theta\\
		D^3 z 	&= \EV_\xi[\hf^3]D \theta^{\otimes 3}
				  +	\EV_\xi[\hf^2] \frac{\SymOp}2 \lp D^2 \theta \otimes D \theta \rp
				  + \EV_\xi[f] D^3 \theta
		\\
		\begin{split}
		D^4	z	&= \lp \EV_\xi[\hf^4] - 3\EV_\xi[\hf^2]^2 \rp D\theta^{\otimes 4}
			  +	\EV_\xi[\hf^3] \frac{\SymOp}{8}
			  	\lp D^2 \theta \otimes D \theta \otimes D\theta \rp
			  \\ &\qquad\qquad
			  +	\EV_\xi[\hf^2] \SymOp \lp
			  		\frac{D^3\theta\otimes \d\theta}{6}
			  		+ \frac{D^2\theta\otimes D^2 \theta}{8} \rp
			  + \EV_\xi[f] D^4 \theta
  		\end{split}
	\end{align}
	\end{subequations}
\end{example}

These formulas are meant to be used to determine the high-velocity asymptotics of the Hessian curvature of the Gibbs set of the spherically confined ideal gas.
However, this is not a rank $1$ curved exponential family as discussed in the present section, but we explain why we can work as if it were.

\subsection{Curvature asymptotics at high velocity}\label{secno:CurvAsympt}

We propose to identify the asymptotic behaviour of the curvature tensor of $\Gibbs$ as $\beta\omega^2\to \infty$. More precisely, we will study the asymptotic behaviour of the \emph{Hessian curvature tensor} $K$ which was introduced in Section~\ref{secno:HessGeo}.

%
%
%
%

In order to obtain the asymptotic for $K$ we need to have the asymptotics for $g$, $D g$ and $D^2 g$, namely $D^2 z$, $D^3 z$ and $D^4 z$.
Let us start with the contribution from $\zint$: this is a direct calculation.
Recall the function $u = 2\beta^{-\frac12}$.
\begin{lemma}\label{lmno:D234zint}
	For every integer $n$,
	\begin{equation}\label{eqno:Dnzint}
		D^n \zint = (n-1)! \frac32 \beta^\frac{n/2} \d u^{\otimes n}
	\end{equation}
	In particular,
	\begin{subequations}\label{eqsno:D234zint}
	\begin{align}
		D^2 \zint &=  \frac32 \beta \d u \otimes \d u\\
		D^3 \zint &=  3 \beta^{3/2} \d u\otimes \d u\otimes \d u\\
		D^4 \zint &=  9 \frac32 \beta^2 \d u\otimes \d u\otimes \d u\otimes \d u
	\end{align}
	\end{subequations}
\end{lemma}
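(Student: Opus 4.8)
The plan is to exploit two facts: first, that $\zint$ is a function of the single coordinate $\beta$, and second, that $\beta$ is an affine coordinate for $D$. From Section~\ref{secno:GeoGibbsSet} we have $\zint(\beta) = \frac32\ln\bigl(\frac{2\pi m}{\beta}\bigr)$, so $\zint$ is constant along the $r$-directions of the flat coordinates $(\beta, r)$. Because these coordinates are flat for $D$, the one-form $\d\beta$ is $D$-parallel, $D\,\d\beta = 0$, and hence so is every tensor power $\d\beta^{\otimes k}$.

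With this, the first step is the reduction $D^n\zint = \zint^{(n)}(\beta)\,\d\beta^{\otimes n}$, where $\zint^{(n)}$ denotes the ordinary $n$-th derivative in $\beta$. I would prove it by a one-line induction: granting the formula at order $n$, the Leibniz rule for $D$ gives
\[
    D^{n+1}\zint = \d\bigl(\zint^{(n)}(\beta)\bigr)\otimes\d\beta^{\otimes n} + \zint^{(n)}(\beta)\,D\bigl(\d\beta^{\otimes n}\bigr) = \zint^{(n+1)}(\beta)\,\d\beta^{\otimes(n+1)},
\]
the last term vanishing by parallelism of $\d\beta$ and the first simplifying by the chain rule $\d\bigl(\zint^{(n)}(\beta)\bigr) = \zint^{(n+1)}(\beta)\,\d\beta$.

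The second step is the scalar computation together with the change to the $u$-coordinate. Differentiating $\zint = \mathrm{const} - \frac32\ln\beta$ and using $\frac{\d^{k}}{\d\beta^{k}}\beta^{-1} = (-1)^k k!\,\beta^{-(k+1)}$ yields $\zint^{(n)}(\beta) = \frac32(-1)^n(n-1)!\,\beta^{-n}$. From $u = 2\beta^{-1/2}$ one gets $\d u = -\beta^{-3/2}\,\d\beta$, hence $\d\beta = -\beta^{3/2}\,\d u$ and $\d\beta^{\otimes n} = (-1)^n\beta^{3n/2}\,\d u^{\otimes n}$. Substituting into the reduction formula,
\[
    D^n\zint = \frac32(-1)^n(n-1)!\,\beta^{-n}\cdot(-1)^n\beta^{3n/2}\,\d u^{\otimes n} = \frac32(n-1)!\,\beta^{n/2}\,\d u^{\otimes n},
\]
since the two sign factors cancel and $-n + \frac{3n}{2} = \frac n2$. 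The special cases $n = 2, 3, 4$ follow by substituting the scalar coefficient $(n-1)!\frac32\beta^{n/2}$.

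There is no real obstacle in this argument; it is indeed a direct calculation, as announced. The only point deserving attention is the sign bookkeeping, namely that the factor $(-1)^n$ produced by the repeated differentiation of $\beta^{-1}$ is exactly cancelled by the factor $(-1)^n$ coming from the $n$-th tensor power of $\d\beta = -\beta^{3/2}\,\d u$, which is what makes the final coefficient manifestly positive.
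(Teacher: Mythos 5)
Your proof is correct and follows exactly the route the paper intends: the paper presents this lemma as a \enquote{direct calculation} with no written proof, and your reduction $D^n\zint = \zint^{(n)}(\beta)\,\d\beta^{\otimes n}$ (using that $\d\beta$ is parallel for the flat connection in the coordinates $(\beta,r)$), followed by the scalar derivative $\zint^{(n)}(\beta)=\frac32(-1)^n(n-1)!\,\beta^{-n}$ and the substitution $\d\beta=-\beta^{3/2}\,\d u$, is precisely that calculation. Note only that your (correct) general formula gives $D^4\zint = 3!\cdot\frac32\,\beta^{2}\,\d u^{\otimes 4} = 9\,\beta^{2}\,\d u^{\otimes 4}$, so the coefficient $9\cdot\frac32$ in the paper's displayed $n=4$ case is a typo rather than a discrepancy in your argument.
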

%

We want to use the results from Section~\ref{secno:CurvedExp} in order to obtain the iterated covariant differentials of $\zrot$. According to Section~\ref{secno:GibbsDecomp}, $\zrot$ coincides (up to an eventual constant) to the partition function of the rank $1$-curved exponential family which we wrote $\rho_*\nu_{\beta, \omega}$. 
Let us adopt the notation
\[
	\Ebo [f] := \int_{B(0,R)} f \nu_{\beta, \omega}
\]
For $\omega\neq 0$, we define the following function, which hence has an implicit dependency on $\omega$:
\[
	\iota : q \in B(0,R) \mapsto \frac12 m \rho^2 = \frac12 m \frac{(\omega\cdot q)^2}{\omega^2}
\]
We also introduce $\hi = \iota - \Ebo[\iota]$. These two functions' dependency on $q$ factors through $\rho : B(0,R)\to [0,R)$.
We deduce immediately from Section~\ref{secno:CurvedExp} that $\frac{\dd^2 \zrot}{\dd (\beta\omega^2)^2} = \Ebo[\hi^2] \geqslant 0$ which proves the claim in Section~\ref{secno:GeoGibbsSet} that $I' = 2 \frac{\dd^2 \zrot}{\dd (\beta\omega^2)^2}$ is non-negative.
We will use the formulas from Example~\ref{exno:D1234z} to obtain expression for the iterative covariant differentials of $\zrot$ up to order $4$.
Let us first identify the iterated covariant differentials of $\beta\omega^2$:
\begin{lemma}\label{lmno:Dnbo}
	Let $D$ denote the flat derivative in the coordinates $(\beta, r)$. Then
\begin{equation}
		D^2 \omega = -\frac1\beta \d\beta \dottimes\d\omega
		\label{eqno:Ddom}
\end{equation}
	and for every integer $n\geqslant 0$, 
	\begin{equation}\label{eqno:Dnbo}
		D^{n+2} (\beta\omega^2)
			= \frac{(-1)^n}{\beta^{n-1}} \d \beta^{\dottimes n}\dottimes \dprod{\d\omega}{\d\omega}
	\end{equation}
	with $\dottimes$ standing for the symmetrised product of covectors (see Section~\ref{secno:notations}).
\end{lemma}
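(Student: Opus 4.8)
The plan is to work entirely in the flat coordinates $(\beta,r)$, in which $D$ is the ordinary derivative and both $\d\beta$ and the $\so_3$-valued one-form $\d r$ are parallel, treating $\omega=-r/\beta$ as a function of these coordinates. I would first prove \eqref{eqno:Ddom}. Differentiating the defining relation $r+\beta\omega=0$ gives $\d r=-\omega\,\d\beta-\beta\,\d\omega$, and applying $D$ once more, using $D\d r=D\d\beta=0$ and $D(1/\beta)=-\beta^{-2}\d\beta$, the terms proportional to $\d\beta\otimes\d\beta$ cancel, leaving $D\d\omega=D^2\omega=-\frac1\beta(\d\beta\otimes\d\omega+\d\omega\otimes\d\beta)=-\frac1\beta\,\d\beta\dottimes\d\omega$. (Equivalently this is a one-line component computation starting from $\omega^a=-r^a/\beta$.)

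Then I would prove \eqref{eqno:Dnbo} by induction on $n$. The base case $n=0$ is exactly $D^2(\beta\omega^2)=\beta\dprod{\d\omega}{\d\omega}$, already obtained in Section~\ref{secno:GeoGibbsSet} (there $D^2(\frac12\beta\omega^2)=\beta\tprod{\d\omega}{\d\omega}$, and $\dprod{\d\omega}{\d\omega}=2\tprod{\d\omega}{\d\omega}$). For the step I would compute $D^{n+3}(\beta\omega^2)=D\bigl(D^{n+2}(\beta\omega^2)\bigr)$ by feeding the inductive expression $\frac{(-1)^n}{\beta^{n-1}}\d\beta^{\dottimes n}\dottimes\dprod{\d\omega}{\d\omega}$ into the Leibniz rule. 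The differentiation either hits the scalar factor $\beta^{-(n-1)}$, producing $-(n-1)\beta^{-n}\d\beta$, or it hits one of the two $\d\omega$ legs of $\dprod{\d\omega}{\d\omega}$, where \eqref{eqno:Ddom} replaces that leg by $-\frac1\beta\d\beta\dottimes\d\omega$; the block $\d\beta^{\dottimes n}$ is untouched since $\d\beta$ is parallel. Because differentiation never breaks the $\so_3$-contraction, every resulting term is again of the shape ``$\tprod{\d\omega}{\d\omega}$ times a power of $\d\beta$''. Invoking the earlier lemma that $D^k$ of any function is totally symmetric (its proof used only flatness and torsion-freeness), $D^{n+3}(\beta\omega^2)$ is a totally symmetric tensor of that single monomial type, hence a scalar multiple of $\d\beta^{\dottimes(n+1)}\dottimes\dprod{\d\omega}{\d\omega}$; only the coefficient remains to be pinned down.

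The delicate point, and the one I expect to be the main obstacle, is the bookkeeping of the symmetric products together with the placement of the differentiation index. Because the covariant derivative inserts its index into a fresh slot, a naive Leibniz rule applied directly to the $\dottimes$-products over-counts and introduces an $n$-dependent error in the coefficient. The clean way to fix the constant is to extract a single ordered component: the coefficient of $\tprod{\d\omega}{\d\omega}$ sitting in the first two slots with $\d\beta$ in the remaining $n+1$ slots. For that monomial the contribution from differentiating $\beta^{-(n-1)}$ vanishes, since it forces $\d\beta$ into the derivative slot rather than a $\d\omega$-leg; whereas each of the $n+1$ ways of landing the freshly created $\d\omega$-leg into the first slot contributes a factor $-\frac1\beta$ through \eqref{eqno:Ddom}. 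Collecting, the coefficient gets multiplied by exactly $-(n+1)$ relative to order $n$, which reproduces the sign $(-1)^{n+1}$, the power $\beta^{-n}$, and the factorial normalisation built into the convention for $\dottimes$ from Section~\ref{secno:notations}. This closes the induction.
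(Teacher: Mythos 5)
Your proof is correct and follows the paper's overall strategy: the same derivation of \eqref{eqno:Ddom} from $D\d r=0$, the same base case imported from Section~\ref{secno:GeoGibbsSet}, and the same induction driven by the Leibniz rule together with \eqref{eqno:Ddom}. Where you genuinely diverge is in how the inductive step is closed. The paper unpacks $\d\beta^{\dottimes n}\dottimes\dprod{\d\omega}{\d\omega}$ as $\SymOp\lp\d\beta^{\otimes n}\otimes\tprod{\d\omega}{\d\omega}\rp$, differentiates the unsymmetrised monomial explicitly, and then pushes the partial symmetriser $\SymOp_{\llbracket 2, n+3\rrbracket}$ through $D$ to reassemble the answer term by term. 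You instead observe a priori that every Leibniz term is a permutation of the single monomial $\d\beta^{\otimes(n+1)}\otimes\tprod{\d\omega}{\d\omega}$, and that $D^{n+3}(\beta\omega^2)$ is totally symmetric (the paper's symmetry lemma indeed uses only flatness and torsion-freeness, so it applies to the function $\beta\omega^2$); since the totally symmetric part of the span of permutations of a single monomial is one-dimensional, the result must be a scalar multiple of $\d\beta^{\dottimes(n+1)}\dottimes\dprod{\d\omega}{\d\omega}$, and you read off the scalar from one ordered component. I checked that bookkeeping: with the derivative index occupying the first slot, the term from differentiating $\beta^{-(n-1)}$ indeed cannot contribute to the component with $\tprod{\d\omega}{\d\omega}$ in the first two slots, and the $2(n+1)!$ contributing permutations at order $n+1$ against the $2\,n!$ at order $n$ give exactly the factor $-(n+1)/\beta$, which reproduces $(-1)^{n+1}\beta^{-n}$ together with the factorial normalisation hidden in $\d\beta^{\dottimes(n+1)}$. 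Your version trades the paper's explicit symmetriser computation for a short structural argument plus a coefficient extraction; the only delicate point is the normalisation convention for $\dottimes$ when passing between symmetrised products and ordered components, and you handle it correctly.
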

%
\begin{proof}
	Identity~\eqref{eqno:Ddom} is obtained by the following calculations:
	\[
		D \omega = -\d\lp \frac r \beta \rp = \frac{r \d \beta - \beta \d r}{\beta^2} = -\frac1\beta \d r - \frac{\omega}{\beta} \d \beta
	\]
	and
	\[\begin{aligned}
		D^2 \omega
			&= -\d \lp \frac 1 \beta \rp \otimes \d r - \d \lp \frac \omega\beta \rp \otimes\d \beta\\
			&= -\frac{\d\beta \otimes (\omega \d\beta + \beta \d \omega)}{\beta^2}
			+ \frac{\omega\d\beta - \beta\d\omega}{\beta^2}\otimes \d\beta\\
			&= -\frac1\beta \lp \d \beta \otimes \d\omega + \d \omega \otimes \d \beta \rp
	\end{aligned}
	\]
	
	We have proved Equation~\eqref{eqno:Dnbo} for $n=0$ in Section~\ref{secno:GeoGibbsSet}. We complete the proof by recursion, assuming it holds for some integer $n$.	
	Let us first rewrite the symmetrised products as follows:
	\[
		\d\beta^{\dottimes n} = n!\d\beta^{\otimes n}
	\]
	thus
	\[
		\d\beta^{\dottimes n}\dottimes \dprod{\d\omega}{\d \omega}
		= \SymOp \d \beta^{\otimes n} \otimes \tprod{\d\omega}{\d\omega}
	\]
	
	Let us then compute the following
	\[\begin{aligned}
		D \lp \d \beta^{\otimes n} \otimes \tprod{\d\omega}{\d\omega} \rp
			&= -\frac{2}{\beta} \d \beta \otimes \d \beta^{\otimes n} \otimes \tprod{\d\omega}{\d\omega}
			- \frac{1}{\beta}\tprod{\d\omega}{\d \beta^{\otimes n} \otimes \lp \d \beta \otimes \d \omega + \d \omega \otimes \d \beta \rp}	
	\end{aligned}
	\]
	hence
	\[\begin{multlined}
		D \lp 
			\frac{1}{\beta^{n-1}}
			\d \beta^{\otimes n} \otimes \tprod{\d\omega}{\d\omega} \rp
		= \\
		-\frac{1}{\beta^n} \Big(
			(n+1) \d \beta \otimes \d \beta^{\otimes n} \otimes \tprod{\d\omega}{\d\omega}
			+ \tprod{\d\omega}{\d \beta^{\otimes n} \otimes \lp \d \beta \otimes \d \omega + \d \omega \otimes \d \beta \rp}
			\Big)
	\end{multlined}
	\]
	
	Now, we write $\SymOp_{\llbracket 2, n +3\rrbracket}$ for the symmetrising on the entries numbered from $2$ to $n+3$, so that
	\[\begin{aligned}
		\SymOp_{\llbracket 2, n +3\rrbracket} 
			D \lp 
				\frac{1}{\beta^{n-1}}
				\d \beta^{\otimes n} \otimes \tprod{\d\omega}{\d\omega} \rp
		&=
		D \lp \SymOp \frac{1}{\beta^{n-1}}
				\d \beta^{\otimes n} \otimes \tprod{\d\omega}{\d\omega} \rp\\
		&= D \lp  \frac{1}{\beta^{n-1}}
						\d \beta^{\dottimes n} \dottimes \dprod{\d\omega}{\d\omega} \rp	
	\end{aligned}
	\]
	We obtain
	\[\begin{aligned}
		\SymOp_{\llbracket 2, n +3\rrbracket} 
			D \lp 
				\frac{1}{\beta^{n-1}}
				\d \beta^{\otimes n} \otimes \tprod{\d\omega}{\d\omega} \rp
		&= -\frac{1}{\beta^n}\lp
			(n+1) \d \beta \otimes \lp \d \beta^{\dottimes n} \dottimes  \dprod{\d\omega}{\d\omega} \rp 
			+ 2\dprod{\d\omega}{\d \beta^{\dottimes n+1} \dottimes \d \omega}			
			\rp \\
		&= -\frac{1}{\beta^n}
			\d \beta^{\dottimes n+1} \dottimes  \dprod{\d\omega}{\d\omega}		
	\end{aligned}
\]
	which is enough to prove the recursion and conclude the proof.
\end{proof}

We can now give exact expressions for the higher covariant derivatives of $\zrot$, which are involved in the contribution of $\zrot$ to the Hessian sectional curvature of $\Gibbs$,
as functions of the expected values $\Ebo[\iota]$ and $\Ebo[\hi^k]$.
\begin{proposition}
	\begin{subequations}\label{eqnos:D1234zrot}
	\begin{align}
		D 	\zrot	&= \Ebo[\iota] \d(\beta\omega^2) = I \d(\beta\omega^2) \\
		D^2 \zrot	&= \Ebo[\hi^2] \d(\beta\omega^2)^{\otimes 2} + \Ebo[\iota] \beta \dprod{\d\omega}{\d \omega} \label{eqno:D2zrot}\\
		D^3 \zrot 	&= \Ebo[\hi^3]\d (\beta\omega^2)^{\otimes 3}
				  +	\Ebo[\hi^2] \beta \dprod{\d\omega}{\d \omega} \dottimes \d (\beta\omega^2)
				  - \Ebo[\iota] \d \beta\dottimes \dprod{\d\omega}{\d \omega}
		\\
		\begin{split}
		D^4	\zrot	&= \lp \Ebo[\hi^4] - 3\Ebo[\hi^2]^2 \rp \d (\beta\omega^2)^{\otimes 4}
			  +	\Ebo[\hi^3] \beta
			  	\tprod{\d\omega}{\d \omega}
			  	\dottimes (\d (\beta\omega^2) \otimes \d (\beta\omega^2))
			  \\ &
			  +	\Ebo[\hi^2] \lp
			  		-\d\beta \dottimes \dprod{\d\omega}{\d \omega} \dottimes \d(\beta\omega^2)
			  		+ \frac{\beta^2}2 \dprod{\d\omega}{\d \omega} \dottimes \dprod{\d\omega}{\d \omega}\rp
			  + \Ebo[\iota] \frac{\d\beta\dottimes \d\beta \dottimes \dprod{\d\omega}{\d \omega}}{\beta}
  		\end{split}
	\end{align}
	\end{subequations}
\end{proposition}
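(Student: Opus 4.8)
The plan is to recognize the statement as a direct specialization of the Faà di Bruno machinery of Section~\ref{secno:CurvedExp}. As established in Section~\ref{secno:GibbsDecomp} and recalled at the start of Section~\ref{secno:CurvAsympt}, $\zrot$ is, up to an additive constant, the logarithmic partition function of the rank-$1$ curved exponential family $\rho_*\nu_{\beta,\omega}$, whose scalar natural parameter is $\theta(\xi)=\beta\omega^2$ and whose statistic is $\iota$, since the exponent $\beta m\tfrac12(\omega\cdot q)^2$ factors exactly as $(\beta\omega^2)\,\iota$. Hence $\zrot=\zeta\circ\theta$ with $\theta=\beta\omega^2$, and Proposition~\ref{propno:fn} identifies the cumulants $c_j=\dd^j\zrot/\dd(\beta\omega^2)^j$ with the expectations $\Ebo[f_j]$; explicitly $c_1=\Ebo[\iota]$, $c_2=\Ebo[\hi^2]$, $c_3=\Ebo[\hi^3]$ and $c_4=\Ebo[\hi^4]-3\Ebo[\hi^2]^2$. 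This dictionary turns the statement into the assertion that the generic order-$\le 4$ expansions of Theorem~\ref{thmno:DnzEVfj} reduce to the four displayed identities once the parameter is $\beta\omega^2$.

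With the dictionary in hand, I would substitute into the formulas of Example~\ref{exno:D1234z}, reading $\EV_\xi[f]\to\Ebo[\iota]$ and $\EV_\xi[\hf^k]\to\Ebo[\hi^k]$. The only remaining ingredient is the family of iterated covariant differentials of the parameter $\theta=\beta\omega^2$, and these are exactly what Lemma~\ref{lmno:Dnbo} supplies: $D(\beta\omega^2)=\d(\beta\omega^2)$, $D^2(\beta\omega^2)=\beta\dprod{\d\omega}{\d\omega}$, $D^3(\beta\omega^2)=-\d\beta\dottimes\dprod{\d\omega}{\d\omega}$ and $D^4(\beta\omega^2)=\tfrac1\beta\,\d\beta\dottimes\d\beta\dottimes\dprod{\d\omega}{\d\omega}$. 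Inserting these into the generic $D^n z$ expressions and distributing the symmetrizations produces each claimed formula, the $\Ebo[\iota]$-terms coming from the top block $\EV[f]\,D^n\theta$, the $\Ebo[\hi^2]$-terms from the mixed blocks, and the $\Ebo[\hi^3],\Ebo[\hi^4]$-terms from the lower blocks.

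The routine but delicate part, and the main obstacle, is the bookkeeping of the symmetrization constants. The expansions carry operators $\SymOp/n!$ together with factors $1/(k_i+1)!$, while the inputs $D^k(\beta\omega^2)$ are themselves already totally symmetric tensors of mixed degree (a degree-$2$ block $\dprod{\d\omega}{\d\omega}$ symmetrically glued to copies of $\d\beta$ or $\d(\beta\omega^2)$). Correctly merging these nested symmetrizations and re-expressing the outcome through the unnormalised product $\dottimes$ of Section~\ref{secno:notations} — recalling that $\dprod{\d\omega}{\d\omega}=2\tprod{\d\omega}{\d\omega}$ and that $\alpha\dottimes\alpha=2\,\alpha\otimes\alpha$ — is where a single misplaced factorial would corrupt the numerical coefficients feeding the curvature asymptotics of Section~\ref{secno:CurvAsympt}. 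A clean independent check that pins down every constant is to restrict each identity to the one-dimensional situation (formally treating $\omega$ as a scalar): the four formulas must then collapse to the classical scalar Faà di Bruno / Bell-polynomial expansion of $\zeta^{(n)}$ evaluated against the ordinary derivatives of $\beta\omega^2$, and matching that expansion term by term fixes all normalisations unambiguously.
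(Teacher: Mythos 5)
Your proposal is correct and follows essentially the same route as the paper: the paper's proof is precisely a direct application of the formulas of Example~\ref{exno:D1234z} to the rank-$1$ family $\rho_*\nu_{\beta,\omega}$ with parameter $\theta=\beta\omega^2$, combined with Lemma~\ref{lmno:Dnbo} for $D^k(\beta\omega^2)$, followed by an explicit evaluation of the symmetrisations. Your dictionary of cumulants and your listed values of $D^2$, $D^3$, $D^4$ of $\beta\omega^2$ all match the paper, and your proposed scalar-restriction check is a reasonable supplement to the paper's explicit computation of the four symmetrised terms.
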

\begin{proof}
	This is a direct application of Formula~\eqref{eqnos:D1234z} to $\rho_*\nu_{\beta, \omega}$ using the Formula~\eqref{eqno:Dnbo} for the iterated covariant derivatives of $\beta\omega^2$.
	Recall that the symmetrised term 
	\[
				\frac{1}{j!}\frac{\SymOp}{n!} \lp
					\sum_{\substack{k : \llbracket 1, j\rrbracket \to\setN\\
							\sum k_i = n-j}}
						\binom{n}{k_1+1, \ \dots\ , k_j+1} 
				 D^{k} (\d(\beta\omega^2)^{\otimes j})
						\rp
	\]
	can be understood as the \enquote{naive} symmetrising of $D^{k}(\d(\beta\omega^2)^{\otimes j})$ (see the remark following Lemma~\ref{lmno:iXSymOp}).
	The form of the iterated covariant derivatives up to order 4 are given in Equations~\eqref{eqnos:D1234z}. Let us detail the computation of the symmetrisings:
	\begin{align*}
		\frac{\SymOp}{2} \lp D^2(\beta\omega^2) \otimes D\beta\omega^2 \rp
			&= \frac{\SymOp}{2} \lp \beta \dprod{\d \omega}{\d \omega} \otimes \d (\beta\omega^2) \rp
			= \beta \dprod{\d \omega}{\d \omega} \dottimes \d (\beta\omega^2)\\
		\frac{\SymOp}{8} \lp D^2(\beta\omega^2) \otimes D \beta\omega^2 \otimes D \beta\omega^2 \rp
			&= \frac{\SymOp}{8} \lp \beta \dprod{\d \omega}{\d \omega} \otimes \d (\beta\omega^2) \otimes \d (\beta\omega^2) \rp
			= \beta \tprod{\d \omega}{\d \omega} \dottimes (\d (\beta\omega^2) \otimes \d (\beta\omega^2))\\
		\frac{\SymOp}{6} \lp D^3(\beta\omega^2) \otimes D\beta\omega^2 \rp
			&= \frac{\SymOp}{6} \lp - \d \beta\dottimes \dprod{\d\omega}{\d\omega}\otimes \d (\beta\omega^2) \rp
			= - \d \beta\dottimes \dprod{\d\omega}{\d\omega}\dottimes \d(\beta\omega^2)\\
		\frac{\SymOp}{8} \lp D^2(\beta\omega^2) \otimes D^2\beta\omega^2 \rp 
			&= \frac{\SymOp}{8} \lp \beta \dprod{\d \omega}{\d \omega} \otimes \beta \dprod{\d \omega}{\d \omega} \rp
			= \frac12 \beta^2 \dprod{\d \omega}{\d \omega} \dottimes \dprod{\d \omega}{\d \omega}
	\end{align*}
\end{proof}

This is enough to obtain an expression of the Hessian curvature tensor, which would however require the somewhat cumbersome inversion of the metric.
As we expect, the asymptotics will be far simpler.
We will use the Landau notations $o$ and $O$ in the large $\beta\omega^2$ limit, for real-valued functions and sections of vector bundles with a metric.
\begin{definition}
	Let $E\to\Gibbs$ be a normed vector bundle, $\sigma$ a section of $E$ and $\alpha\in \setR$.
	We write $\sigma \eqbo o((\beta\omega^2)^\alpha)$ if $\frac{\lVert \sigma \rVert}{(\beta\omega^2)^\alpha} \tobo 0$.
	We write $\sigma \eqbo O((\beta\omega^2)^\alpha)$ if $\frac{\lVert \sigma \rVert}{(\beta\omega^2)^\alpha}$ is bounded as $\beta\omega^2 \to \infty$.
\end{definition}

\begin{lemma}\label{lmno:limEbohik}
	For any integer $k\geqslant 1$, 
	\[
		\Ebo[\hi^k] \eqbo O\lp \frac{1}{(\beta\omega^2)^k} \rp
	\]
	More precisely the function $\Ebo[\hi^k]\lp\beta\omega^2 \rp^k$ admits a finite limit as $\beta\omega^2\to \infty$.
\end{lemma}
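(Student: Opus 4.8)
The plan is to reduce $\Ebo[\hi^k]$ to a one-dimensional integral against a \emph{truncated Gamma law} and then rescale. Since $\iota$, and hence $\hi$, depends on $q$ only through the radial coordinate $\rho$, I would first push $\nu_{\beta,\omega}$ forward along $\rho$ and perform the substitution $x=\rho^2$ exactly as in the proof of Theorem~\ref{thmno:GibbsLimit}; this rewrites $\Ebo[\hi^k]$ as an integral against the normalised measure proportional to $\sqrt{R^2-x}\,e^{\lambda x}\,dx$ on $x\in[0,R^2]$, where $\lambda:=\tfrac12 m\beta\omega^2$ and $\lambda\tobo\infty$. Setting $u:=R^2-x\in[0,R^2]$ turns this into the measure proportional to $\sqrt u\,e^{-\lambda u}\,du$, that is, a Gamma law of shape $3/2$ truncated to $[0,R^2]$. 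Because $\iota=\tfrac12 m(R^2-u)$ is affine in $u$, writing $\bar u:=\Ebo[u]$ gives $\hi=-\tfrac12 m(u-\bar u)$, and therefore $\Ebo[\hi^k]=(-m/2)^k\,\Ebo[(u-\bar u)^k]$, so the whole problem reduces to controlling the $k$-th \emph{central} moment of this truncated Gamma law.

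Next I would rescale by setting $v:=\lambda u$. The normalised measure becomes
\[
	\frac{\mathds 1_{[0,\lambda R^2]}(v)\,\sqrt v\,e^{-v}\,dv}{\int_0^{\lambda R^2}\sqrt v\,e^{-v}\,dv},
\]
which converges as $\lambda\to\infty$ to the Gamma law of shape $3/2$ and rate $1$ on $[0,\infty)$, with density $\sqrt v\,e^{-v}/\Gamma(\tfrac32)$. Since $u=v/\lambda$, the central moment rescales cleanly as $\Ebo[(u-\bar u)^k]=\lambda^{-k}\,\EV_v[(v-\bar v)^k]$, where $\bar v:=\lambda\bar u$ and $\EV_v$ denotes expectation in the $v$-variable. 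Multiplying by $(\beta\omega^2)^k$ and using $\lambda=\tfrac12 m\beta\omega^2$, the powers of $m/2$ cancel exactly against $\lambda^{-k}(\beta\omega^2)^k$, leaving
\[
	\Ebo[\hi^k]\,(\beta\omega^2)^k=(-1)^k\,\EV_v[(v-\bar v)^k].
\]
This identity is the heart of the argument: it isolates all the divergent dependence into the prefactor $(\beta\omega^2)^{-k}$ and leaves behind a quantity that tends to a finite constant.

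It then remains to prove that $\EV_v[(v-\bar v)^k]$ converges to the $k$-th central moment $\mu_k$ of the Gamma$(3/2,1)$ law. I would expand $(v-\bar v)^k$ by the binomial theorem, reducing the claim to the convergence of the finitely many truncated raw moments $\int_0^{\lambda R^2} v^{j+1/2}e^{-v}\,dv\to\Gamma(j+\tfrac32)$ and of the normalising integral $\int_0^{\lambda R^2}\sqrt v\,e^{-v}\,dv\to\Gamma(\tfrac32)$; these hold by monotone convergence, the integrands being nonnegative and the domain increasing to $[0,\infty)$. In particular $\bar v\to\Gamma(\tfrac52)/\Gamma(\tfrac32)=\tfrac32$, so each coefficient of the finite binomial sum converges, whence $\EV_v[(v-\bar v)^k]\to\mu_k$ and $\Ebo[\hi^k]\,(\beta\omega^2)^k\tobo(-1)^k\mu_k$. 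A sequence with a finite limit is bounded near infinity, which gives the estimate $\Ebo[\hi^k]\eqbo O((\beta\omega^2)^{-k})$ together with the sharper statement that the rescaled quantity admits a finite limit. I expect the analytic content, namely the convergence of incomplete Gamma integrals to complete ones, to be entirely routine; the one point demanding care is the algebraic bookkeeping that forces the factors of $m/2$ and $\lambda$ to cancel, so that a truncation-independent finite limit genuinely emerges.
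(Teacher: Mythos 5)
Your proof is correct, and it takes a genuinely different route from the paper's. The paper centres $\hi$ at the \emph{limiting} value $\tfrac12 mR^2$ rather than at the running mean: it writes $\hi = \hi_\infty + \bigl(\tfrac12 mR^2 - \Ebo[\iota]\bigr)$ with $\hi_\infty = \iota - \tfrac12 mR^2$ a fixed function, expands $\hi^k$ by the binomial theorem, and obtains separate asymptotic equivalents for $\Ebo[\hi_\infty^p]$ and for $\tfrac12 mR^2 - \Ebo[\iota]$ by applying the Watson-type Lemma~\ref{lmno:LaplaceApprox} to each, then multiplies and sums while tracking the $o$-terms. You instead exploit the exact affine dependence of $\iota$ on $u=R^2-\rho^2$ together with the exact scaling $v=\lambda u$ to produce the identity $\Ebo[\hi^k](\beta\omega^2)^k=(-1)^k\,\EV_v[(v-\bar v)^k]$ with no error terms at all, so the only analysis left is the monotone convergence of the truncated Gamma integrals $\int_0^{\lambda R^2}v^{j+1/2}e^{-v}\,\d v\to\Gamma(j+\tfrac32)$. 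This is arguably cleaner: it avoids the bookkeeping of products of equivalents, and it identifies the limit as $(-1)^k$ times the $k$-th central moment of the Gamma law of shape $\tfrac32$ and rate $1$ --- a form that is consistent with the paper's explicit binomial sum $\sum_{p+q=k}\binom{k}{p}\frac{\Gamma(p+3/2)}{\Gamma(3/2)}(-1)^p(\tfrac32)^q$ and in fact illuminates the generating function $e^{3t/2}(1+t)^{-3/2}$ that reappears in the proof of Theorem~\ref{thmno:fnequiv}, since that is exactly $\EV[e^{-t(V-3/2)}]$ for $V$ following this Gamma law. What the paper's version buys is the explicit constant as a finite sum, reused verbatim in that later computation; what yours buys is an exact, truncation-free identity before any limit is taken, at the cost of one extra observation if you want the constant in closed form.
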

\begin{proof}
	The integral $\Ebo[\iota]$ is invariant under rotation and purely depends on $\beta\omega^2$. 
	In this case, Theorem~\ref{thmno:GibbsLimit} implies that there is a well defined limit as $\beta\omega^2\to\infty$: in fact
	\[
		\Ebo[\iota] \xrightarrow[\beta\omega^2 \to \infty]{} \frac12 m R^2
	\]

	In particular, there is a smooth function $\hi_\infty = \iota - \frac12 m R^2$ such that $\lVert \hi - \hi_\infty \rVert_\infty \xrightarrow[\beta\omega^2 \to \infty]{} 0$.
	We decompose
	\[
		\hi^k
		= \lp \hi_\infty + \frac12 m R^2 - \Ebo[\iota] \rp^k
		= \sum_{p+q = k} \binom{k}{p}\hi_\infty^p \lp \frac12 m R^2 - \Ebo[\iota] \rp^q
	\]
	We first determine an equivalent for $\frac12 m R^2 - \Ebo[\iota]$:
	\[\begin{aligned}
		\int_{B(0,R)} \lp \frac12 m R^2 - \frac12 m \rho^2 \rp e^{\beta m \frac12 \omega^2 \rho^2} \frac{\d^3 q}{\Zrot}
			&= 
		\frac12 m\int_0^R \lp  R^2 - \rho^2 \rp
			e^{\beta m \frac12 \omega^2 \rho^2} 2\pi 2\sqrt{R^2-\rho^2} 
			\frac{\rho \d \rho}{\Zrot}\\
			&= 
		\frac12 m \int_0^{R^2} (R^2-x)^{\frac32}
			e^{\beta m \frac12 \omega^2 \rho^2} 
			\frac{2\pi \d x}{\Zrot}\\
			&\underset{\beta\omega^2 \to \infty}{\sim}
			\frac12 m
			\frac{\lp \frac12 \beta m \omega^2 \rp^{\frac32}}
				{\lp \frac12 \beta m \omega^2 \rp^{\frac52}}
			\frac{\Gamma\lp \frac52 \rp}{\Gamma\lp \frac32 \rp} 
			 = \frac32 \frac{1}{\beta\omega^2}
	\end{aligned}\]
	
	We now determine an equivalent for $\Ebo[\hi_\infty^p]$ with a similar computation:
	\[\begin{aligned}
		\int_{B(0,R)} \lp \frac12 m \rho^2 - \frac12 m R^2\rp^p e^{\beta m \frac12 \omega^2 \rho^2} \frac{\d^3 q}{\Zrot}
			&= 
		\lp -\frac12 m \rp^p
			\int_0^R \lp  R^2 - \rho^2 \rp^p
			e^{\beta m \frac12 \omega^2 \rho^2} 2\pi 2\sqrt{R^2-\rho^2} 
			\frac{\rho \d \rho}{\Zrot}\\
			&= 
		\lp -\frac12 m \rp^p
			\int_0^{R^2} (R^2-x)^{p+\frac12}
			e^{\beta m \frac12 \omega^2 \rho^2} 
			\frac{2\pi \d x}{\Zrot}\\
			&\underset{\beta\omega^2 \to \infty}{\sim}
		\lp -\frac12 m \rp^p
			\frac{\lp \frac12 \beta m \omega^2 \rp^{\frac32}}
				{\lp \frac12 \beta m \omega^2 \rp^{p+\frac32}}
			\frac{\Gamma\lp p +\frac32 \rp}{\Gamma\lp \frac32 \rp} 
			 =  \frac{\Gamma\lp p +\frac32 \rp}{\Gamma\lp \frac32 \rp} 
			 \lp \frac{-1}{\beta\omega^2}\rp^p
	\end{aligned}\]
	with $\frac{\Gamma\lp p +\frac32 \rp}{\Gamma\lp \frac32 \rp} = \prod\limits_{0 \leqslant j \leqslant p-1} (j + \frac32)$.
	By multiplication, we obtain the asymptotics of $\hi_\infty^p \lp \frac12 m R^2 - \Ebo[\iota] \rp^q$ up to $o\lp \frac{1}{(\beta\omega^2)^{p+q}} \rp$.
	The sum gives
	\[\begin{aligned}
		\Ebo[\hi^k]
			&= \sum_{p+q = k} \binom{k}{p}
				\Ebo[\hi_\infty^p]\lp \frac12 m R^2 - \Ebo[\iota] \rp^q\\
			&= \sum_{p+q = k} \binom{k}{p}
				\frac{\Gamma\lp p +\frac32 \rp}{\Gamma\lp \frac32 \rp}
				\lp \frac{-1}{\beta\omega^2} \rp^p
				\lp \frac32 \frac{1}{\beta\omega^2}\rp^q
				+ o\lp \frac{1}{(\beta\omega^2)^k} \rp\\
			&= \lp \sum_{p+q = k} \binom{k}{p}
				\frac{\Gamma\lp p +\frac32 \rp}{\Gamma\lp \frac32 \rp}
				\lp -1 \rp^p \lp \frac32 \rp^q
				\rp \frac{1}{(\beta\omega^2)^k}
				+ o\lp \frac{1}{(\beta\omega^2)^k} \rp\\
	\end{aligned}\]
	which proves the Lemma.
\end{proof}

\begin{remark}
	Note that for $k\geqslant 1$ the limit $\lim_{\beta\omega^2 \to\infty} \Ebo[\hi^k] (\beta\omega^2)^k$ is independent from $R$.
	This can be seen as a consequence of the scaling Equation~\eqref{eqno:muR1R2}: a rescaling of the sphere radius by a factor $\eta$ can be reflected as an extra factor of $\eta^2$ on the parameter $\beta$ in the measure, which contributes a factor $\eta^{-2k}$ to the asymptotic. At the same time, the function $\hi^k$ has a homogeneous behaviour of degree $2k$ under rescaling. As a consequence, the asymptotic at order $(\beta\omega^2)^{-k}$ is independent from $R$.
\end{remark}

Rather than the asymptotic behaviour of $\Ebo[\hi^k]$, we are interested in the asymptotic behaviour of the cumulants $\frac{\dd^k \zrot}{\dd (\beta\omega^2)^k} = \Ebo[f_k]$. They are strongly related to the moments $\Ebo[\hi^k]$ and we can express the ones in terms of the others using formal power series.

Let us work in the algebra of formal power series, which we equip with its \emph{(non-discrete) product topology}: a formal power series converges to a power series if and only if each of its coefficients converges to the coefficient of the same degree of the limit power series.
\begin{proposition}[Logarithm and exponential of a formal power series~\cite{BourbakiAlgebra47}]
	There is a logarithm application defined on formal power series with a constant coefficient of $1$:
	\[
		\log : 
		1 + \sum_{n\geqslant 1} a_n t^n
		\mapsto 
		\lp \sum_{n\geqslant 1} (-1)^{n-1}\frac{t^n}n \rp
			\circ
			\lp \sum_{n\geqslant 1} a_n t^n \rp
	\]
	The application $\log$ is continuous with respect to the product topology on the formal power series.
	It defines an inverse to the following application, defined on formal power series with vanishing coefficient of degree $0$:
	\[
		\exp : 
		\sum_{n\geqslant 1} a_n t^n
		\mapsto 
		\lp \sum_{n\geqslant 1} \frac{t^n}{n!} \rp
			\circ
			\lp \sum_{n\geqslant 1} a_n t^n \rp
	\]
\end{proposition}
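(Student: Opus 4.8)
The plan is to work in the ring $\setR\llbracket t\rrbracket$ of formal power series over $\setR$ (characteristic zero is what makes the coefficients $\tfrac1n$ and $\tfrac1{n!}$ available), writing $\mathfrak{m} = t\,\setR\llbracket t\rrbracket$ for the maximal ideal of series with vanishing constant term. I abbreviate the two universal series $L := \sum_{n\geqslant 1}(-1)^{n-1}\tfrac{t^n}{n}$ and $E := \sum_{n\geqslant 1}\tfrac{t^n}{n!}$, so that by definition $\log(1+h) = L\circ h$ and $\exp(h) = E\circ h$, where $\circ$ denotes substitution.

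First I would record the substitution lemma underlying the very definition of the maps: for any $f\in\setR\llbracket t\rrbracket$ and any $g\in\mathfrak{m}$, the series $f\circ g = \sum_n f_n g^n$ is well defined, because $g^n\in\mathfrak{m}^n$ and hence only the indices $n\leqslant N$ contribute to the coefficient of $t^N$, which is therefore a finite sum (in fact a polynomial in the coefficients $g_1,\dots,g_N$). Applying this with $f=L$ or $f=E$ and $g=h\in\mathfrak{m}$ shows that $\log$ and $\exp$ are well defined and take values in $\mathfrak{m}$. The same finiteness remark gives continuity for the product topology: the coefficient of $t^N$ in $\log(1+h)$ is a polynomial in $h_1,\dots,h_N$, so each output coordinate is a continuous (polynomial) function of finitely many input coordinates; since a map into a product is continuous as soon as every coordinate is, $\log$ is continuous, and the identical argument applies to $\exp$.

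The substantive point is that $\log$ and $\exp$ are mutually inverse, and I would reduce this to the two \emph{universal} identities
\[
	E\circ L = t \qquad\text{and}\qquad L\circ E = t
\]
in $\mathfrak{m}$. Once these hold, associativity of substitution yields, for every $h\in\mathfrak{m}$, that $\exp(\log(1+h)) = E\circ(L\circ h) = (E\circ L)\circ h = h$ and, symmetrically, $\log(1+\exp(h)) = L\circ(E\circ h) = h$; this is exactly the statement that $\log\colon 1+\mathfrak{m}\to\mathfrak{m}$ and $h\mapsto 1+\exp(h)$ are inverse bijections (the constant term $1$ being pure bookkeeping). To prove the universal identities I would argue purely formally with the coefficientwise derivative $'$, which obeys the Leibniz rule and the chain rule $(f\circ g)' = (f'\circ g)\,g'$ for $g\in\mathfrak{m}$, together with the elementary uniqueness fact that a formal series is determined by its constant term and its derivative. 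Writing $\mathrm{Exp} := 1+E$, one has $\mathrm{Exp}' = \mathrm{Exp}$ and $L' = \tfrac1{1+t}$; then $u:=\mathrm{Exp}\circ L$ has constant term $1$ and satisfies $(1+t)u' = u$, the same first-order linear equation and initial value as $1+t$, whose coefficients are pinned down by a one-step recursion, so $u = 1+t$ and thus $E\circ L = t$. Likewise $w := L\circ E$ has $w' = (L'\circ E)\,E' = \tfrac{1}{1+E}(1+E) = 1$ and vanishing constant term, hence $w = t$.

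I expect the inverse property, that is the two universal identities $E\circ L = L\circ E = t$, to be the only real obstacle; well-definedness and continuity are routine consequences of the $\mathfrak{m}$-adic finiteness. An equally valid but less self-contained alternative for the identities is an analytic transfer: $L$ and $E$ have positive radius of convergence and represent $\log(1+x)$ and $e^x-1$ near $0\in\setR$, where $e^{\log(1+x)} = 1+x$ and $\log(e^x)=x$ hold as genuine functions, and two power series agreeing as functions on a neighbourhood of $0$ have identical coefficients. I would favour the formal derivative argument, since it avoids convergence considerations and works verbatim over any characteristic-zero base.
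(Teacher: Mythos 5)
Your proof is correct. Note that the paper does not actually prove this proposition: it is quoted from Bourbaki, and the only justification supplied in the text is the one-line remark that continuity follows because each coefficient of $\log F$ is a polynomial in the coefficients of $F$. Your treatment of well-definedness and continuity is exactly an expansion of that remark (finiteness of the substitution $f\circ g$ for $g\in\mathfrak{m}=t\,\setR\llbracket t\rrbracket$ because $g^n\in\mathfrak{m}^n$, plus the fact that a map into a product is continuous coordinatewise), so there you agree with the paper. Where you go beyond it is the inverse property: reducing to the universal identities $E\circ L=L\circ E=t$ via associativity of substitution, and establishing those by the formal chain rule together with uniqueness for the first-order recursions $(1+t)u'=u$, $u_0=1$ and $w'=1$, $w_0=0$. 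This is a standard and fully correct argument, valid over any $\mathbb{Q}$-algebra, and it makes the statement self-contained where the paper simply defers to the reference; the only cost is that you implicitly rely on associativity of substitution and the formal chain rule, which are themselves routine $\mathfrak{m}$-adic verifications. Your closing observation that the constant term $1$ is bookkeeping (so that $\log$ is inverse to $h\mapsto 1+\exp(h)$ rather than to $\exp$ literally) correctly resolves a small imprecision in the statement itself.
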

The continuity is a direct consequence of the fact that the coefficients of the logarithm of a formal power series $F$ are polynomials in the coefficients of $F$. 
We now formulate the relation between the moments $\Ebo[\hi^k]$ and the cumulants $\Ebo[f_k]$ using power series:
\begin{proposition}\label{propno:Taylor}
	Let $(\beta, \omega)\in \Gibbs$. In the algebra of formal power series $\setR[[t]]$, the following equation holds:
	\[
		\sum_{n\geqslant 0} \frac{\Ebo[\hi^n]}{n!} t^n
		=
		\exp\lp
		\sum_{n\geqslant 2} \frac{\Ebo[f_n]}{n!} t^n		
		\rp
	\]
\end{proposition}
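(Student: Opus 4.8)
The plan is to recognise the asserted equality as the classical moment--cumulant relation for the scalar observable $\iota$, realised through its moment generating function, and to use the boundedness of $\iota$ on $B(0,R)$ to make every step rigorous at the level of convergent power series, whose coefficients then agree as elements of $\setR[[t]]$.

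First I would set $\theta = \beta\omega^2$ and write $Z(\theta) = \int e^{\theta\iota}\nu$ for the partition function of the rank-$1$ family $\rho_*\nu_{\beta,\omega}$, so that $\zrot = \log Z$ up to an additive constant and $\Ebo[g] = Z(\theta)^{-1}\int g\, e^{\theta\iota}\nu$. Since $\rho < R$ on $B(0,R)$, the function $\iota = \frac12 m\rho^2$ is bounded; hence $Z$ extends to an entire function of $\theta$ and may be differentiated under the integral sign, giving $Z^{(n)}(\theta) = \int \iota^n e^{\theta\iota}\nu$ and therefore $\Ebo[\iota^n] = Z^{(n)}(\theta)/Z(\theta)$. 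Taylor expanding the entire function $Z$ about $\theta$ yields the key identity
\[
    \sum_{n\geqslant 0}\frac{\Ebo[\iota^n]}{n!}\,t^n = \frac{Z(\theta+t)}{Z(\theta)},
\]
whose logarithm (the right-hand side having constant term $1$) equals $\zrot(\theta+t) - \zrot(\theta) = \sum_{n\geqslant 1}\frac{c_n}{n!}t^n$, with $c_n = \frac{\dd^n\zrot}{\dd(\beta\omega^2)^n}$ the cumulants.

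Next I would centre the observable. Because $\hi = \iota - \Ebo[\iota] = \iota - c_1$ is bounded, dominated convergence gives $\sum_{n\geqslant 0}\frac{\Ebo[\hi^n]}{n!}t^n = \Ebo[e^{t\hi}] = e^{-c_1 t}\,\Ebo[e^{t\iota}] = e^{-c_1 t}\,\frac{Z(\theta+t)}{Z(\theta)}$. Taking the logarithm once more, the factor $e^{-c_1 t}$ cancels precisely the degree-$1$ term $c_1 t$, so that
\[
    \log\lp \sum_{n\geqslant 0}\frac{\Ebo[\hi^n]}{n!}t^n \rp = \sum_{n\geqslant 2}\frac{c_n}{n!}t^n.
\]
Exponentiating and invoking Proposition~\ref{propno:fn}, which identifies $c_n = \Ebo[f_n]$, produces exactly the claimed equality; the continuity of $\log$ and $\exp$ for the product topology recorded above guarantees that these are mutually inverse operations on the relevant series.

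The only point requiring genuine care is the differentiation under the integral sign and the Taylor expansion of $Z$ about $\theta$; this is precisely where the boundedness of $\iota$ on the open ball is essential, since it is what makes $Z$ entire in $\theta$. Once this is granted, both sides of each displayed equation are honest convergent power series coinciding on a neighbourhood of $t=0$, so their coefficients agree termwise, which is exactly the stated identity in $\setR[[t]]$.
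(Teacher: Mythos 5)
Your proof is correct and follows the same idea the paper relies on: the classical moment--cumulant relation for the centred observable $\hi$, which the paper simply asserts and delegates to references on characteristic functions. You supply the details explicitly (boundedness of $\iota$ making $Z$ entire, the identity $\sum_n \Ebo[\hi^n]t^n/n! = e^{-c_1 t}Z(\theta+t)/Z(\theta)$, and the cancellation of the degree-one term), together with the identification $c_n = \Ebo[f_n]$ from Proposition~\ref{propno:fn}, so the argument is complete.
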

This relation is essentially the formal power series definition of the cumulants, applied to the centralised random variable $\hi = \iota - \Ebo[\iota]$. More detail can be found in~\cite{BrillingerTimeSeries, CombinatoricsCumulants, TensorStats} and general references on characteristic functions.

Since the logarithm of a power series is a continuous mapping, we can deduce the asymptotic behaviour of $\sum_{n\geqslant 2} \frac{\Ebo[f_n]}{n!} t^n$ from that of $\sum_{n\geqslant 0} \frac{\Ebo[\hi^n]}{n!} t^n$.

\begin{theorem}\label{thmno:fnequiv}
	For any integer $n\geqslant 2$, 
	\[
		\Ebo[f_n] \simbo (-1)^n \frac32 \frac{(n-1)!}{(\beta\omega^2)^n}
	\]
\end{theorem}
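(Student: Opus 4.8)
The plan is to transfer the known asymptotics of the moments $\Ebo[\hi^n]$ (Lemma~\ref{lmno:limEbohik}) to the cumulants $\Ebo[f_n]$ through the relation of Proposition~\ref{propno:Taylor}, after a rescaling of the formal variable that exposes the exact decay rate. The naive coefficientwise limit of the moment series $\sum_n \frac{\Ebo[\hi^n]}{n!}t^n$ is the constant series $1$ (since $\Ebo[\hi^n]\to 0$ for $n\geq 2$), whose logarithm is $0$; this recovers only $\Ebo[f_n]\to 0$, which is too weak. To extract the precise order $(\beta\omega^2)^{-n}$ I would substitute $t = (\beta\omega^2)\,u$, so that the coefficients of the moment series become $\frac{\Ebo[\hi^n](\beta\omega^2)^n}{n!}$ and those of the cumulant series become $\frac{\Ebo[f_n](\beta\omega^2)^n}{n!}$.

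First I would invoke Lemma~\ref{lmno:limEbohik}, which not only gives the order but computes the limit $M_n := \lim_{\beta\omega^2\to\infty}\Ebo[\hi^n](\beta\omega^2)^n = \sum_{p+q=n}\binom{n}{p}\frac{\Gamma(p+\frac32)}{\Gamma(\frac32)}(-1)^p\lp\frac32\rp^q$. Thus the rescaled moment series converges, coefficient by coefficient, to the limiting series $\Phi(u) := \sum_{n\geq 0}\frac{M_n}{n!}u^n$, which has constant coefficient $1$. The key computation is to identify $\Phi$ in closed form: since $\frac{M_n}{n!}$ is the Cauchy product of the two exponential generating functions $\sum_p \frac{\Gamma(p+\frac32)}{\Gamma(\frac32)\,p!}(-u)^p$ and $\sum_q\frac{(3/2)^q}{q!}u^q$, the generalised binomial series and the exponential series yield
\[
	\Phi(u) = (1+u)^{-3/2}\,e^{\frac32 u}.
\]

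Next I would apply the continuity of the formal logarithm in the product topology. By Proposition~\ref{propno:Taylor} applied at each $(\beta,\omega)\in\Gibbs$ and then the substitution $t=(\beta\omega^2)u$, the rescaled cumulant series is exactly the logarithm of the rescaled moment series (substitution is a continuous, constant-term-preserving ring homomorphism, so it commutes with $\log$). Since the latter converges coefficientwise to $\Phi$ and $\log$ is continuous, the rescaled cumulant series converges coefficientwise to
\[
	\log\Phi(u) = \tfrac32 u - \tfrac32\log(1+u) = \tfrac32 u + \tfrac32\sum_{n\geq 1}\frac{(-1)^n}{n}u^n.
\]
The coefficient of $u^n$ vanishes for $n=1$ and equals $\frac32\frac{(-1)^n}{n}$ for $n\geq 2$. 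Reading off coefficients gives $\lim_{\beta\omega^2\to\infty}\Ebo[f_n](\beta\omega^2)^n = n!\cdot\frac32\frac{(-1)^n}{n} = (-1)^n\frac32(n-1)!$, which is exactly the claimed equivalent.

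The main obstacle is organising the rescaling–continuity argument cleanly: one must check that substituting $t=(\beta\omega^2)u$ commutes with the formal logarithm, that the rescaled moment series retains constant coefficient $1$ so that $\log$ is defined, and that coefficientwise convergence in the product topology is precisely what yields the asymptotic equivalence $\simbo$ for each fixed $n$. The closed-form identification of $\Phi$ is a pleasant but essentially mechanical manipulation of the generalised binomial series.
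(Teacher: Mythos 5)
Your proposal is correct and follows essentially the same route as the paper: the paper's proof also applies Proposition~\ref{propno:Taylor} with the formal variable rescaled by $\beta\omega^2$, uses Lemma~\ref{lmno:limEbohik} to identify the coefficientwise limit of the rescaled moment series as $e^{\frac32 t}(1+t)^{-3/2}$ via the generalised binomial series, and then invokes continuity of the formal logarithm to read off $\frac32(t-\ln(1+t))$. The only cosmetic difference is that you name the substitution $t=(\beta\omega^2)u$ explicitly, whereas the paper writes the rescaled series directly.
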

\begin{proof}
	According to Proposition~\ref{propno:Taylor}, for any $(\beta, \omega)\in \Gibbs$ the following relation between formal power series holds:
	\[
		\sum_{n\geqslant 2} \frac{\Ebo[f_n]}{n!}(\beta\omega^2)^n t^n		
		= \ln\lp 
		\sum_{n\geqslant 0} \frac{\Ebo[\hi^n]}{n!} (\beta\omega^2)^n t^n
		\rp
	\]
	We now compute the limit of $\sum_{n\geqslant 0} \frac{\Ebo[\hi^n]}{n!} (\beta\omega^2)^n t^n$ in the product topology:
	\[\begin{aligned}
	\lim_{\beta\omega^2\to\infty}
		\sum_{n\geqslant 0} \frac{\Ebo[\hi^n]}{n!} (\beta\omega^2)^n t^n
		&= \sum_{n\geqslant 0} \frac{1}{n!} \sum_{p+q = n} \binom{p+q}{p}
			\frac{\Gamma\lp p+\frac32 \rp}{\Gamma\lp \frac32 \rp} (-1)^p \lp \frac32 \rp^q t^{p+q}\\
		&= \sum_{p,q\geqslant 0} \frac{1}{p!q!}
			\frac{\Gamma\lp p+\frac32 \rp}{\Gamma\lp \frac32 \rp} (-t)^p \lp \frac{3t}2 \rp^q\\
		&= e^{\frac32 t} \sum_{p\geqslant 0}  \frac{\Gamma\lp p + \frac32\rp}{p! \Gamma\lp\frac32\rp} (-t)^p
	\end{aligned}\]
	Considering generalised binomial coefficients (cf. \cite{Feller}) we use the relation
	\[
		(-1)^p\binom{p+\frac12}{p} = \binom{-\frac32}{p}
	\]
	which allows recognising the series
	\[
		\sum_{p\geqslant 0}  \frac{\Gamma\lp p + \frac32\rp}{p! \Gamma\lp\frac32\rp} (-t)^p
		= \sum_{p\geqslant 0} \binom{-\frac32}{p} t^p
		= (1+t)^{-\frac32}
	\]
	We now obtain
	\[\begin{aligned}
	\sum_{n\geqslant 2} \frac{\Ebo[f_n]}{n!}(\beta\omega^2)^n t^n		
		\xrightarrow[\beta\omega^2 \to\infty]{}&
			\ln\lp 
			e^{\frac32 t} (1+t)^{-\frac32}
			\rp\\
		&= \frac32 (t-\ln (1+t))\\
		&= \frac32 \sum_{n\geqslant 2} \frac{(-t)^n}n
	\end{aligned}
	\]
	which gives the result.
\end{proof}
\begin{remark}
	It is possible to obtain the asymptotics for $\sum_{n\geqslant 0} \frac{\Ebo[\hi^n]}{n!} (\beta\omega^2)^n t^n$
	from the limit of $\frac{\Zrot(\beta\omega^2(1+t))}{\Zrot(\beta\omega^2)}e^{-\frac12\beta\omega^2R^2}$ as $\beta\omega^2\to \infty$, which can be calculated using the limits obtained in Section~\ref{secno:highvelocity}. This would however require more careful manipulations in order to relate the asymptotics of the functions with those of their formal Taylor series.
\end{remark}

%

We can now derive the asymptotics for the iterated covariant derivatives of $\zrot$ and therefore of $z$. 
Taking inspiration from Section~\ref{secno:rigidbodyspherical} we will use the coordinates $(u=2\beta^{-1/2}, \omega)$.
\begin{proposition}
	Let us define $I_{\infty} = \lim_{\beta\omega^2 \to \infty} I = m R^2$.
	We consider the coordinate system $(u=2\beta^{-1/2}, \omega)$.
	Then for the metric inverse to $g=D^2 z$ the following asymptotics hold:
	\begin{align*}
		\lVert \d u \rVert &\eqbo \beta^{-1/2} O(1)\\
		\lVert \d \omega \rVert_{\so_3\otimes T^*\Gamma} &\eqbo \beta^{-1/2} O(1)		
	\end{align*}
	and for any integer $n\geqslant 2$, 
	\[
		\beta^{-n/2}D^{n} z \eqbo 
			3 (n-1)! \d u^{\otimes n} 
		+ 	I_\infty \d u^{\dottimes (n-2)} \dottimes \tprod{\d\omega}{\d\omega}
		+ 	\beta^{-n/2}o(1)		
	\]

	In particular, for the iterated covariant differentials of $z$ of order $2$ to $4$ we obtain the following asymptotics:
		\begin{subequations}\label{eqnos:limD234z}
		\begin{align}
			\beta^{-1} D^2 z	&\eqbo
		  3 \d u \otimes \d u
		 + I_{\infty} \tprod{\d\omega}{\d \omega}
		 +\beta^{-1} o(1)\\
		\beta^{-3/2} D^3 z	&\eqbo
			6 \d u^{\otimes 3}
			+ I_{\infty} \d u\dottimes \tprod{\d\omega}{\d \omega}
			+ \beta^{-3/2} o(1) \\
		\beta^{-2}	D^4	z	&\eqbo
			18 \d u^{\otimes 4}
			+ I_{\infty} \d u\dottimes\d u \dottimes \tprod{\d\omega}{\d \omega}
			+\beta^{-2} o(1)
		\end{align}
		\end{subequations}
\end{proposition}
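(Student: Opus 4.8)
The plan is to split $z = \zint + \zrot$ and treat the two summands separately, since $\zint$ depends only on $\beta$ while $\zrot$ is governed by the machinery of Section~\ref{secno:CurvedExp}. The contribution of $\zint$ is exact and needs no limit: by Lemma~\ref{lmno:D234zint} one has $\beta^{-n/2}D^n\zint = \tfrac32(n-1)!\,\d u^{\otimes n}$ for every $n$. For $\zrot$ I would apply the cumulant form of Faà di Bruno (Theorem~\ref{thmno:DnzEVfj}) with $\theta = \beta\omega^2$ and $f = \iota$, then substitute the exact covariant derivatives of $\beta\omega^2$ from Lemma~\ref{lmno:Dnbo} together with the elementary identities $\d\beta = -\beta^{3/2}\d u$ and $\d(\beta\omega^2) = -\omega^2\beta^{3/2}\d u + 2\beta\sprod{\omega}{\d\omega}$. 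This turns $D^n\zrot$ into a finite sum of monomials in the three building blocks $\d u$, $\sprod{\omega}{\d\omega}$ and $\tprod{\d\omega}{\d\omega}$, each carrying an explicit power of $\beta$ and of $\omega$ and an expectation $\Ebo[f_j]$ as coefficient.

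The heart of the argument is a power count. Writing $s := \beta\omega^2$, I would record that each factor $D^{m}(\beta\omega^2)$ with $m\geq 2$ contributes, up to a constant, $\beta^{m/2}\,\d u^{\otimes(m-2)}\dottimes\tprod{\d\omega}{\d\omega}$ with no power of $\omega$, while the degree-one factor $\d(\beta\omega^2)$ offers either a $\d u$ piece weighted by $\omega^2\beta^{3/2}$ or a $\sprod{\omega}{\d\omega}$ piece weighted by $\beta$ and one power of $\lVert\omega\rVert$. Classifying a monomial by the number $b$ of $\sprod{\omega}{\d\omega}$ factors and the number $c$ of $\tprod{\d\omega}{\d\omega}$ factors, and using $\Ebo[f_j]\simbo (-1)^j\tfrac32(j-1)!\,s^{-j}$ (Theorem~\ref{thmno:fnequiv}) for the summands with $j\geq 2$, the bookkeeping of exponents shows that such a monomial enters $\beta^{-n/2}D^n\zrot$ with a coefficient of order $\beta^{-b/2-c}\omega^{-b-2c} = s^{-(b/2+c)}$. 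Since $b,c\geq 0$, every monomial with $(b,c)\neq(0,0)$ is $\simbo o(1)$, and the only surviving $j\geq 2$ contribution is the pure $\d u^{\otimes n}$ term (forced by $b=c=0$, hence $j=n$), whose coefficient tends to $\tfrac32(n-1)!$; adding the $\zint$ contribution yields $3(n-1)!$.

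The mixed term is isolated by the $j=1$ summand $\Ebo[\iota]\,D^n(\beta\omega^2)$, which escapes the decay above precisely because $\Ebo[\iota]=\Ebo[f_1]$ tends to the finite constant $\tfrac12 I_\infty$ (Theorem~\ref{thmno:GibbsLimit}, sharpened by Lemma~\ref{lmno:limEbohik}) rather than to $0$; the naive count $s^{-(b/2+c)}=s^{-1}$ thus gains one power of $s$ and lands at order $s^0$. By Lemma~\ref{lmno:Dnbo} the factor $D^n(\beta\omega^2)$ is the single monomial $2\beta^{n/2}\,\d u^{\dottimes(n-2)}\dottimes\tprod{\d\omega}{\d\omega}$, so after multiplication by $\beta^{-n/2}$ this summand converges to $I_\infty\,\d u^{\dottimes(n-2)}\dottimes\tprod{\d\omega}{\d\omega}$, as claimed. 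Specialising to $n=2,3,4$ then produces Equations~\eqref{eqnos:limD234z}.

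Finally, the norm statements for $\d u$ and $\d\omega$ follow from the $n=2$ case applied to $g=D^2z$ itself: it shows $g\simbo\beta\,g_0$ with $g_0 = 3\,\d u^{\otimes 2} + I_\infty\tprod{\d\omega}{\d\omega}$ a fixed positive-definite form in the coordinates $(u,\omega)$, whence the inverse metric is $\simbo\beta^{-1}g_0^{-1}$ and $\lVert\d u\rVert,\ \lVert\d\omega\rVert \eqbo \beta^{-1/2}O(1)$; these are exactly the norms that give the error $\beta^{-n/2}o(1)$ its meaning, each discarded monomial having norm $\beta^{-n/2}$ times an $o(1)$ coefficient. I expect the main obstacle to be organising this power count uniformly in $n$ across the whole Faà di Bruno expansion, and in particular verifying that the two surviving monomials come out with precisely the constants $3(n-1)!$ and $I_\infty$ once all the combinatorial factors (the $\SymOp$ operator, the multinomial coefficients, and the $\dottimes$-normalisation) are accounted for; everything else is a routine, if lengthy, check.
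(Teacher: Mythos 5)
Your proposal is correct and follows essentially the same route as the paper: the same splitting $z=\zint+\zrot$, the same ingredients (Lemma~\ref{lmno:D234zint}, Theorem~\ref{thmno:DnzEVfj}, Lemma~\ref{lmno:Dnbo}, Theorem~\ref{thmno:fnequiv}), and the same power count isolating the $j=1$ and $j=n$ summands, your $(b,c)$ bookkeeping being a more systematic version of the paper's term-by-term estimates. The only point to watch is ordering: the $n=2$ case, and hence the norm asymptotics for $\d u$ and $\d\omega$, must be established first (directly, as convergence of coefficients in the fixed coordinates $(u,\omega)$), since those norms are what give meaning to the $O(1)$ and $o(1)$ estimates used in the power count for general $n$ --- which is the order the paper adopts.
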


\begin{proof}
	Let us start with a few preliminary computations:
	\[
		\d u = 2\d \beta^{-1/2} = - \beta^{-3/2}\d \beta
	\]
	and
	\begin{equation}\label{eqno:dbolim}
	\begin{split}
		\d(\beta\omega^2)
			&= \omega^2 \d \beta + 2 \beta \sprod \omega {\d\omega}\\
			&= -\beta^{3/2}\omega^2 \d u + 2\beta \sprod \omega {\d\omega}\\
			&= \beta^{1/2}\lp - \beta\omega^2\d u + 2 \sprod {\beta^{1/2}\omega} {\d\omega}  \rp\\
			&\eqbo -\beta^{1/2}\lp \beta\omega^2 \d u + O\lp\lp \beta\omega^2\rp^{1/2} \rp \d\omega \rp
	\end{split}\end{equation}
	We will use Formula~\eqref{eqno:Dnzint} for $D^n \zint$:
	\begin{align*}
		D^n \zint &= \frac32 (n-1)! \beta^{n/2} \d u^{\otimes n}
	\end{align*}
	Finally, according to Theorem~\ref{thmno:fnequiv}, for $k\geqslant 2$ 
	\[
		\Ebo[f_k] \eqbo
		(-1)^k\frac32\frac{(n-1)!}{(\beta\omega^2)^k} + o\lp \frac{1}{(\beta\omega^2)^k} \rp
	\]
	We now use the Formula~\eqref{eqno:D2zrot} expressing $D^2\zrot$. We obtain immediately
	\[
		\beta^{-1} D^2 z
			\eqbo
		 \lp \frac32 + \frac32 + o(1)\rp \d u \otimes \d u
		 + \frac12 (I_{\infty}+o(1)) \dprod{\d \omega}{\d \omega}
		 + o(1) \d u\dottimes \d \omega
	\]
	which proves in particular that 
	\begin{subequations}\label{eqnos:limdudo}
	\begin{align}
		\beta \lVert \d u \rVert^2 &\tobo \frac{1}{3}\\
		\beta \lVert \d \omega \rVert^2_{\so_3\otimes T^*\Gamma} &\tobo \frac6{I_{\infty}}
	\end{align}
	\end{subequations}
	and we conclude that $\d u \otimes \d u$, $\d u\dottimes \d \omega$ and $\d \omega \otimes \d\omega$ are of class $\beta^{-1}O(1)$ as $\beta\omega^2 \to \infty$.
	Formulas~\eqref{eqnos:limdudo} along with Formula~\eqref{eqno:dbolim}, imply the following asymptotic:
	\[
		\d (\beta\omega^2)^{\otimes n} 
		\eqbo (-1)^n (\beta\omega^2)^n \beta^{n/2}\d u^{\otimes n}
			+ O \lp (\beta\omega^2)^{n-1/2} \rp 
	\]
	
	Let us now consider the expression of $D^{n} \zrot$ for $n\geqslant 1$ given by Theorem~\ref{thmno:DnzEVfj}:
	\[
		D^{n} \zrot	
		= \sum_{j=1}^{n}
			\frac{1}{j!}\EV_\xi[f_j]
			\frac{\SymOp}{n!} \lp
				\sum_{\substack{k : \llbracket 1, j\rrbracket \to\setN\\
					\sum k_i = n-j}}
					\binom{n}{k_1+1, \ \dots\ , k_j+1} 
	 D^{k} (\d\theta^{\otimes j})
				\rp
	\]
	Let us estimate the norm of $D^{l+1}(\beta\omega^2)$.
	According to Lemma~\ref{lmno:Dnbo}, we know that if $l\geqslant 1$, 
	\[
		D^{l+1}(\beta\omega^2)
		= \frac{(-1)^{l-1}}{\beta^{l-2}} \d \beta^{\dottimes l-2}\dottimes \dprod{\d\omega}{\d\omega}
		=
		\beta^{\frac{l+1}{2}} \d u^{\dottimes l-1}\dprod{\d\omega}{\d\omega}
		\eqbo O(1)
	\]
	For $j\in \llbracket 1, n-1 \rrbracket$ and $k$ a $j$-partition of $n-j$, $k$ contains at most $j-1$ zeroes, thus
	\[
						\sum_{\substack{k : \llbracket 1, j\rrbracket \to\setN\\
							\sum k_i = n-j}}
							\binom{n}{k_1+1, \ \dots\ , k_j+1} 
			 D^{k} (\d\theta^{\otimes j})
		\eqbo O(1) \d(\beta\omega^2)^{\otimes(j-1)}
		\eqbo O\lp (\beta\omega^2)^{j-1} \rp
	\]
	If furthermore $j>1$ then $\EV_\xi[f_j] = O \lp \frac{1}{(\beta\omega^2)^j} \rp$ thus
	\[
			\EV_\xi[f_j]
					\frac{\SymOp}{n!} \lp
						\sum_{\substack{k : \llbracket 1, j\rrbracket \to\setN\\
							\sum k_i = n-j}}
							\binom{n}{k_1+1, \ \dots\ , k_j+1} 
			 D^{k} (\d\theta^{\otimes j})
						\rp
			= O \lp \frac{1}{\beta\omega^2} \rp
	\]
	The term with $j=1$ is
	\[\begin{aligned}
		\EV_\xi[\iota] D^{n}(\beta\omega^2)
		&= \EV_\xi[\iota] (-1)^n \beta^{-(n-3)} \d \beta^{\dottimes (n-2)} \dottimes \dprod{\d\omega}{\d\omega}\\
		&= \beta^{n/2} \EV_\xi[\iota] \d u^{\dottimes (n-2)} \dottimes \dprod{\d\omega}{\d\omega}\\
		&\eqbo \beta^{n/2} \frac{I_\infty}2 \d u^{\dottimes (n-2)} \dottimes \dprod{\d\omega}{\d\omega}
		+ o(1)
	\end{aligned}
	\]
	The term with $j=n$ is
	\[\begin{aligned}
		\EV_\xi[f_n] \d (\beta\omega^2)^{\otimes n}
		&\eqbo
			(-1)^n \frac32 (n-1)!\lp \frac{1}{(\beta\omega^2)^n} + o\lp \frac{1}{(\beta\omega^2)^n} \rp\rp
			\lp (-1)^n (\beta\omega^2)^n \beta^{n/2}\d u^{\otimes n}
			+ o((\beta\omega^2)^n)\rp\\
		&\eqbo
			\frac32 (n-1)! \beta^{n/2} \d u^{\otimes n} + o(1)				
	\end{aligned}
			\]
	We obtain
	\[
	 D^n \zrot
	 \eqbo
	 \beta^{n/2}\lp  I_\infty \d u^{\dottimes (n-2)} \dottimes \tprod{\d\omega}{\d\omega}
	 + \frac32 (n-1)! \d u^{\otimes n} \rp
	 +  o(1)
	\]
	and finally
	\[
		\beta^{-n/2}D^n z \eqbo
		I_\infty \d u^{\dottimes (n-2)} \dottimes \tprod{\d\omega}{\d\omega}
			 + 3 (n-1)! \d u^{\otimes n} + \beta^{-n/2} o(1)			
	\]
	
%
%
\end{proof}

Note in particular the asymptotic contribution $\beta^{n/2} (-1)^n(n-1)!\d u^{\otimes n}$ to $D^n z$ due to $D^n \zrot$: it is identical to the exact contribution from $D^n \zint$. This fact can be interpreted as a manifestation of the principle of equirepartition of energy in the limit $\beta\omega^2\to \infty$.
Notably, the spherical confinement adds a contribution of $\frac32$ to the heat capacity of the gas at high velocity. Since $\Ebo[\hi^2] \simbo \frac32\frac1{(\beta\omega^2)^2}$, this factor $\frac32$ quantifies the standard deviation in the position of the particles at high velocity equilibrium.

We recognise in Formula~\eqref{eqnos:limD234z} the same form as the covariant differentials of the metric for the rigid body we derived in Section~\ref{secno:rigidbodyspherical}, with a heat capacity $C=3$ and a spherical inertia tensor of factor $I_\infty$.
In this sense, one can say that in the limit $\beta\omega^2\to\infty$, the rotating gas is \enquote{asymptotically rigid}.
Since we have done the algebraic calculations for the Hessian curvature in Section~\ref{secno:rigidbodyspherical}, we can simply use the result to obtain the limit of the Hessian curvature for the rotating perfect gas:
\begin{theorem}
	The Hessian curvature tensor $K$ of $\Gibbs$ has the following asymptotic behaviour:
	\begin{multline}
		\beta^{-2} K
		\eqbo
		30 \d u^{\otimes 4}
			- \frac12 \frac{I^2_{\infty}}3 \d \omega_i \otimes \d\omega_j\otimes \d\omega^i\otimes \d\omega^j\\
			+ \frac12 I_{\infty} \Big(
				\d u \otimes \d u \otimes \tprod{\d \omega}{\d \omega}
				+ \tprod{\d \omega\otimes \d u}{\d u \otimes \d \omega}\\
				+ \d u\otimes \tprod{\d\omega}{\d\omega} \otimes \d u
				+ \tprod{\d\omega}{\d\omega}\otimes \d u \otimes \d u
			\Big)
		+ \beta^{-2} o(1)
	\end{multline}
	with indices in the second term to indicate non-consecutive contractions (cf. Section~\ref{secno:notations})
\end{theorem}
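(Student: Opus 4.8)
The plan is to avoid recomputing the curvature from scratch and instead transport the algebraic computation already carried out for the spherical rigid body. By Proposition~\ref{propno:HessCurv}, $K = \frac12\lp D^2 g - g^{ef}(Dg)_{ace}(Dg)_{bdf} \rp$ is a fixed \emph{universal} expression in the three tensors $g = D^2 z$, $Dg = D^3 z$ and $D^2 g = D^4 z$ together with the inverse metric $g^{ef}$; it is a rational map $\Phi(g, Dg, D^2 g)$ defined and smooth wherever $g$ is nondegenerate. The entire problem therefore reduces to inserting the asymptotics of $D^2 z$, $D^3 z$ and $D^4 z$ recorded in Formulas~\eqref{eqnos:limD234z}.

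First I would observe that these asymptotics are precisely the leading terms of the tensors $g$, $Dg$, $D^2 g$ of the spherically symmetric rigid body of Section~\ref{secno:rigidbodyspherical}, specialised to $C = 3$ and $I = I_\infty = mR^2$; this is exactly the content of the remark that the gas is \enquote{asymptotically rigid}. Comparing \eqref{eqnos:limD234z} with the rigid-body expressions (using $\frac{4}{u^2} = \beta$, $\frac{8}{u^3} = \beta^{3/2}$ and $\frac{16}{u^4} = \beta^2$) shows that the leading, $\beta^{-n/2}$-normalised parts of $D^2 z$, $D^3 z$, $D^4 z$ coincide with those of the rigid-body $g$, $Dg$, $D^2 g$ at $C = 3$, $I = I_\infty$, up to remainders vanishing relative to the leading size. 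The limiting metric $\beta\lp 3\,\d u^{\otimes 2} + I_\infty \tprod{\d\omega}{\d\omega} \rp$ is nondegenerate because $I_\infty = mR^2 > 0$, so $g^{ef}$ itself has a controlled leading asymptotic, namely $\beta^{-1}$ times the inverse of the rigid-body leading metric.

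The substantive step is the propagation of the error terms through $\Phi$. Since $\Phi$ is multilinear in the slots $(g^{ef}, Dg, Dg)$ and linear in $D^2 g$, and since each of $g$, $Dg$, $D^2 g$ equals its rigid-body leading term plus a relatively vanishing correction, I would argue that these corrections feed only into the $\beta^{-2}o(1)$ remainder of $\beta^{-2}K$; the nondegeneracy of the limiting metric is exactly what prevents the inversion $g \mapsto g^{ef}$ from amplifying them. It follows that the leading part of $\beta^{-2}K$ is obtained by evaluating $\Phi$ on the rigid-body leading terms, which is nothing but the rigid body's $\frac{u^4}{16}K = \beta^{-2}K$ computed in Section~\ref{secno:rigidbodyspherical}, taken at $C = 3$ and $I = I_\infty$.

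Finally I would read off that boxed formula and perform the substitution $C = 3$, $I = I_\infty$: the coefficient $10C$ becomes $30$, $\frac12\frac{I^2}{C}$ becomes $\frac12\frac{I_\infty^2}{3}$, and each $\frac12 I$ in the four mixed terms becomes $\frac12 I_\infty$, reproducing the claimed expression modulo the $\beta^{-2}o(1)$ remainder. I expect the main obstacle to be making the error estimate of the third paragraph quantitatively airtight — in particular checking that the relative $o(1)$ errors in $Dg$ and in $g^{ef}$ survive the contraction $g^{ef}(Dg)_{ace}(Dg)_{bdf}$ without degrading the order — since no genuinely new algebra is required: the structural form of $K$ is entirely inherited from the rigid-body calculation.
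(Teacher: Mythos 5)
Your proposal is correct and follows exactly the paper's route: the paper likewise observes that the asymptotics \eqref{eqnos:limD234z} coincide with the rigid-body tensors $g$, $Dg$, $D^2g$ at $C=3$, $I=I_\infty$, and then simply reads off the rigid-body Hessian curvature from Section~\ref{secno:rigidbodyspherical} with that substitution. Your third paragraph merely makes explicit the error-propagation argument (multilinearity of the curvature expression plus nondegeneracy of the limiting metric, with norms measured by $g$ itself so that the relative $o(1)$ remainders survive the contraction with $g^{ef}$) that the paper leaves implicit.
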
%
We directly obtain the Riemannian curvature tensor according to Proposition~\ref{propno:HessCurv}
\begin{theorem}
	The totally covariant Riemannian curvature tensor $\Riem$ of $g$ on $\Gibbs$ has the following behaviour:
	\begin{multline}
	\beta^{-2} \Riem
	\eqbo
		-\frac{1}{12}I^2_{\infty} \lp \d\omega_i \otimes \d \omega_j \otimes \d \omega^i \otimes \d\omega^j
			- \d\omega_j\otimes \d \omega_i \otimes \d\omega^i\otimes\d\omega^j \rp\\
		+ \frac{1}{4}I_{\infty}
			\big( \tprod{\d \omega\otimes \d u}{\d u \otimes \d \omega}
				- \d u \otimes \tprod{\d \omega}{\d u \otimes \d \omega}\\
				+ \d u\otimes \tprod{\d\omega}{\d\omega} \otimes \d u
				- \tprod{\d\omega\otimes\d u}{\d\omega} \otimes \d u
			\big)
		+ \beta^{-2} o(1)
	\end{multline}
	In particular,
	\[
		\Riem \eqbo -\frac{1}{12} \frac{g \owedge g}2 + o(1)
	\]
	with $\owedge$ denoting the Kulkarni-Nomizu product (more detail on the Kulkarni-Nomizu product can be found in~\cite{Besse, PetersenEdition3}).
\end{theorem}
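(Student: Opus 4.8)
The plan is to obtain $\Riem$ from the Hessian curvature $K$ by the purely algebraic identity of Proposition~\ref{propno:HessCurv}, namely $R_{abcd} = \frac12\lp K_{abcd} - K_{bacd} \rp$, feeding in the asymptotic expression for $\beta^{-2}K$ established in the preceding theorem. The antisymmetrisation $T\mapsto \frac12\lp T_{abcd} - T_{bacd}\rp$ in the first two slots is a fixed bounded linear operator on sections of $T^*\Gibbs^{\otimes 4}$, so it maps the remainder $\beta^{-2}o(1)$ to a term of the same type; it therefore suffices to antisymmetrise the explicit leading tensor and read off the result.

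First I would antisymmetrise each of the five terms of $\beta^{-2}K$ in its first two tensor slots. Three of them are symmetric under $a\leftrightarrow b$ and hence drop out: the term $30\,\d u^{\otimes 4}$ and the term $\frac12 I_\infty\, \d u\otimes \d u\otimes \tprod{\d\omega}{\d\omega}$ both carry $\d u$ in slots $a$ and $b$, while $\frac12 I_\infty\,\tprod{\d\omega}{\d\omega}\otimes \d u\otimes \d u$ is symmetric in $a,b$ because $\sum_i \d\omega_i\otimes \d\omega^i$ is. The surviving three terms, the quartic $-\frac16 I_\infty^2\,\d\omega_i\otimes \d\omega_j\otimes \d\omega^i\otimes \d\omega^j$ and the two mixed terms built from $\tprod{\d\omega\otimes \d u}{\d u\otimes \d\omega}$ and $\d u\otimes \tprod{\d\omega}{\d\omega}\otimes \d u$, each become a difference of two tensors, producing exactly the quartic and mixed contributions displayed in the statement: the coefficient $-\frac16 I_\infty^2$ is halved to $-\frac{1}{12}I_\infty^2$ and the coefficient $\frac12 I_\infty$ of each mixed term is halved to $\frac14 I_\infty$. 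The only care needed is to track slot by slot which factor carries $\d u$ and which carries the contracted $\so_3$-index, and to use relabellings such as $\d\omega_i\otimes \d\omega_j\otimes \d\omega^j\otimes \d\omega^i = \d\omega_j\otimes \d\omega_i\otimes \d\omega^i\otimes \d\omega^j$.

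For the closed form I would set $g_\infty := 3\,\d u\otimes \d u + I_\infty\tprod{\d\omega}{\d\omega}$, which is $\lim_{\beta\omega^2\to\infty}\beta^{-1}g$ by Formula~\eqref{eqnos:limD234z}, and recognise the leading tensor just computed as $-\frac{1}{12}\cdot \frac{g_\infty\owedge g_\infty}{2}$. Writing $g_\infty = P + Q$ with $P = 3\,\d u^{\otimes 2}$ and $Q = I_\infty\tprod{\d\omega}{\d\omega}$, bilinearity of the Kulkarni-Nomizu product gives $g_\infty\owedge g_\infty = P\owedge P + 2\,P\owedge Q + Q\owedge Q$; here $P\owedge P = 0$ since $P$ has rank one, $\frac12 Q\owedge Q$ reproduces the quartic $\d\omega$-term, and $P\owedge Q$ reproduces the four mixed terms with the correct coefficient. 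Finally, since $g\eqbo\beta g_\infty$, one has $g\owedge g\eqbo\beta^2\,g_\infty\owedge g_\infty$ up to lower order, which upgrades the identity $\beta^{-2}\Riem\eqbo -\frac{1}{12}\frac{g_\infty\owedge g_\infty}{2} + o(1)$ into the coordinate-free statement $\Riem\eqbo -\frac{1}{12}\frac{g\owedge g}{2}$ modulo subleading terms.

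The main obstacle is not conceptual but combinatorial: the bookkeeping of the index contractions through the antisymmetrisation, and in particular checking that the ratio between the quartic $\d\omega$-contribution and the mixed contributions is precisely the one forced by a single Kulkarni-Nomizu square of $g_\infty$. It is this matching that lets the same constant $-\frac{1}{12}$ appear on every term, so that $-\frac{1}{12}\frac{g\owedge g}{2}$ emerges as a genuine constant-sectional-curvature form rather than a combination with two unrelated coefficients.
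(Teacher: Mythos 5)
Your proposal is correct and follows exactly the route the paper takes: the paper derives this theorem in one line by applying the antisymmetrisation $R_{abcd}=\frac12\lp K_{abcd}-K_{bacd}\rp$ of Proposition~\ref{propno:HessCurv} to the asymptotic Hessian curvature, and your slot-by-slot bookkeeping (the three $a\leftrightarrow b$-symmetric terms dropping out, the halving of the coefficients $-\frac16 I_\infty^2$ and $\frac12 I_\infty$) together with the decomposition $g_\infty\owedge g_\infty = 2\,P\owedge Q + Q\owedge Q$ with $P\owedge P=0$ is precisely the verification the paper leaves implicit.
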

We have proved that the asymptotic at dominant order of the Riemannian curvature tensor as $\beta\omega^2\to \infty$ corresponds to the constant sectional curvature Riemannian tensor of a hyperbolic space with sectional curvature $-\frac{1}{12}$.
Our result for the Hessian geometry is more precise: the asymptotic at dominant order of the Hessian curvature tensor corresponds to the Hessian curvature tensor of a rigid body with an inertia factor of $\frac12 m R^2$ and a heat capacity of $3$.

\printbibliography
\end{document}